\def \inAppendix{}
\newcommand{\ShortLongVersion}[2]{#2}
\newcommand{\putInAppendix}[2]{#2}
\newcommand{\optproof}[2]{\proof{#2}}
\newcommand{\astore}{{\frak s}}
\newcommand{\aheap}{{\frak h}}
\newcommand{\asid}{{\cal R}}
\newcommand{\rsid}{{\widehat{\cal R}}}
\newcommand{\isdef}{\stackrel{\scriptscriptstyle{\mathsf{def}}}{=}}
\newcommand{\interv}[2]{\llbracket #1,#2 \rrbracket}
\newcommand{\len}[1]{|#1|}
\newcommand{\size}[1]{|#1|}
\newcommand{\width}[1]{\mathrm{w}(#1)}
\newcommand{\lang}[1]{\mathcal{L}(#1)}
\newcommand{\nat}{\mathbb{N}}
\newcommand{\rank}{\kappa}
\newcommand{\id}{{\mathit id}}
\newcommand{\dom}[1]{\mathrm{dom}(#1)}
\newcommand{\img}[1]{\mathrm{rng}(#1)}
\newcommand{\imgh}[1]{\mathrm{ref}(#1)}
\newcommand{\locs}[1]{\mathrm{loc}(#1)}
\newcommand{\iseq}{\approx}
\newcommand{\vars}{\mathsf{V}}
\newcommand{\fv}[1]{\mathrm{fv}(#1)}
\newcommand{\card}[1]{|#1|}
\newcommand{\apl}[2]{#1(#2)}
\newcommand{\astorep}{\astore_e}
\spnewtheorem{assumption}[theorem]{Assumption}{\bfseries}{\itshape}}{
\newtheorem{theorem}{Theorem}
\newtheorem{definition}[theorem]{Definition}
\newtheorem{lemma}[theorem]{Lemma}
\newtheorem{proposition}[theorem]{Proposition}
\newtheorem{assumption}[theorem]{Assumption}
\newtheorem{example}[theorem]{Example}
\newcommand{\pb}[3]{#1 \vdash_{#3} #2}
\newcommand{\aprob}{\mathfrak{P}}
\newcommand{\dunion}{\uplus}
\newcommand{\inv}[1]{{#1}^{-1}}
\newcommand{\genbod}{\pi}
\renewcommand{\Asterisk}{\scalebox{1.8}{\text{$*$}}}
\newcommand{\twoexptime}{$\mathsf{2\text{EXPTIME}}$}
\newcommand{\vargf}{{\vars}_{\fvargs}}
\newcommand{\vargs}[1]{\vargf(#1)}
\newcommand{\rpositional}{$\asid$-positional\xspace}
\newcommand{\trunc}[1]{\mathrm{trunc}(#1)}
\newcommand{\decor}[1]{D(#1)}
\begin{document}


\title{Unifying Decidable Entailments in Separation Logic with Inductive Definitions}

\ShortLongVersion{
\author{Mnacho Echenim\inst{1}, Radu Iosif\inst{2} and Nicolas Peltier\inst{1}}

\institute{Univ. Grenoble Alpes, CNRS, LIG, F-38000 Grenoble France
\and
Univ. Grenoble Alpes, CNRS, VERIMAG, F-38000 Grenoble France}
\authorrunning{M. Echenim et al.}
}
{
\author{Mnacho Echenim \and Radu Iosif \and Nicolas Peltier\\ 
Univ. Grenoble Alpes, CNRS, LIG/Verimag, F-38000 Grenoble France
}

}

\maketitle

\newcommand{\fvprofile}{fv-profile\xspace}
\newcommand{\fvargsof}[2]{\fvargs^{#1}_{#2}}

\newcommand{\safe}{safe\xspace}
\newcommand{\fvargs}{\lambda}
\newcommand{\dependson}[1]{\succeq_{#1}}
\newcommand{\arity}[1]{\mathit{ar}(#1)}
\newcommand{\emp}{\mathit{emp}}

\newcommand{\Loc}{{\cal L}}
\newcommand{\bots}{\Bot}
\newcommand{\preds}[1]{{\cal P}(#1)}
\newcommand{\allpreds}{\mathsf{P}}

\newcommand{\bigast}{\scaleobj{1.5}{\ast}}

\newcommand{\set}[1]{\left\{#1\right\}}
 \newcommand{\setof}[2]{\left\{#1\,\middle|\,#2\right\}}
 \newcommand{\tuple}[1]{\langle#1\rangle}
 
 \newcommand{\mvec}[1]{\pmb{#1}}

\newcommand{\progressing}{$\lambda$-progressing\xspace}
\newcommand{\connected}{$\lambda$-connected\xspace}
\newcommand{\C}{$\lambda$-C\xspace}
\newcommand{\restricted}{$\lambda$-restricted\xspace}
\newcommand{\R}{$\lambda$-R\xspace}
\newcommand{\vrestricted}{$\fvar$-restricted\xspace}
\newcommand{\genrest}{\rho}

\newcommand{\subst}[2]{\set{ (#1,#2) }}
\newcommand{\substmult}[4]{ \set{ \tuple{#1,#2},\dots,\tuple{#3,#4} } }
\newcommand{\substinterv}[4]{ \set{ \tuple{#1,#2} \mid #3 \in #4 }}
\newcommand{\bigO}{\mathcal{O}}

\newcommand{\maxv}{\mu}

\newcommand{\est}[1]{#1_E}
\newcommand{\fvar}{w}
\newcommand{\nfv}{\nu}
\newcommand{\leftpred}{{\cal P}_{l}}
\newcommand{\rightpred}{{\cal P}_{r}}
\newcommand{\botp}{\underline{\boldsymbol{\bot}}}

\newcommand{\expands}[1]{\triangleright_{#1}}
\newcommand{\gconnection}{connection\xspace}
\newcommand{\gcon}[2]{\mathrm{C}_{#2}(#1)}

\newcommand{\bpart}[1]{\mathrm{aux}(#1)}
\newcommand{\mpart}[1]{\mathrm{main}(#1)}


\begin{abstract}
\comment[ri]{rewritten, old version in comments} The entailment
problem $\varphi \models \psi$ in Separation Logic
\cite{IshtiaqOHearn01,Reynolds02}, between separated conjunctions of
equational ($x \iseq y$ and $x \not\iseq y$), spatial ($x \mapsto
(y_1,\ldots,y_\rank)$) and predicate ($p(x_1,\ldots,x_n)$) atoms,
interpreted by a finite set of inductive rules, is undecidable in
general. Certain restrictions on the set of inductive definitions lead
to decidable classes of entailment problems. Currently, there are two
such decidable classes, based on two restrictions, called
\emph{establishment}
\cite{IosifRogalewiczSimacek13,KatelaanMathejaZuleger19,PZ20} and
\emph{restrictedness} \cite{EIP21a}, respectively.  Both classes are
shown to be in \twoexptime\ by the independent proofs from \cite{PZ20}
and \cite{EIP21a}, respectively, and a many-one reduction of
established to restricted entailment problems has been given
\cite{EIP21a}. In this paper, we strictly generalize the restricted
class, by distinguishing the conditions that apply only to the left-
($\varphi$) and the right- ($\psi$) hand side of entailments,
respectively. We provide a many-one reduction of this generalized
class, called \emph{safe}, to the established class. Together with the
reduction of established to restricted entailment problems, this new
reduction closes the loop and shows that the three classes of
entailment problems (respectively established, restricted and safe)
form a single, unified, \twoexptime-complete class.
\end{abstract}

\section{\capitalisewords{Introduction}}




\comment[ri]{changes below}

Separation Logic \cite{IshtiaqOHearn01,Reynolds02} (SL) was primarily
introduced for writing concise Hoare logic proofs of programs that
handle pointer-linked recursive data structures (lists, trees,
etc). Over time, SL has evolved into a powerful logical framework,
that constitutes the basis of several industrial-scale static program
analyzers
\cite{DBLP:conf/nfm/CalcagnoDDGHLOP15,DBLP:conf/cav/BerdineCI11,DBLP:conf/cav/DudkaPV11},
that perform scalable compositional analyses, based on the principle
of \emph{local reasoning}: describing the behavior of a program
statement  with respect only to the small (local) set of memory
locations that are changed by that statement, with no concern for the
rest of the program's state.

Given a set of memory locations (e.g., addresses), SL formul{\ae}
describe {\em heaps}, that are finite partial functions mapping
finitely many locations to records of locations. A location $\ell$ is
{\em allocated} if it occurs in the domain of the heap. An atom
$x \mapsto (y_1,\dots,y_\rank)$ states that there is only one
allocated location, associated with $x$, that moreover refers to the
tuple of locations associated with $(y_1,\dots,y_\rank)$,
respectively. The {\em separating conjunction} $\phi * \psi$ states
that the heap can split into two parts, with disjoint domains, that
make $\phi$ and $\psi$ true, respectively. The separating conjunction
is instrumental in supporting local reasoning, because the
disjointness between the (domains of the) models of its arguments
ensures that no update of one heap can actually affect the other.

Reasoning about recursive data structures of unbounded sizes (lists,
trees, etc.) is possible via the use of predicate symbols, whose
interpretation is specified by a user-provided \emph{set of inductive
  definitions} (SID) of the form $p(x_1,\ldots,x_n) \Leftarrow \pi$,
where $p$ is a predicate symbol of arity $n$ and the free variables of
the formula $\pi$ are among the parameters $x_1,\ldots,x_n$ of the
rule. Here the separating conjunction ensures that each unfolding of
the rules, which substitute some predicate atom $p(y_1,\ldots,y_n)$ by
a formula $\pi[x_1/y_1,\ldots,x_n/y_n]$, corresponds to a way of
building the recursive data structure. For instance, a list is either
empty, in which case its head equals its tail pointer, or is built by
first allocating the head, followed by all elements up to but not
including the tail, as stated by the inductive definitions
\(\mathsf{ls}(x,y) \Leftarrow x \iseq y\) and \(\mathsf{ls}(x,y)
\Leftarrow \exists z ~.~ x \mapsto (z) * \mathsf{ls}(z,y)\).

An important problem in program verification, arising during the
construction of Hoare-style correctness proofs of programs, is the
discharge of verification conditions of the form $\phi \models \psi$,
where $\phi$ and $\psi$ are SL formul{\ae}, asking whether every model
of $\phi$ is also a model of $\psi$. These problems,
called \emph{entailments}, are, in general, undecidable in the
presence of inductively defined predicates
\cite{DBLP:conf/atva/IosifRV14,AntonopoulosGorogiannisHaaseKanovichOuaknine14}.

A first decidable class of entailments, described in
\cite{IosifRogalewiczSimacek13}, involves three restrictions on the
SID rules: \emph{progress}, \emph{connectivity} and
\emph{establishment}. Intuitively, the progress (P) condition states
that every rule allocates exactly one location, the connectivity (C)
condition states that the set of allocated locations has a tree-shaped
structure, and the establishment (E) condition states that every
existentially quantified variable from a rule defining a predicate is
(eventually) allocated in every unfolding of that predicate.  A
\twoexptime\ algorithm was proposed for testing the validity of PCE
entailments \cite{KatelaanMathejaZuleger19,PZ20} and a matching
\twoexptime-hardness lower bound was provided shortly after
\cite{DBLP:conf/lpar/EchenimIP20}.

Later work relaxes the establishment condition, necessary for
decidability \cite{echenim:hal-02951630}, by proving that the
entailment problem is still in \twoexptime\ if the establishment
condition is replaced by the {\em restrictedness} (R) condition, which
requires that every disequality ($x \not\iseq y$) involves at least
one free variable from the left-hand side of the entailment,
propagated through the unfoldings of the inductive system
\cite{EIP21a}. Interestingly, the rules of a progressive, connected
and restricted (PCR) entailment may generate data structures with
``dangling'' (i.e.\ existentially quantified but not allocated)
pointers, which was not possible with PCE entailments.


In this paper, we generalize PCR entailments further, by showing that
the connectivity and restrictedness conditions are needed only on the
right-hand side of the entailment, whereas the only condition required
on the left-hand side is progress. Although the class of data
structures that can be described is much larger, we show that this new
class of entailments, called \emph{safe}, is also \twoexptime-complete, by
a many-one reduction of the validity of safe entailments to the
validity of PCE entailments. A second contribution of the paper is the
cross-certification of the two independent proofs of the
\twoexptime\ upper bounds, for the PCE
\cite{DBLP:conf/lpar/EchenimIP20,PZ20,EIP21a} and PCR \cite{EIP21a}
classes of entailments, respectively, by closing the loop. Namely, the
reduction given in this paper enables the translation of any of the
three entailment problems into an equivalent problem in any other class,
while preserving the \twoexptime\ upper bound. This is because all the
reductions are polynomial in the overall size of the SID and
singly-exponential in the maximum size of the rules in the SID.

Due to space restrictions, some of the proofs are shifted to the
Appendix.

\section{\capitalisewords{Definitions}}

For a (partial) function $f : A \rightarrow B$, we denote by $\dom{f}$
and $\img{f}$ its domain and range,
respectively. \ShortLongVersion{}{A function $f$ is {\em finite} if
  $\card{\dom{f}} < \infty$, where $\card{S}$ denotes the cardinality
  of the set $S$. The subset $\set{k,k+1,\ldots,\ell}$ of the set
  $\nat$ of natural numbers is denoted as $\interv{k}{\ell}$; note
  that $\interv{k}{\ell} = \emptyset$ whenever $\ell < k$.} For a
relation $R \subseteq A \times A$, we denote by $R^*$ the reflexive
and transitive closure of $R$\ShortLongVersion{}{, i.e.\ $R^* \isdef
  \{(x_1,x_n) \mid n \geq 1,~ \forall i \in \interv{1}{n-1} ~.~ (x_i,
  x_{i+1}) \in R\}$}.

Let $\rank$ be a fixed natural number throughout this paper and let
$\allpreds$ be a countably infinite set of {\em predicate
  symbols}. Each predicate symbol $p \in \allpreds$ is associated a
unique arity, denoted $\arity{p}$.  Let $\vars$ be a countably
infinite set of {\em variables}. For technical convenience, we also
consider a special constant $\bot$, which will be used to denote
``empty'' record fields.  Formul{\ae} are built inductively, according
to the following syntax:
\[\phi := x \not \iseq x' \mid x \iseq x' \mid x \mapsto (y_1,\dots,y_\rank) \mid
p(x_1,\dots,x_n) \mid \phi_1
* \phi_2 \mid \phi_1 \vee \phi_2 \mid \exists x ~.~ \phi_1\] where
$p\in \allpreds$ is a predicate symbol of arity $n = \arity{p}$,
$x,x',x_1,\dots,x_n \in \vars$ are variables and
$y_1,\dots,y_\rank \in \vars \cup \{ \bot \}$ are \emph{terms}, i.e.\
either variables or $\bot$.

The set of variables freely occurring in a formula $\phi$ is denoted
by $\fv{\phi}$, we assume by $\alpha$-equivalence that the same
variable cannot occur both free and bound in the same formula $\phi$,
and that distinct quantifiers bind distinct variables.  The {\em size}
$\size{\phi}$ of a formula $\phi$ is the number of occurrences of
symbols in $\phi$.  A formula $x \iseq x'$ or $x \not\iseq x'$ is an
\emph{equational atom}, $x \mapsto (y_1, \ldots, y_\rank)$ is a
\emph{points-to atom}, whereas $p(x_1, \ldots, x_n)$ is a
\emph{predicate atom}.  \comment[ri]{added} Note that $\bot$ cannot
occur in an equational or in a predicate atom.  A formula is
\emph{predicate-less} if no predicate atom occurs in it. A
\emph{symbolic heap} is a formula \comment[np]{deleted: containing no disjunctions, i.e., (because strictly speaking this is not equivalent, equivalence is explained latter)} of
the form $\exists \mvec{x}~.~ \Asterisk_{j=1}^{m} \alpha_i$, where
each $\alpha_i$ is an atom and $\mvec{x}$ is a possibly empty vector of variables.

\begin{definition}\label{def:alloc}
  A variable $x$ is \emph{allocated by a symbolic heap} $\phi$ iff
  $\phi$ contains a sequence of equalities $x_1 \iseq x_2 \iseq \ldots
  \iseq x_{n-1} \iseq x_n$, for $n \geq 1$, such that $x = x_1$ and
  $x_n \mapsto (y_1, \ldots, y_\rank)$ occurs in $\phi$, for some
  variables $x_1, \ldots, x_n$ and some terms $y_1, \ldots, y_\rank
  \in \vars \cup \{\bot\}$.
\end{definition}
A {\em substitution} is a partial function mapping variables to
variables. If $\sigma$ is a substitution and $\phi$ is a formula, a
variable or a tuple, then $\phi\sigma$ denotes the formula, the
variable or the tuple obtained from $\phi$ by replacing every free
occurrence of a variable $x \in \dom{\sigma}$ by $\sigma(x)$,
respectively. We denote by $\substinterv{x_i}{y_i}{i}{\interv{1}{n}}$
the substitution with domain $\{ x_1,\dots,x_n\}$ that maps $x_i$ to
$y_i$, for each $i \in
\interv{1}{n}$.

\comment[ri]{moved these 2 paragraphs up here and changed: system
$\rightarrow$ set in SID}

A {\em set of inductive definitions} (SID) is a finite set $\asid$ of
rules of the form $p(x_1,\dots,x_n) \Leftarrow \genbod$, where $p \in
\allpreds$, $n = \arity{p}$, $x_1,\dots,x_n$ are pairwise distinct
variables and $\genbod$ is a quantifier-free symbolic heap. The
predicate atom $p(x_1,\ldots,x_n)$ is the \emph{head} of the rule and
$\asid(p)$ denotes the subset of $\asid$ consisting of rules with head
$p(x_1,\ldots,x_n)$ (the choice of $x_1, \ldots, x_n$ is not
important). The variables in $\fv{\genbod} \setminus
\{x_1,\dots,x_n\}$ are called the \emph{existential variables of the
  rule}. Note that, by definition, these variables are not explicitly
quantified inside $\genbod$ and that $\genbod$ is quantifier-free. For
simplicity, we denote by $p(x_1, \ldots,
x_n) \Leftarrow_\asid \genbod$ the fact that the rule $p(x_1, \ldots,
x_n) \Leftarrow \genbod$ belongs to $\asid$. The
\emph{size} of $\asid$ is defined as $\size{\asid} \isdef \sum_{p(x_1,
  \ldots, x_n) \Leftarrow_\asid \pi} \size{\pi} + n$ and
its \emph{width} as $\width{\asid} \isdef
\max_{p(x_1, \ldots, x_n) \Leftarrow_\asid \pi} \size{\pi} + n$.

We write $p \dependson{\asid} q$, $p, q \in \allpreds$ iff $\asid$
contains a rule of the form $p(x_1, \ldots, x_n) \Leftarrow
\genbod$, and $q$ occurs in $\genbod$. We say that $p$ \emph{depends on}
$q$ if $p \dependson{\asid}^* q$.  For a formula $\phi$, we denote by
$\preds{\phi}$ the set of predicate symbols $q$, such that $p \dependson{\asid}^* q$ for some predicate $p$ occurring in $\phi$. \comment[np]{instead of ``such that $p$ occurs
in $\phi$ and $p \dependson{\asid}^* q$.''}

Given formul{\ae} $\phi$ and $\psi$, we write $\phi \Leftarrow_{\asid}
\psi$ if $\psi$ is obtained from $\phi$ by replacing an atom
$p(u_1,\dots,u_n)$ by $\genbod\substmult{x_1}{u_1}{x_n}{u_n}$, where
$\asid$ contains a rule $p(x_1,\ldots,x_n) \Leftarrow \genbod$. 
We assume, by a renaming of
existential variables, that $(\fv{\pi} \setminus \{
x_1,\dots,x_n\}) \cap \fv{\phi} = \emptyset$. 
\comment[np]{instead of (stores are not yet defined, moreover, this is not really $\alpha$-renaming): We assume, by an $\alpha$-renaming of
existential variables, if necessary, that $(\fv{\pi} \setminus \{
x_1,\dots,x_n\}) \cap \dom{\astore} = \emptyset$, for any store
$\astore$ and any rule $p(x_1,\dots,x_n) \Leftarrow_{\asid} \pi$.}
We
call $\psi$ an {\em unfolding} of $\phi$ iff $\phi
\Leftarrow_{\asid}^* \psi$.

\comment[ri]{added back in the paper}
\begin{proposition}\label{lemma:unfolding}
  Every unfolding of a symbolic heap is again a symbolic heap.
\end{proposition}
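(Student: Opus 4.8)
The plan is to prove the statement by induction on the number of rewriting steps in a derivation $\phi \Leftarrow_\asid^* \psi$, so the heart of the matter is to check that a single unfolding step $\phi \Leftarrow_\asid \psi$ sends a formula of the form $\exists \mvec{x} ~.~ \Asterisk_{i=1}^{m} \alpha_i$ (with each $\alpha_i$ an atom) to a formula of the same form.

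First I would record the elementary fact that substitutions preserve atoms: if $\sigma$ is a substitution and $\alpha$ is an equational, a points-to or a predicate atom, then $\alpha\sigma$ is again an atom of the same kind. The one point worth checking is that $\sigma$ cannot introduce the constant $\bot$ where it is forbidden (inside an equational or a predicate atom); this holds because $\bot$ is a constant and not a variable, so $\bot \notin \dom{\sigma} \cup \img{\sigma}$ and $\sigma$ neither deletes nor creates any occurrence of $\bot$. Since a rule body $\genbod$ is a quantifier-free symbolic heap, i.e.\ a formula $\Asterisk_{k=1}^{m'} \beta_k$ with every $\beta_k$ an atom, it follows that $\genbod\substmult{x_1}{u_1}{x_n}{u_n} = \Asterisk_{k=1}^{m'} \beta_k\substmult{x_1}{u_1}{x_n}{u_n}$ is again a separating conjunction of atoms, and in particular contains no disjunction.

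For the single-step case, write $\phi = \exists \mvec{x} ~.~ \Asterisk_{i=1}^{m} \alpha_i$ and suppose $\psi$ is obtained by replacing a predicate atom $\alpha_j = p(u_1,\dots,u_n)$ of $\phi$ by $\genbod\substmult{x_1}{u_1}{x_n}{u_n}$, for some rule $p(x_1,\dots,x_n) \Leftarrow_\asid \genbod$. The quantifier prefix $\exists \mvec{x}$ is left untouched, and by the previous paragraph the inserted subformula is a $*$-conjunction of atoms; reading $*$ modulo associativity and commutativity (as the $\Asterisk$ notation already does), the matrix of $\psi$ is $\left( \Asterisk_{i \neq j} \alpha_i \right) * \left( \Asterisk_{k=1}^{m'} \beta_k\substmult{x_1}{u_1}{x_n}{u_n} \right)$, which flattens to a separating conjunction $\Asterisk_{l} \gamma_l$ of atoms. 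Hence $\psi$ is again a symbolic heap. Note that the existential variables of the rule, i.e.\ the variables in $\fv{\genbod} \setminus \{x_1,\dots,x_n\}$, now occur free in $\psi$; this is harmless, since the definition of a symbolic heap places no constraint on the set of free variables of its matrix, and by the renaming convention built into $\Leftarrow_\asid$ these variables are disjoint from $\fv{\phi}$. The proposition then follows by induction on the length of the derivation $\phi \Leftarrow_\asid^* \psi$, the base case being exactly the hypothesis that $\phi$ is a symbolic heap. I do not anticipate any genuine obstacle: this is a purely syntactic bookkeeping fact, and the only mildly delicate points are the interaction of substitutions with the constant $\bot$ and the implicit use of associativity and commutativity of $*$ to flatten nested separating conjunctions.
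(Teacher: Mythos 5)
Your proof is correct and follows the same route as the paper: the paper's entire argument is ``by induction on the length of the unfolding sequence,'' and your write-up simply fills in the (routine) details of the inductive step --- that a rule body is a quantifier-free $*$-conjunction of atoms, that variable-for-variable substitution preserves atoms, and that flattening the resulting separating conjunction under the unchanged quantifier prefix yields a symbolic heap.
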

\begin{proof}
By induction on the length of the unfolding sequence. \qed
\end{proof}

We now define the semantics of SL. Let $\Loc$ be a countably infinite
set of {\em locations} containing, in particular, a special location
$\bots$. A {\em structure} is a pair $(\astore,\aheap)$,
where: \begin{compactitem}
\item $\astore$ is a partial function from $\vars \cup \{ \bot \}$ to
  $\Loc$, called a {\em store}, such that $\bot\in \dom{\astore}$ and
  $\astore(x) = \bots \iff x = \bot$, \comment[ri]{added}  for all
  $x\in\vars \cup \{ \bot \}$, \comment[np]{instead of $x\in \vars$} and
\item $\aheap : \Loc \rightarrow \Loc^\rank$ is a finite partial function,
      such that $\bots\not \in \dom{\aheap}$.           
\end{compactitem}
If $x_1,\dots,x_n$ are pairwise distinct variables and
$\ell_1,\dots,\ell_n \in \Loc$ are locations, we denote by
$\astore[x_i \leftarrow \ell_i \mid 1 \leq i \leq n]$ the store
$\astore'$ defined by $\dom{\astore'} = \dom{\astore} \cup \set{x_1,
  \ldots, x_n}$, $\astore'(y) = \ell_i$ if $y = x_i$ for some $i \in
\interv{1}{n}$, and $\astore'(y) = \astore(x)$ otherwise. If
$x_1,\dots,x_n \not\in \dom{\astore}$, then the store $\astore'$ is
called an {\em extension} of $\astore$ to $\{x_1,\dots,x_n\}$.

Given a heap $\aheap$, we define $\imgh{\aheap} \isdef \bigcup_{l\in
  \dom{\aheap}}\{ \ell_i \mid \aheap(\ell) =
(\ell_1,\dots,\ell_\rank), i \in \interv{1}{\rank}\}$ and
$\locs{\aheap} \isdef \dom{\aheap} \cup \imgh{\aheap}$.  Two heaps
$\aheap_1$ and $\aheap_2$ are {\em disjoint} iff $\dom{\aheap_1} \cap
\dom{\aheap_2} = \emptyset$, in which case $\aheap_1 \dunion \aheap_2$
denotes the union of $\aheap_1$ and $\aheap_2$, undefined whenever
$\aheap_1$ and $\aheap_2$ are not disjoint.

Given an SID $\asid$, $(\astore,\aheap) \models_{\asid} \phi$ is the
least relation between structures and formul{\ae} such that whenever
$(\astore,\aheap) \models_{\asid} \phi$, we have $\fv{\phi} \subseteq
\dom{\astore}$ and the following hold: \comment[np]{change in the last item below, to handle pb with variable collisions}
\[\begin{array}{rcll}
(\astore,\aheap) & \models_\asid & x \iseq x' & \text{ if $\dom{\aheap} = \emptyset$ and $\astore(x) = \astore(x')$} \\
(\astore,\aheap) & \models_\asid & x \not\iseq x' & \text{ if $\dom{\aheap} = \emptyset$ and $\astore(x) \neq \astore(x')$} \\
(\astore,\aheap) & \models_\asid & x \mapsto (y_1, \ldots, y_\rank) & \text{ if $\dom{\aheap} = \{\astore(x)\}$ and 
$\aheap(\astore(x)) = \tuple{\astore(y_1), \ldots, \astore(y_\rank)}$} \\
(\astore,\aheap) & \models_\asid & \phi_1 * \phi_2 & \text{ if there exist disjoint heaps $\aheap_1$ and $\aheap_2$ such that} \\
&&& \text{ $\aheap = \aheap_1 \dunion \aheap_2$ and $(\astore,\aheap_i) \models_\asid \phi_i$, for both $i=1,2$} \\
(\astore,\aheap) & \models_\asid & \phi_1 \vee \phi_2 & \text{ if $(\astore,\aheap) \models_\asid \phi_i$, for some $i = 1,2$} \\
(\astore,\aheap) & \models_\asid & \exists x~.~ \phi & \text{ if there exists $\ell\in \Loc$ such that $(\astore[x \leftarrow \ell],\aheap) \models\phi$} \\
(\astore,\aheap) & \models_\asid & p(x_1,\dots,x_n) & \text{ if $p(x_1,\dots,x_n) \Leftarrow_{\asid} \phi$, and there exists a store $\astore_e$} \\
&&& \text{ coinciding with $\astore$ on $\{ x_1,\dots,x_n\}$, such that $(\astore_e,\aheap) \models \phi$} 
\end{array}\]

Given formul{\ae} $\phi$ and $\psi$, we write $\phi \models_{\asid}
\psi$ whenever $(\astore,\aheap) \models_{\asid} \phi \Rightarrow
(\astore,\aheap) \models_{\asid} \psi$, for all structures
$(\astore,\aheap)$ and $\phi \equiv_{\asid} \psi$ for ($\phi
\models_\asid \psi$ and $\psi \models_\asid \phi$). We omit the
subscript $\asid$ whenever these relations hold for any SID.  It is
easy to check that, for all formul{\ae} $\phi_1,\phi_2,\psi$, it is the
case that $(\phi_1 \vee \phi_2) * \psi \equiv (\phi_1 * \psi) \vee
(\phi_2 * \psi)$ and $(\exists x . \phi_1) * \phi_2 \equiv \exists x ~.~ \phi_1
* \phi_2$. Consequently, each formula can be transformed into an
equivalent finite disjunction of symbolic heaps.

\begin{definition}\label{def:entailment}
An {\em entailment problem} is a triple $\aprob \isdef
\pb{\phi}{\psi}{\asid}$, where $\phi$ is a quantifier-free formula,
$\psi$ is a formula and $\asid$ is an SID. The problem $\aprob$ is
\emph{valid} iff $\phi \models_{\asid} \psi$. The \emph{size} of the
problem $\aprob$ is defined as $\size{\aprob} \isdef
\size{\phi}+\size{\psi}+\size{\asid}$ and its \emph{width} is defined
as $\width{\aprob} \isdef \max(\size{\phi}, \size{\psi},
\width{\asid})$.
\end{definition}
\noindent
Note that considering $\phi$ to be quantifier-free loses no
generality, because $\exists x.\phi \models_{\asid} \psi \iff \phi
\models_{\asid} \psi$.

\section{Decidable Entailment Problems}
\label{sec:decidable-entailments}

The class of general entailment problems is undecidable, see
Theorem \ref{thm:right-disc-undec} below for a refinement of the
initial undecidability proofs
\cite{DBLP:conf/atva/IosifRV14,AntonopoulosGorogiannisHaaseKanovichOuaknine14}. A first
attempt to define a natural decidable class of entailment problems is
described in \cite{IosifRogalewiczSimacek13} and involves three
restrictions on the SID rules, formally defined below:

\comment[ri]{abbreviations: P for progressing, C for connected, E for established}
\begin{definition}\label{def:pce}
  A rule $p(x_1,\dots,x_n) \Leftarrow \pi$ is: \begin{compactenum}
  \item\label{it:progressing} \emph{progressing} (P) iff $\pi = x_1
    \mapsto (y_1,\dots,y_\rank) * \genrest$ and $\genrest$ contains no
    points-to atoms,
  \item\label{it:connected} \emph{connected} (C) iff it is progressing,
    $\pi = x_1 \mapsto (y_1,\dots,y_\rank) * \genrest$ and every
    predicate atom in $\genrest$ is of the form $q(y_i,\mvec{u})$, for
    some $i \in \interv{1}{\rank}$,
  \item\label{it:established} \emph{established} (E) iff every
    existential variable $x \in \fv{\pi} \setminus \{x_1, \ldots,
    x_n\}$ is allocated by every predicate-less unfolding\footnote{As
      stated in Proposition \ref{lemma:unfolding}, $\phi$ is a
      symbolic heap.} $\pi \Leftarrow_\asid^* \phi$.
  \end{compactenum}
  An SID $\asid$ is \emph{P} (resp. \emph{C}, \emph{E}) \emph{for a
    formula} $\phi$ iff every rule in $\bigcup_{p \in
    \preds{\phi}}\asid(p)$ is P (resp. C,E). An entailment problem
  $\pb{\phi}{\psi}{\asid}$ is \emph{left-} (resp. \emph{right-})
  \emph{P} (resp. \emph{C}, \emph{E}) iff $\asid$ is P (resp. C, E)
  for $\phi$ (resp. $\psi$).  An entailment problem is \emph{P}
  (resp. \emph{C}, \emph{E}) iff it is both left- and right-P
  (resp. C, E).
\end{definition}
The decidability of progressing, connected and left-established
entailment problems is an immediate consequence of the result of
\cite{IosifRogalewiczSimacek13}. Moreover, an analysis of the proof
\cite{IosifRogalewiczSimacek13} leads to an elementary recursive
complexity upper bound, which has been recently tighten down to
\twoexptime-complete \cite{PZ20,EIP21a,DBLP:conf/lpar/EchenimIP20}.
In the following, we refer to Table \ref{tab:entl-complexity} for a
recap of the complexity results for the entailment problem. The last
line is the main result of the paper and corresponds to the most
general (known) decidable class of entailment problems (Definition
\ref{def:safe}).

\begin{table}[h!]
\begin{center}
\caption{Decidability and Complexity Results for the Entailment
  Problem ($\checkmark$ means that the corresponding condition
  holds on the left- and right-hand side of the entailment)}
\label{tab:entl-complexity}
\begin{tabular}{|c|c|c|c|c|c|}
\hline
Reference & Progress & Connected & Established & Restricted & Complexity \\
\hline\hline
Theorem \ref{thm:pce-complexity} & $\checkmark$ & $\checkmark$ & left & - & 2EXP-co. \\
\hline
Theorem \ref{thm:right-disc-undec} & $\checkmark$ & left & $\checkmark$ & - & undec. \\
\hline
\cite[Theorem 6]{echenim:hal-02951630} & $\checkmark$ & $\checkmark$ & - & - & undec. \\
\hline
\cite[Theorem 32]{EIP21a} & $\checkmark$ & $\checkmark$ & - & $\checkmark$ & 2EXP-co. \\ 
\hline
 Theorem \ref{thm:safe-complexity} & $\checkmark$ & right & - & right & 2EXP-co. \\
\hline
\end{tabular}
\end{center}
\vspace*{-2\baselineskip}
\end{table}


The following theorem is an easy consequence of previous results \cite{DBLP:conf/lpar/EchenimIP20}.

\putInAppendix{\section{Additional Material for Section \ref{sec:decidable-entailments}}}{}

\begin{theorem}\label{thm:pce-complexity}
The progressing, connected and left-established entailment problem is
\twoexptime-complete. \comment[np]{slight change (the algo cannot depend on the instance, of course} Moreover, there exists an algorithm that runs in time
$2^{2^{\bigO(\width{\aprob}^8 \cdot \log\size{\aprob})}}$ for every instance $\aprob$ of this
problem.
\end{theorem}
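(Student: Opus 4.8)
I would establish the two bounds separately. Throughout, I rely on the fact that the \emph{established} class --- progressing, connected and established on \emph{both} sides of the entailment --- is \twoexptime-complete, with the stated running time, by \cite{DBLP:conf/lpar/EchenimIP20} (see also \cite{PZ20,EIP21a}).

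\emph{Lower bound.} There is nothing to construct: an established problem is in particular left-established, so the \twoexptime-hardness construction of \cite{DBLP:conf/lpar/EchenimIP20}, whose instances already belong to the established class, witnesses \twoexptime-hardness of the left-established class as well.

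\emph{Upper bound.} The key observation is that establishment of the right-hand side $\psi$ is never used by the decision procedure of \cite{DBLP:conf/lpar/EchenimIP20}: it serves only to bound the set of non-allocated locations that a model of $\phi$ can reference, and this bound already follows from establishment of $\phi$ alone. Concretely, since $\phi$ is quantifier-free and $\asid$ is progressing, connected and established for $\phi$, every predicate-less unfolding of $\phi$ allocates all the variables it introduces (using Proposition~\ref{lemma:unfolding}), so every model $(\astore,\aheap) \models_\asid \phi$ satisfies $\imgh{\aheap} \subseteq \dom{\aheap} \cup \astore(\fv{\phi}) \cup \set{\bots}$; in particular its ``dangling'' part has size at most $\size{\phi}$, independently of $\aheap$. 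I would therefore go through the algorithm of \cite{DBLP:conf/lpar/EchenimIP20} together with its correctness proof and check that (i) the finite abstraction it builds for $\psi$ depends only on the syntax of the rules of $\asid(q)$, for $q \in \preds{\psi}$, and not on their being established, (ii) soundness is unconditional, and (iii) completeness --- the step that turns a spurious abstract witness into a genuine counter-model, which is necessarily a model of $\phi$ --- appeals only to the bound just displayed, i.e.\ to \emph{left}-establishment. Since neither the size of the abstract domain nor the time analysis changes, the bound $2^{2^{\bigO(\width{\aprob}^8 \cdot \log\size{\aprob})}}$ is inherited verbatim.

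\emph{Main obstacle and an alternative.} The actual work is the bookkeeping in step (iii): one must locate every appeal to establishment in the correctness argument of \cite{DBLP:conf/lpar/EchenimIP20} and confirm that each concerns the left-hand side (or the $\phi$-induced bound on dangling locations) rather than the right. A route that avoids reopening that proof is a many-one reduction to the established class --- add $\fv{\phi}$ as extra parameters throughout the SID of $\psi$, existentially close them at the top level, and rewrite each rule so that an otherwise-dangling existential is either allocated deeper in its unfolding or aliased to one of the new parameters --- but carrying this out while preserving progress and connectivity, and while handling existentials that are dangling \emph{and} unreferenced, is itself delicate, and the resulting instances have width polynomially larger than $\width{\aprob}$, which would only give \twoexptime-completeness and not the precise exponent $8$. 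Hence the first route is the one I would pursue.
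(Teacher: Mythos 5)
Your proposal is correct in outline and the lower-bound half coincides exactly with the paper's (an established instance is in particular left-established, so the hardness construction of \cite[Theorem 18]{DBLP:conf/lpar/EchenimIP20} carries over). The upper bound, however, is routed differently. You propose to audit the direct decision procedure for PCE entailments and verify clause by clause that every appeal to establishment concerns the left-hand side or the bound $\imgh{\aheap} \subseteq \dom{\aheap} \cup \astore(\fv{\phi}) \cup \set{\bots}$ that it induces. The paper instead observes that the upper bound of \cite[Theorem 32]{EIP21a} --- the algorithm for progressing, \connected, \restricted entailments, which assumes no establishment at all --- is what actually does the work, and that establishment enters only in the preliminary many-one reduction of established to restricted entailments; since that reduction uses only \emph{left}-establishment, the theorem follows. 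The paper's route is the easier audit: the use of establishment is localized in a single reduction step rather than scattered through the correctness proof of a decision procedure, which is precisely the bookkeeping burden you flag as your ``main obstacle.'' Two further points. First, you attribute both the algorithm and the running time to \cite{DBLP:conf/lpar/EchenimIP20}; that reference supplies only the hardness result, while the algorithm and the $2^{2^{\bigO(\width{\aprob}^8 \cdot \log\size{\aprob})}}$ bound come from \cite[Theorem 32]{EIP21a} (the PCE algorithms of \cite{KatelaanMathejaZuleger19,PZ20} are independent and need not exhibit that exponent). Second, the ``alternative'' reduction you sketch and then reject --- padding the right-hand SID so as to land back in the established class --- is indeed not what the paper does, and your reason for rejecting it (loss of the precise width dependence) is sound; but note that the reduction the paper \emph{does} lean on goes in the opposite direction, from established to restricted, and runs in time $\size{\aprob} \cdot 2^{\bigO(\width{\aprob}^2)}$ while preserving the width asymptotically, which is why the exponent $8$ survives.
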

\optproof{\subsection{Proof of Theorem \ref{thm:pce-complexity}}}{ The
  \twoexptime-hardness lower bound is given in \cite[Theorem
    18]{DBLP:conf/lpar/EchenimIP20}. The upper bound is explained in
  the proof of \cite[Theorem 32]{EIP21a}. Note that, although the
  systems are assumed to be established in \cite{EIP21a}, it is easy
  to check that only left-establishment is actually used in the
  proof. Left-establishment is used to transform the entailment into
  an restricted one (in the sense of \cite{EIP21a}). \qed}

A natural question arises in this context: which of the restrictions
from the above theorem can be relaxed and what is the price, in terms
of computational complexity, of relaxing (some of) them? In the light
of Theorem \ref{thm:right-disc-undec} below, the connectivity
restriction cannot be completely dropped. Further, if we drop the
establishment condition, the problem becomes undecidable \cite[Theorem
  6]{echenim:hal-02951630}, even if both the left/right progress and
connectivity conditions apply.

\begin{theorem}\label{thm:right-disc-undec}
  The progressing, left-connected and established entailment problem
  is undecidable. 
\end{theorem}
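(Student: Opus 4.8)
The plan is to reduce from the \emph{universality problem for context-free grammars}, which is the classical undecidable problem underlying the original proofs of \cite{DBLP:conf/atva/IosifRV14,AntonopoulosGorogiannisHaaseKanovichOuaknine14}: given a reduced context-free grammar $G$ over an alphabet $\Sigma=\{a_1,\dots,a_N\}$ presented in Greibach normal form (every production has the shape $A\to a\,\alpha$ with $a\in\Sigma$ and $\alpha$ a word of nonterminals, so $G$ has no $\epsilon$-production and $L(G)\subseteq\Sigma^{+}$), it is undecidable whether $L(G)=\Sigma^{+}$. I take $\rank\ge 2$; when $\rank=1$ a letter must instead be coded by a gadget of a fixed shape, which is routine but tedious. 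The key observation enabling the reduction is that Greibach normal form is the form adapted to \emph{progressing} rules (each production reads one terminal, i.e.\ allocates one cell), but that the nonterminals on the right of a production are threaded through intermediate variables, so a production with two or more nonterminals yields a rule that is \emph{not connected}.

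Concretely, given $G$ with start symbol $S$, I would form the problem $\pb{\phi}{\psi}{\asid}$ with $\phi\isdef\mathsf{list}(x,y)$, $\psi\isdef p_S(x,y)$ ($x,y$ common free variables), where $\asid$ contains: (i) for each $i$, rules so that the models of $\mathsf{letter}_{a_i}(p)$ are exactly the allocated null-terminated lists of length $i$, namely $\mathsf{letter}_{a_1}(p)\Leftarrow p\mapsto(\bot,\dots,\bot)$ and $\mathsf{letter}_{a_i}(p)\Leftarrow\exists q.\; p\mapsto(q,\bot,\dots,\bot) * \mathsf{letter}_{a_{i-1}}(q)$ for $i>1$; (ii) for every $a_i\in\Sigma$ the rules $\mathsf{list}(x,y)\Leftarrow\exists p.\; x\mapsto(p,y,\bot,\dots,\bot) * \mathsf{letter}_{a_i}(p)$ and $\mathsf{list}(x,y)\Leftarrow\exists p,z.\; x\mapsto(p,z,\bot,\dots,\bot) * \mathsf{letter}_{a_i}(p) * \mathsf{list}(z,y)$, so that a model of $\mathsf{list}(x,y)$ is exactly a list of cells from $\astore(x)$ to $\astore(y)$, each carrying in its first field a disjoint letter gadget, and hence codes a word $w\in\Sigma^{+}$; (iii) for each production $A\to a_i\,B_1\cdots B_m$ the rule $p_A(x,y)\Leftarrow\exists p,z_1,\dots,z_m.\; x\mapsto(p,z_1,\bot,\dots,\bot) * \mathsf{letter}_{a_i}(p) * p_{B_1}(z_1,z_2) * \dots * p_{B_m}(z_m,y)$, read as $p_A(x,y)\Leftarrow\exists p.\; x\mapsto(p,y,\bot,\dots,\bot) * \mathsf{letter}_{a_i}(p)$ when $m=0$. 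A routine induction on unfoldings then shows that, on a model of $\mathsf{list}(x,y)$ coding $w$, the formula $p_S(x,y)$ holds iff $w\in L(G)$; since every $w\in\Sigma^{+}$ is coded by some model of $\mathsf{list}(x,y)$, validity of $\pb{\phi}{\psi}{\asid}$ is equivalent to $\Sigma^{+}\subseteq L(G)$, i.e.\ to $L(G)=\Sigma^{+}$, which is undecidable.

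It remains to check that every such $\pb{\phi}{\psi}{\asid}$ lies in the stated class, and this is precisely what dictates the shape of the rules above. Each rule allocates exactly its first parameter and its remainder contains no points-to atom, so all rules are \emph{progressing} --- this is why the base cases of $\mathsf{list}$ and of $\mathsf{letter}$ are made to allocate a cell rather than collapse to an equality. The predicates reachable from $\phi$ are only $\mathsf{list}$ and the $\mathsf{letter}_{a_i}$, and all their rules are \emph{connected}, since in each body every predicate atom has a field of the allocated cell as first argument; hence the problem is \emph{left-connected} (the $p_A$ rules are not connected when $m\ge 2$, because $p_{B_2}(z_2,z_3)$ has a first argument that is not a field of $x$, but those rules are not reachable from $\phi$). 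Finally, every rule is \emph{established}: its only existential variables are the $p$'s, allocated by $\mathsf{letter}$, and the $z_j$'s, each of which is the first parameter of a call $p_{B_j}(z_j,\cdot)$ and is therefore allocated in every predicate-less unfolding, because $G$ has no $\epsilon$-production and so $p_{B_j}$ always allocates its first argument. I expect the only delicate point --- and the one that has to be arranged rather than observed --- to be exactly this interplay: progress outlaws the empty base case while establishment outlaws leaving the $z_j$'s unallocated, and both are reconciled by working with an $\epsilon$-free Greibach-normal-form grammar together with letters realised as \emph{allocated} gadgets instead of dangling marker pointers. That the $p_A$ rules genuinely cannot be made connected is then automatic: otherwise Theorem~\ref{thm:pce-complexity} would make the problem decidable, contradicting undecidability; intuitively, connected progressing systems cannot express the nested matching performed by a context-free derivation.
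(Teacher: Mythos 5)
Your reduction is essentially the paper's: both go from universality of $\epsilon$-free context-free grammars in Greibach normal form, with progress coming from one allocation per terminal, left-connectivity from the word-generating predicate on the antecedent, establishment from the fact that every existential variable is the (allocated) first argument of some atom, and the unavoidable loss of connectivity on the consequent from threading the nonterminals of a production through existential variables. The only real difference is the letter encoding: the paper makes each cell point to one of two distinguished free variables $\hat{0},\hat{1}$ and adds $\hat{0}\not\iseq\hat{1}$ to the antecedent, whereas you realise each letter as an allocated null-terminated list of a distinct length, a slightly heavier but equally valid gadget.
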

\optproof{\subsection{Proof of Theorem \ref{thm:right-disc-undec}}}{By
  a reduction from the known undecidable problem of universality of
  context-free languages. A context-free grammar $G =
  \tuple{N,T,S,\Delta}$ consists of a finite set $N$ of nonterminals,
  a finite set $T$ of terminals, a start symbol $S \in N$ and a finite
  set $\Delta$ of productions of the form $A \rightarrow w$, where $A
  \in N$ and $w \in (N \cup T)^*$. Given finite strings $u, v \in (N
  \cup T)^*$, the step relation $u \Rightarrow v$ replaces a
  nonterminal $A$ of $u$ by the right-hand side $w$ of a production $A
  \rightarrow w$ and $\Rightarrow^*$ denotes the reflexive and
  transitive closure of $\Rightarrow$. The language of $G$ is the set
  $\lang{G}$ of finite strings $w \in T^*$, such that $s \Rightarrow^*
  w$. The problem $T^* \subseteq \lang{G}$ is known as the
  universality problem, known to be undecidable
  \cite{BarHillel61}. W.l.o.g. we assume further
  that: \begin{compactitem}
  \item $T = \set{0,1}$, because every terminal can be encoded as a
    binary string,
  \item $\lang{G}$ does not contain the empty string $\epsilon$,
    because computing a grammar $G'$ such that $\lang{G'} = \lang{G}
    \cap T^+$ is possible \comment[np]{maybe remove this (complexity
      is irrelevant here):}\comment[ri]{but we need that the grammar
      does not produce the empty word}\comment[np]{yes} and, moreover, we can reduce
    from the modified universality problem problem $T^+ \subseteq
    \lang{G'}$ instead of the original $T^* \subseteq \lang{G}$,
  \item $G$ is in Greibach normal form, i.e.\ it contains only
    production rules of the form $A_0 \rightarrow a A_1 \ldots A_n$,
    where $A_0, \ldots A_n \in N$, for some $n \geq 0$ and $a \in T$.
  \end{compactitem}
  We use the special variables $\hat{0}$ and $\hat{1}$ to denote the
  binary digits $0$ and $1$. For each nonterminal $A_0 \in N$, we have
  a predicate $A_0(x,y,\hat{0},\hat{1})$ and a rule
  $A_0(x,y,\hat{0},\hat{1}) \Leftarrow x \mapsto (\hat{a},x_1) *
  A_1(x_1,x_2,\hat{0},\hat{1}) * \ldots A_n(x_n,y,\hat{0},\hat{1})$,
  for each rule $A_0 \rightarrow a A_1 \ldots A_n$ of $G$. Moreover,
  we consider the rules $T(x,y,\hat{0},\hat{1}) \Leftarrow x \mapsto
  (\hat{a},z) * T(z,y,\hat{0},\hat{1})$ and $T(x,y,\hat{0},\hat{1})
  \Leftarrow x \mapsto (\hat{a},y)$, for all $a \in \set{0,1}$ and let
  $\asid$ be the resulting SID. It is easy to check that the SID is
  progressing and established and that, moreover, the rules for $T$
  are connected. Finally, the entailment $\pb{\hat{0} \not\iseq
    \hat{1} * T(x,y)}{S(x,y)}{\asid}$ is valid if and only if $T^+
  \subseteq \lang{G}$. \qed}

The second decidable class of entailment problems \cite{EIP21a}
relaxes the connectivity condition and replaces the establishment with
a syntactic condition (that can be checked in polynomial time in the
size of the SID), while remaining \twoexptime-complete. Informally,
the definition forbids (dis)equations between existential variables in
symbolic heaps or rules: the only allowed (dis)equations are of the
form $x \bowtie y$ where $x$ is a free variable (viewed as a constant
in \cite{EIP21a}).  The definition given below is essentially
equivalent to that of \cite{EIP21a}, but avoids any reference to
constants; instead it uses a notion of \rpositional functions, which
helps to identify existential variables that are always replaced by a
free variable from the initial formula during unfolding.

An {\em \rpositional function}
maps every $n$-ary predicate symbol $p$ occurring in $\asid$ to a
subset of $\interv{1}{n}$. Given an \rpositional function $\fvargs$
and a formula $\phi$, we denote by $\vargs{\phi}$ the set of variables
$x_i$ such that $\phi$ contains a predicate atom $p(x_1,\dots,x_n)$
with $i \in \fvargs(p)$.  Note that $\vargf$ is stable under
substitutions, i.e.\ $\vargs{\phi\sigma} = (\vargs{\phi})\sigma$, for
each formula $\phi$ and each substitution $\sigma$.

\begin{definition}\label{def:fv-profile}
Let $\psi$ be a formula and $\asid$ be an SID. The {\em \fvprofile} of
the pair $(\psi,\asid)$ is the \rpositional function $\fvargs$ such that the sets
$\fvargs(p)$, for $p \in \allpreds$, are the maximal sets satisfying the
following conditions:
\begin{enumerate}
\item{
$\vargs{\psi} \subseteq \fv{\psi}$.
}
\item{ For all predicate symbols $p\in \preds{\psi}$, all rules
  $p(x_1,\dots,x_n) \Leftarrow \genbod$ in $\asid$, all predicate
  atoms $q(y_1,\dots,y_m)$ in $\genbod$ and all $i \in \fvargs(q)$,
  there exists $j \in \fvargs(p)$ such that $x_j = y_i$.}
\end{enumerate}
The \fvprofile of  $(\psi,\asid)$ is denoted by $\fvargsof{\psi}{\asid}$.
\end{definition}
Intuitively, given a predicate $p \in \allpreds$,
the set $\fvargsof{\psi}{\asid}(p)$ denotes the formal parameters of
$p$ that, in every unfolding of $\psi$, will always be substituted by
variables occurring freely in $\psi$. It is easy to check that
$\fvargsof{\psi}{\asid}$ can be computed in polynomial time
w.r.t.\ $\size{\psi} + \size{\asid}$, using a straightforward greatest
fixpoint algorithm. The algorithm starts with a function mapping every
predicate $p$ of arity $n$ to $\interv{1}{n}$ and repeatedly removes
elements from the sets $\fvargs(p)$ to ensure that the above
conditions hold. In the worst case, we may have eventually $\fvargs(p)
= \emptyset$ for all predicate symbols $p$.

\comment[ri]{abbreviations: P for progressing, \C for \connected, \R for \restricted}
\begin{definition}\label{def:restricted}
Let $\fvargs$ be an \rpositional function, and $V$ be a set of
variables. A formula $\phi$ is {\em \restricted} (\R) w.r.t.\ $V$ iff the
following hold: \begin{compactenum}
\item\label{it1:restricted}{for every disequation $y \not \iseq z$ in
  $\phi$, we have $\{ y,z\} \cap V \not = \emptyset$, and}
\item\label{it2:restricted}{$\vargs{\phi} \subseteq V$.}
\end{compactenum}
A rule $p(x_1,\ldots,x_n) \Leftarrow x \mapsto (y_1,\dots,y_\rank) *
\genrest$ is: \begin{compactitem}
%
\item \emph{\connected} (\C) iff for every atom $q(z_1,\dots,z_m)$
  occurring in $\genrest$, we have $z_1 \in \vargs{p(x_1,\dots,x_n)}
  \cup \{y_1, \ldots, y_\rank\}$, 
\item \emph{\restricted} (\R) iff $\genrest$ is \restricted
  w.r.t. $\vargs{p(x_1,\dots,x_n)}$.
\end{compactitem}
An SID $\asid$ is \emph{P}  (resp. \emph{\C},
\emph{\R}) \emph{for a formula} $\phi$ iff every rule in
$\bigcup_{p \in \preds{\phi}}\asid(p)$ is P
(resp. \C, \R). 
An SID $\asid$ is \emph{\C} (\emph{\R}) \emph{for a
  formula} $\phi$ iff every rule in $\bigcup_{p \in
  \preds{\phi}}\asid(p)$ is \C (\R). An entailment
problem $\pb{\phi}{\psi}{\asid}$ is \emph{left}- (\emph{right}-)
\emph{\C}, (\emph{\R}) iff $\asid$ is \C
(\R) for $\phi$ ($\psi$), where $\lambda$ is considered to be
$\fvargsof{\phi}{\asid}$ ($\fvargsof{\psi}{\asid}$). An entailment
problem is \emph{\C} (\emph{\R}) iff it is both left-
and right-\C (\R). 
\end{definition}
\noindent
The class of progressing, \connected and \restricted entailment
problems has been shown to be a generalization of the class of
progressing, connected and left-established problems, because the
latter can be reduced to the former by a many-one reduction
\cite[Theorem 13]{EIP21a} that runs in time $\size{\aprob} \cdot
2^{\bigO(\width{\aprob}^2)}$ on input $\aprob$ (Figure
\ref{fig:reductions}) and preserves the problem's width
asymptotically.

\begin{figure}[thb]
  \vspace*{-1.5\baselineskip}
\caption{Many-one Reductions between Decidable Entailment Problems}
\label{fig:reductions}
\centerline{\input{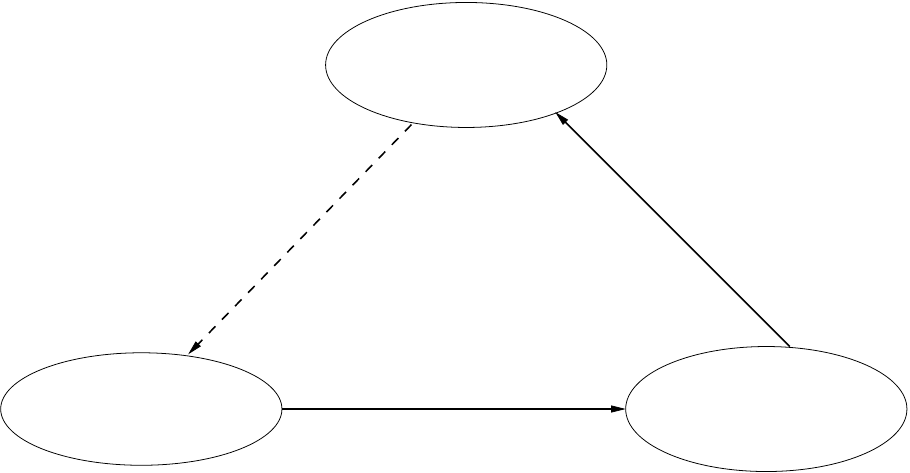_t}}
  \vspace*{-.5\baselineskip}
\end{figure}

In the rest of this paper we close the loop by defining a syntactic
extension of \progressing, \connected and \restricted entailment
problems and by showing that this extension can be reduced to the
class of progressing, connected and left-established entailment
problems by a many-one reduction. The new fragment is defined as
follows:
\begin{definition}\label{def:safe}
An entailment problem $\pb{\phi}{\psi}{\asid}$ is {\em \safe} if, for
$\fvargs \isdef \fvargsof{\psi}{\asid}$, the following
hold: \begin{compactenum}
\item\label{safe:progress} every rule in $\asid$ is progressing,
\item\label{safe:init} $\psi$ is \restricted w.r.t.\ $\fv{\phi}$,
\item\label{safe:ind} all the rules from $\bigcup_{p \in \preds{\psi}}
  \asid(p)$ are \connected and \restricted.
\end{compactenum}
\end{definition}
Note that there is no condition on the formula $\phi$, or on the rules
defining the predicates occurring only in $\phi$, other than the
progress condition. The conditions in Definition
\ref{def:safe} ensure that all the disequations occurring in any
unfolding of $\psi$ involve at least one variable that is free in
$\phi$. Further, the heaps of the model of $\psi$ must be
\emph{forests}, i.e.\ unions of trees, the roots of which are
associated with the first argument of the predicate atoms in $\psi$ or
to free variables from $\phi$.

We refer the reader to Figure \ref{fig:reductions} for a general
picture of the entailment problems considered so far and of the
many-one reductions between them, where the reduction corresponding to
the dashed arrow is the concern of the next section. Importantly,
since all reductions are many-one, taking time polynomial in the size
and exponential in the width of the input problem, while preserving
its width asymptotically, the three classes from Figure
\ref{fig:reductions} can be unified into a single (2EXPTIME-complete)
class of entailments.




\section{Reducing Safe to Established Entailments}
\label{sect:reduction}

\comment[ri]{added some intuition} 

In a model of a safe SID (Definition \ref{def:safe}), the existential
variables introduced by the replacement of predicate atoms with
corresponding rule bodies are not required to be allocated. This is
because safe SIDs are more liberal than established SIDs and allow
heap structures with an unbounded number of dangling pointers. As
observed in \cite{EIP21a}, checking the validity of an entailment
(w.r.t a restricted SID) can be done by considering only those
structures in which the dangling pointers point to pairwise distinct
locations. The main idea of the hereby reduction of safe to
established entailment problems is that any such structure can be
extended by allocating all dangling pointers separately and, moreover,
the extended structures can be defined by an established SID.

In what follows, we fix an arbitrary instance $\aprob =
\pb{\phi}{\psi}{\asid}$ of the \safe entailment problem (Definition
\ref{def:safe}) and denote by $\fvargs \isdef \fvargsof{\psi}{\asid}$
the fv-profile of $(\psi,\asid)$ (Definition
\ref{def:fv-profile}). Let $\mvec{\fvar} \isdef
(\fvar_1,\dots,\fvar_\nfv)$ be the vector of free variables from
$\phi$ and $\psi$, where the order of variables is not important and
assume w.l.o.g.\ that $\nfv > 0$. \comment[ri]{removed: Note that, by
  definition, the right-hand side of every rule in $\asid$ is a
  symbolic heap (i.e.\ the rules in $\asid$ contain no disjunction).}
Let $\leftpred \isdef \preds{\phi}$ and $\rightpred \isdef
\preds{\psi}$ be the sets of predicate symbols that depend on the
predicate symbols occurring in the left- and right-hand side of the
entailment, respectively. We assume that $\phi$ and $\psi$ contain no
points-to atoms and that $\leftpred \cap \rightpred =
\emptyset$. Again, these assumptions lose no generality, because a
points-to atom $u \mapsto (v_1,\dots,v_\rank)$ can be replaced by a
predicate atom $p(u,v_1,\dots,v_\rank)$, where $p$ is a fresh
predicate symbol associated with the rule $p(x,y_1,\dots,y_\rank)
\Leftarrow x \mapsto (y_1,\dots,y_\rank)$. Moreover the condition
$\leftpred \cap \rightpred \neq \emptyset$ may be enforced by
considering two copies of each predicate, for the left-hand side and
for the right-hand side, respectively. Finally, we assume that every
rule contains exactly $\maxv$ existential variables, for some fixed
$\maxv\in {\Bbb N}$; this condition can be enforced by adding dummy
literals $x \iseq x$ if needed.

We describe a reduction of $\aprob$ to an equivalent progressing,
connected, and left-established entailment problem. The reduction will
\comment[ri]{changed} extend heaps, by adding  $\nfv+\maxv$ record
fields. We shall therefore often consider heaps and points-to atoms
having $\rank+\nfv+\maxv$ record fields, where the formal definitions
are similar to those given previously. Usually such formul{\ae} and heaps
will be written with a prime. These additional record fields will be
used to ensure that the constructed system is connected, by adding all
the existential variables of a given rule (as well as the variables in
$\fvar_1,\dots,\fvar_\nfv$) into the image of the location allocated
by the considered rule. Furthermore, the left-establishment condition
will be enforced by adding predicates and rules in order to allocate
all the locations that correspond to existential quantifiers and that
are not already allocated, making such locations point to a dummy
vector $\mvec{\bot} \isdef (\bot, \ldots, \bot)$, of length
$\rank+\nfv + \maxv$, where $\bot$ is the special constant denoting
empty heap entries. To this aim, we shall use a predicate symbol
$\botp$ associated with the rule $\botp(x) \Leftarrow x \mapsto
\mvec{\bot}$, where $\mvec{\bot} = (\bot,\dots,\bot)$. \comment[me]{maybe this equality is redundant with definition of $\mvec{\bot}$ above. It looks like there is no difference between $\bot$ and $\mvec{\bot}$, maybe change the font?}\comment[np]{done} Note that
allocating all these locations will entail (by definition of the
separating conjunction) that they are distinct, thus the addition of
such predicates and rules will reduce the number of satisfiable
unfoldings. However, due to the restrictions on the use of
\comment[ri]{added footnote} disequations\footnote{Point
  (\ref{it1:restricted}) of Definition \ref{def:restricted} in
  conjunction with point (\ref{safe:init}) of Definition
  \ref{def:safe}.}, we shall see that this does not change the status
of the entailment problem.

\begin{definition}\label{def:struct-img}
For any total function $\gamma: \Loc \rightarrow \Loc$ and any tuple
$\mvec{\ell} = \tuple{\ell_1,\dots,\ell_n} \in \Loc^n$, we denote by
$\gamma(\mvec{\ell})$ the tuple
$\tuple{\gamma(\ell_1),\dots,\gamma(\ell_n)}$. If $\astore$ is a
store, then $\apl{\gamma}{\astore}$ denotes the store with domain
$\dom{\astore}$, such that $\apl{\gamma}{\astore}(x) \isdef
\gamma(\astore(x))$, for all $x \in \dom{\astore}$. Consider a heap
$\aheap$ such that for all $\ell \neq \ell' \in \dom{\aheap}$, we have
$\gamma(\ell) \neq \gamma(\ell')$. Then $\apl{\gamma}{\aheap}$ denotes
the heap with domain $\dom{\apl{\gamma}{\aheap}} = \{\gamma(\ell) \mid
\ell \in \dom{\aheap}\}$, such that
$\apl{\gamma}{\aheap}(\gamma(\ell)) \isdef \gamma(\aheap(\ell))$, for
all $\ell \in \dom{\aheap}$.
\end{definition}

\putInAppendix{\section{Additional Material for Section \ref{sect:reduction}}}{}

The following lemma identifies conditions ensuring that the
application of a mapping to a structure (Definition
\ref{def:struct-img}) preserves the truth value of a formula.
\begin{lemma}
\label{lem:gamma}
Given a set of variables $V$, let $\alpha$ be a formula that is
\restricted w.r.t. $V$, such that $\preds{\alpha}\subseteq \rightpred$
and let $(\astore,\aheap)$ be an $\asid$-model of $\alpha$. For every
mapping $\gamma: \Loc \rightarrow \Loc$ such that $\gamma(\ell) =
\gamma(\ell') \Rightarrow \ell =\ell'$ holds whenever either $\{
\ell,\ell' \} \subseteq \dom{\aheap}$ or $\{\ell,\ell' \} \cap
\astore(V) \not = \emptyset$, we have
$(\apl{\gamma}{\astore},\apl{\gamma}{\aheap}) \models_{\asid} \alpha$.
\end{lemma}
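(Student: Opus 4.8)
The plan is to proceed by structural induction on $\alpha$, following the inductive definition of the satisfaction relation $\models_\asid$. Since $\alpha$ is \restricted w.r.t.\ $V$ with $\preds{\alpha} \subseteq \rightpred$, and by Proposition~\ref{lemma:unfolding} any unfolding of a symbolic heap is a symbolic heap, the cases to handle are: equalities $x \iseq x'$, disequalities $x \not\iseq x'$, points-to atoms $x \mapsto (y_1,\dots,y_\rank)$, predicate atoms $p(x_1,\dots,x_n)$, separating conjunctions $\phi_1 * \phi_2$, and existential quantification (the latter is vacuous for quantifier-free symbolic heaps obtained as rule bodies, but appears when we unfold predicate atoms, so I would phrase the induction so it also covers the unfoldings used inside the predicate-atom case). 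The key observation driving every case is that $\apl{\gamma}{\astore}(x) = \gamma(\astore(x))$ by definition, so the store composed with $\gamma$ just evaluates each term through $\gamma$, and $\apl{\gamma}{\aheap}$ is well-defined precisely because the injectivity hypothesis on $\gamma$ holds on $\dom{\aheap}$.

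First I would dispatch the easy cases. For an equality $x \iseq x'$: the model has empty heap and $\astore(x) = \astore(x')$, hence $\gamma(\astore(x)) = \gamma(\astore(x'))$, so $(\apl{\gamma}{\astore}, \apl{\gamma}{\aheap})$ (still empty heap) satisfies it. For a disequality $x \not\iseq x'$: here the \restricted-ness of $\alpha$ w.r.t.\ $V$ is essential — point~(\ref{it1:restricted}) of Definition~\ref{def:restricted} gives $\{x,x'\} \cap V \neq \emptyset$, so at least one of $\astore(x), \astore(x')$ lies in $\astore(V)$; since $\astore(x) \neq \astore(x')$ and $\gamma$ is injective on any pair meeting $\astore(V)$, we get $\gamma(\astore(x)) \neq \gamma(\astore(x'))$. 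For a points-to atom $x \mapsto (y_1,\dots,y_\rank)$: $\dom{\aheap} = \{\astore(x)\}$ is a singleton so $\gamma$ is trivially injective on it; then $\dom{\apl{\gamma}{\aheap}} = \{\gamma(\astore(x))\}$ and $\apl{\gamma}{\aheap}(\gamma(\astore(x))) = \gamma(\aheap(\astore(x))) = \gamma(\tuple{\astore(y_1),\dots,\astore(y_\rank)}) = \tuple{\apl{\gamma}{\astore}(y_1),\dots,\apl{\gamma}{\astore}(y_\rank)}$ as required. The $\exists$ case is routine: given a witness $\ell$ for $\astore[x \leftarrow \ell]$, note $x \notin V$ may be assumed (or handle it) and use the induction hypothesis with the witness $\gamma(\ell)$; one must check the injectivity precondition is inherited, which it is since the relevant domains and $V$-image only grow in a controlled way under the store extension.

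The two cases requiring care are separating conjunction and predicate atoms, and the predicate-atom case is the main obstacle. For $\phi_1 * \phi_2$: split $\aheap = \aheap_1 \dunion \aheap_2$ with $(\astore, \aheap_i) \models_\asid \phi_i$. The subtlety is that $\gamma$ need not be injective on $\dom{\aheap_i}$ alone in a way that keeps $\apl{\gamma}{\aheap_1}$ and $\apl{\gamma}{\aheap_2}$ disjoint — but in fact $\dom{\aheap_1}, \dom{\aheap_2} \subseteq \dom{\aheap}$ and $\gamma$ \emph{is} injective on $\dom{\aheap}$, so $\apl{\gamma}{\aheap_1}$ and $\apl{\gamma}{\aheap_2}$ have disjoint domains and $\apl{\gamma}{\aheap} = \apl{\gamma}{\aheap_1} \dunion \apl{\gamma}{\aheap_2}$; then invoke the induction hypothesis on each $\phi_i$ (each is still \restricted w.r.t.\ $V$ and has $\preds{\phi_i} \subseteq \rightpred$, and the injectivity precondition is inherited since $\dom{\aheap_i} \subseteq \dom{\aheap}$). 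For a predicate atom $p(x_1,\dots,x_n)$ with $\preds{\alpha} \subseteq \rightpred$, so $p \in \rightpred$: by semantics there is a rule $p(x_1,\dots,x_n) \Leftarrow_\asid \genbod$ and a store $\astore_e$ agreeing with $\astore$ on the parameters with $(\astore_e, \aheap) \models_\asid \genbod$. I want to apply the induction hypothesis to $\genbod$, which forces me to show $\genbod$ is \restricted w.r.t.\ \emph{some} set $V'$ for which the injectivity precondition still holds. The natural choice is $V' = \{x_1,\dots,x_n\} \cup \vargs{p(x_1,\dots,x_n)}$-type considerations: since $\asid$ is \restricted for $\psi$, the rule body $\genbod$ is \restricted w.r.t.\ $\vargf(p(x_1,\dots,x_n))$, i.e.\ its disequations involve a parameter in $\fvargs(p)$, and $\vargf(\genbod) \subseteq \fvargs(p)$ as formal parameters. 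The crucial point — and this is where the fv-profile machinery pays off — is that these "$V$-relevant" parameters $x_i$ with $i \in \fvargs(p)$ are exactly the ones that, tracing back through the unfolding from $\psi$, get substituted by free variables of $\psi$, hence lie in $V$ under the original $\astore$; because $\astore_e$ agrees with $\astore$ on $\{x_1,\dots,x_n\}$, for such parameters $\astore_e(x_i) = \astore(x_i) \in \astore(V)$, so the injectivity precondition transfers. I would make this precise by strengthening the induction statement to carry an explicit finite set $V_\alpha \supseteq \vargf(\alpha)$ of variables all of whose $\astore$-images either lie in $\astore(V)$ or are irrelevant, and verifying that on unfolding a predicate atom this invariant is preserved — this bookkeeping, reconciling the \restricted-ness of rule bodies (stated relative to $\vargf(p(x_1,\dots,x_n))$) with the \restricted-ness of $\alpha$ (stated relative to $V$), is the real content of the proof; the rest is a mechanical structural recursion.
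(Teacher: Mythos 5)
Your proof is correct and follows essentially the same route as the paper: structural induction on the satisfaction relation, with restrictedness supplying the $V$-membership needed for the disequality case, injectivity on $\dom{\aheap}$ handling disjointness in the $*$ case, and the fv-profile propagating the restriction through rule unfolding in the predicate case. The only difference is that the strengthened induction invariant you propose (carrying an auxiliary set $V_\alpha$) is unnecessary: since $\alpha$ \restricted{} w.r.t.\ $V$ gives $\vargs{p(u_1,\dots,u_n)} \subseteq V$ and the rule body is \restricted{} w.r.t.\ $\vargs{p(x_1,\dots,x_n)}$, the substituted body $\genbod\theta$ is \restricted{} w.r.t.\ the \emph{same} $V$ (this is the paper's Proposition~\ref{prop:restr-V}), so the lemma statement as given already closes under the induction.
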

\optproof{\subsection{Proof of Lemma \ref{lem:gamma}}}
{
We first establish the following:
\begin{proposition}\label{prop:restr-V}
  Let $\fvargs$ be an \rpositional function and let $V$ be a set of
  variables. Consider an atom $p(u_1, \ldots, u_n)$ such that
  $\vargs{p(u_1, \ldots, u_n)} \subseteq V$ and a \restricted rule
  $p(x_1,\ldots,x_n) \Leftarrow \genbod$. Then the formula
  $\genbod\substmult{x_1}{u_1}{x_n}{u_n}$ is \restricted w.r.t.\ $V$.
\end{proposition}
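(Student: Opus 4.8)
The plan is to check the two conditions of Definition \ref{def:restricted} directly for the substituted formula $\genbod\substmult{x_1}{u_1}{x_n}{u_n}$, using the hypotheses that $\genbod$ is \restricted w.r.t.\ $\vargs{p(x_1,\dots,x_n)}$ and that $\vargs{p(u_1,\dots,u_n)} \subseteq V$. Write $\sigma \isdef \substmult{x_1}{u_1}{x_n}{u_n}$ and $W \isdef \vargs{p(x_1,\dots,x_n)} = \{x_i \mid i \in \fvargs(p)\}$; note that $\vargs{p(u_1,\dots,u_n)} = W\sigma$, since $\vargf$ is stable under substitution.

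First I would handle point (\ref{it1:restricted}): a disequation in $\genbod\sigma$ has the form $y\sigma \not\iseq z\sigma$ for some disequation $y \not\iseq z$ in $\genbod$. By \restrictedness of $\genbod$, at least one of $y,z$ — say $y$ — lies in $W$, so $y\sigma \in W\sigma = \vargs{p(u_1,\dots,u_n)} \subseteq V$; hence $\{y\sigma, z\sigma\} \cap V \neq \emptyset$, as required. (Strictly one should also note that $y\sigma$ is still a variable, since $\genbod$ is a symbolic heap so $y$ is a variable, and $\sigma$ maps variables to terms of the rule, but here substitutions are variable-to-variable, so $y\sigma \in \vars$.) Next, point (\ref{it2:restricted}): by stability of $\vargf$ under substitution, $\vargs{\genbod\sigma} = (\vargs{\genbod})\sigma$; by \restrictedness of $\genbod$ we have $\vargs{\genbod} \subseteq W$, hence $\vargs{\genbod\sigma} \subseteq W\sigma = \vargs{p(u_1,\dots,u_n)} \subseteq V$.

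The argument is essentially bookkeeping, and the only point requiring a moment's care is the repeated use of the stability identity $\vargs{\phi\sigma} = (\vargs{\phi})\sigma$ (stated just before Definition \ref{def:fv-profile}) together with the fact that $W\sigma$ is precisely $\vargs{p(u_1,\dots,u_n)}$ — i.e.\ that the \rpositional function $\fvargs$ picks out the same argument positions whether applied to $p(x_1,\dots,x_n)$ or to $p(u_1,\dots,u_n)$, which is immediate from the definition of $\vargf$ since it depends only on the predicate symbol and its argument positions. No induction is needed: the claim is about a single unfolding step applied to one atom, so a direct case split on whether an atom of $\genbod\sigma$ is a disequation or a predicate atom suffices. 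This proposition will then feed into the proof of Lemma \ref{lem:gamma}, presumably to propagate \restrictedness through the recursive unfolding of the model of $\alpha$.
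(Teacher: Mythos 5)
Your proposal is correct and follows essentially the same argument as the paper's proof: both verify the two conditions of Definition~\ref{def:restricted} directly, using the stability identity $\vargs{\genbod\sigma} = (\vargs{\genbod})\sigma$ together with $\vargs{p(x_1,\ldots,x_n)}\sigma = \vargs{p(u_1,\ldots,u_n)} \subseteq V$ for point~(\ref{it2:restricted}), and lifting the disequation condition through the substitution for point~(\ref{it1:restricted}). No meaningful difference in approach.
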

\begin{proof}
Let $\theta = \substmult{x_1}{u_1}{x_n}{u_n}$.  By hypothesis
$\genbod$ is \restricted w.r.t. $\vargs{p(x_1,\dots,x_n)}$, thus we
have $\vargs{\genbod} \subseteq \vargs{p(x_1,\ldots, x_n)}$ and we
deduce
that \[\vargs{\genbod\theta}\ =\ \vargs{\genbod}\theta\ \subseteq\ \vargs{p(x_1,
  \ldots, x_n)}\theta\ =\ \vargs{p(u_1, \ldots, u_n)}\ \subseteq\ V.\]
Consider an atom $y\theta\not\iseq z\theta$ occurring in
$\genbod\theta$. Then necessarily $y\not\iseq z$ occurs in $\genbod$,
hence $\set{y,z} \cap \vargs{p(x_1,\dots,x_n)}\neq \emptyset$ and
$\set{y\theta, z\theta} \cap \vargs{p(u_1, \ldots, u_n)} \neq
\emptyset$. Since $\vargs{p(u_1, \ldots, u_n)} \subseteq V$, we obtain
$\set{y\theta, z\theta} \cap V \neq \emptyset$, as required. \qed
\end{proof}
The proof of Lemma \ref{lem:gamma} is by structural induction on the
definition of the relation $\models_{\asid}$. We distinguish the
following cases:
\begin{compactitem}
\item{If $\alpha = (x \iseq y)$, then $\astore(x) = \astore(y)$ and
  $\aheap = \emptyset$, thus $\apl{\gamma}{\astore}(x) =
  \apl{\gamma}{\astore}(y)$ and $\apl{\gamma}{\aheap} = \emptyset$.
  Therefore, $(\apl{\gamma}{\astore},\apl{\gamma}{\aheap})
  \models_{\asid} x \iseq y$.}
\item{If $\alpha = (x \not \iseq y)$, then $\astore(x)\not =
  \astore(y)$ and $\aheap = \emptyset$, hence $\apl{\gamma}{\aheap} =
  \emptyset$.  Since $\alpha$ is \restricted w.r.t.\ $V$, necessarily
  one of the variables $x$ or $y$ occurs in $V$, hence $\{
  \astore(x),\astore(y) \} \cap \astore(V) \not = \emptyset$. By the
  hypotheses of the lemma this entails that $\apl{\gamma}{\astore}(x)
  \not = \apl{\gamma}{\astore}(y)$.  Thus
  $(\apl{\gamma}{\astore},\apl{\gamma}{\aheap}) \models_{\asid} x \not
  \iseq y$.}
\item{If $\alpha = x \mapsto (y_1,\dots,y_\rank)$ then we have
  $\dom{\aheap} = \{\astore(x)\}$ and $\aheap(\astore(x)) =
  \tuple{\astore(y_1),\dots,\astore(y_n)}$. Therefore
  $\dom{\apl{\gamma}{\aheap}} = \{\gamma(\astore(x))\}$ and
  $\apl{\gamma}{\aheap}(\astore(x)) =
  \apl{\gamma}{\tuple{\astore(y_1),\dots,\astore(y_n)}}$, thus we
  obtain $(\apl{\gamma}{\astore},\apl{\gamma}{\aheap}) \models_{\asid}
  \alpha$.}
\item{If $\alpha = \alpha_1 * \alpha_2$ then there exists
  $\aheap_1,\aheap_2$, such that $\aheap = \aheap_1 \dunion \aheap_2$
  and $(\astore,\aheap_i) \models_{\asid} \alpha_i$, for $i=1,2$.
  Since $\alpha_i$ is \restricted w.r.t.\ $V$ and that
  $\preds{\alpha_i} \subseteq \rightpred$, by the induction
  hypothesis, we obtain that
  $(\apl{\gamma}{\astore},\apl{\gamma}{\aheap_i}) \models_{\asid}
  \alpha_i$, for $i = 1,2$. For all $\ell_i \in \dom{\aheap_i}$ with
  $i = 1,2$, we have $\ell_1,\ell_2 \in \dom{\aheap}$ and $\ell_1 \not
  = \ell_2$, thus by the hypothesis of the lemma $\gamma(\ell_1) \not
  = \gamma(\ell_2)$.  Hence $\dom{\apl{\gamma}{\aheap_1}}$ and
  $\dom{\apl{\gamma}{\aheap_2}}$ are disjoint, we have
  $\apl{\gamma}{\aheap} = \apl{\gamma}{\aheap_1} \dunion
  \apl{\gamma}{\aheap_2}$ and therefore
  $(\apl{\gamma}{\astore},\apl{\gamma}{\aheap}) \models_{\asid}
  \alpha$.}
\item{If $\alpha = p(u_1,\dots,u_n)$ then $\asid$ contains a rule
  $p(x_1,\dots,x_n) \Leftarrow \genbod$ and $(\astorep,\aheap)
  \models_{\asid} \genbod\theta$, for some extension $\astorep$ of
  $\astore$, with $\theta \isdef
  \substinterv{x_i}{u_i}{i}{\interv{1}{n}}$.  Since $\preds{\alpha}
  \subseteq \rightpred$, the rule must be \restricted by Condition
  \ref{safe:ind} in Definition \ref{def:safe}.  By Proposition
  \ref{prop:restr-V}, we deduce that $\genbod\theta$ is \restricted
  w.r.t.\ $V$.  Moreover, we have $\preds{\genbod\theta} =
  \preds{\genbod} \subseteq \rightpred$, thus by the induction
  hypothesis $(\apl{\gamma}{\astorep},\apl{\gamma}{\aheap})
  \models_{\asid} \genbod\theta$, and therefore
  $(\apl{\gamma}{\astore},\apl{\gamma}{\aheap}) \models_{\asid}
  \alpha$, because $\apl{\gamma}{\astorep}$ is an extension of
  $\apl{\gamma}{\astore}$.} \qed
\end{compactitem}}

If $\gamma$ is, moreover, injective, then the result of Lemma
\ref{lem:gamma} holds for any formula:
\begin{lemma}
\label{lem:gamma_inj}
Let $\alpha$ be a formula and let $(\astore,\aheap)$ be an
$\asid$-model of $\alpha$.  For every injective mapping $\gamma: \Loc
\rightarrow \Loc$ we have
$(\apl{\gamma}{\astore},\apl{\gamma}{\aheap}) \models_{\asid} \alpha$.
\end{lemma}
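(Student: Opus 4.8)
The plan is to mirror the structural induction used in the proof of Lemma \ref{lem:gamma}, observing that global injectivity of $\gamma$ makes vacuous every side condition that was previously extracted from the restrictedness hypothesis. Concretely, I would proceed by structural induction on the derivation of $(\astore,\aheap) \models_\asid \alpha$, with the same case split on the shape of $\alpha$. For the equational and points-to cases, the argument is literally the one in Lemma \ref{lem:gamma} and does not even use injectivity. For the disequality case $\alpha = (x \not\iseq y)$ we have $\aheap = \emptyset$ and $\astore(x) \neq \astore(y)$; since $\gamma$ is injective, $\apl{\gamma}{\astore}(x) = \gamma(\astore(x)) \neq \gamma(\astore(y)) = \apl{\gamma}{\astore}(y)$ immediately, so no appeal to a set $V$ is needed.

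For $\alpha = \alpha_1 * \alpha_2$ I would split $\aheap = \aheap_1 \dunion \aheap_2$ with $(\astore,\aheap_i) \models_\asid \alpha_i$, apply the induction hypothesis to each $\alpha_i$, and note that if $\ell_1 \in \dom{\aheap_1}$ and $\ell_2 \in \dom{\aheap_2}$ with $\ell_1 \neq \ell_2$ then $\gamma(\ell_1) \neq \gamma(\ell_2)$ by injectivity, so $\apl{\gamma}{\aheap_1}$ and $\apl{\gamma}{\aheap_2}$ are disjoint, $\apl{\gamma}{\aheap} = \apl{\gamma}{\aheap_1} \dunion \apl{\gamma}{\aheap_2}$, and the result follows. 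The disjunction case is immediate from the induction hypothesis, and the existential case $\alpha = \exists x~.~\phi$ follows by applying the induction hypothesis to $\phi$ at the store $\astore[x \leftarrow \ell]$ and using that $\apl{\gamma}{\astore[x\leftarrow\ell]}$ agrees with $(\apl{\gamma}{\astore})[x \leftarrow \gamma(\ell)]$. For the predicate case $\alpha = p(u_1,\dots,u_n)$, the model provides a rule $p(x_1,\dots,x_n) \Leftarrow \genbod$ and an extension $\astorep$ of $\astore$ with $(\astorep,\aheap) \models_\asid \genbod\theta$, where $\theta \isdef \substinterv{x_i}{u_i}{i}{\interv{1}{n}}$; here I would simply apply the induction hypothesis to $\genbod\theta$ — crucially, no restrictedness of $\genbod\theta$ and hence no use of Proposition \ref{prop:restr-V} is required — and conclude, using that $\apl{\gamma}{\astorep}$ is an extension of $\apl{\gamma}{\astore}$.

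I expect essentially no obstacle: the content of the lemma is that dropping the \restricted and $\preds{\cdot}\subseteq\rightpred$ hypotheses of Lemma \ref{lem:gamma} is harmless precisely when $\gamma$ is injective, and the single point at which those hypotheses were used — ruling out that $\gamma$ collapses two locations the formula can distinguish — is now trivial. One could alternatively try to obtain the statement as a corollary of Lemma \ref{lem:gamma} with $V = \fv{\alpha}$, but this does not quite go through, since $\alpha$ need not be \restricted w.r.t.\ any set nor satisfy $\preds{\alpha} \subseteq \rightpred$; redoing the (now shorter) induction is the cleaner route.
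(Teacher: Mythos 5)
Your proof is correct and is essentially the paper's own argument made explicit: the paper disposes of this lemma in one line by observing that global injectivity of $\gamma$ supplies, everywhere, the implication $\gamma(\ell)=\gamma(\ell')\Rightarrow\ell=\ell'$ that the induction of Lemma \ref{lem:gamma} needed the restrictedness and $\preds{\cdot}\subseteq\rightpred$ hypotheses to extract at the disequality and predicate cases. Your closing remark that one cannot simply instantiate Lemma \ref{lem:gamma} with $V=\fv{\alpha}$ is also accurate, and rerunning the (now unconditional) induction is indeed the intended reading of the paper's one-line proof.
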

\begin{proof}
  Since the implication $\gamma(\ell) = \gamma(\ell') \Rightarrow \ell
  =\ell'$ holds for all $\ell,\ell'\in\Loc$. \qed
\end{proof}

\subsection{Expansions and Truncations}
\label{subsect:exp}

We introduce a so-called \emph{expansion} relation on structures, as
well as a \emph{truncation} operation on heaps. Intuitively, the
expansion of a structure is a structure with the same store and whose
heap is augmented with new allocated locations (each pointing to
$\vec{\bot}$) and additional record fields, referring in particular to
all the newly added allocated locations. These locations are
introduced to accommodate all the existential variables of the
predicate-less unfolding of the left-hand side of the entailment (to
ensure that the obtained entailment is left-established).  Conversely,
the truncation of a heap is the heap obtained by removing these extra
locations. We also introduce the notion of a $\gamma$-expansion which
is a structure whose image by $\gamma$ is an expansion.

\comment[ri]{added this paragraph only once, it should be enough to
  remember who $\maxv$ and $\nfv$ are below} We recall that,
throughout this and the next sections (\S\ref{sect:cons} and
\S\ref{sect:ant}), $\mvec{\fvar} = (\fvar_1, \ldots, \fvar_\nfv)$
denotes the vector of free variables occurring in the problem, which
is assumed to be fixed throughout this section and that $\{
\fvar_1,\dots,\fvar_\nfv, \bot \}\subseteq \dom{\astore}$, for every
store $\astore$ considered here. Moreover, we assume w.l.o.g.\ that
$\fvar_1,\dots,\fvar_\nfv$ do not occur in the considered SID $\asid$
and denote by $\maxv$ the number of existential variables in each rule
of $\asid$. We refer to Figure \ref{fig:expansion} for an illustration
of the definition below: 

\begin{figure}[t!]
  \caption{Heap Expansion and Truncation}
  \label{fig:expansion}
  \centerline{\input{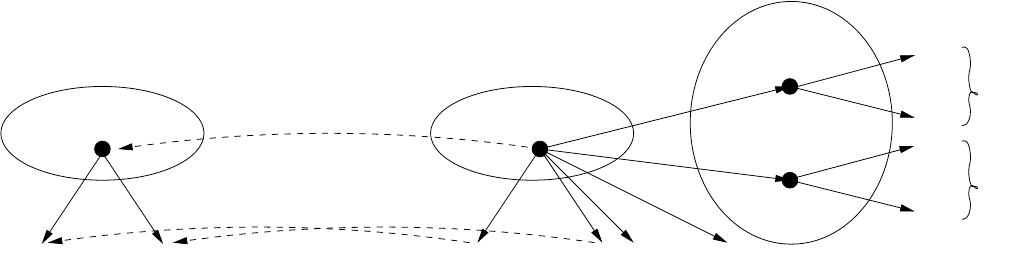_t}}
  \vspace*{-\baselineskip}
\end{figure}

\begin{definition}\label{def:decor_heap}
  Let $\gamma: \Loc \rightarrow \Loc$ be a total mapping.  A structure
  $(\astore,\aheap')$ is a {\em $\gamma$-expansion} (or simply an {\em
    expansion} if $\gamma = \id$) of some structure
  $(\astore,\aheap)$, denoted by $(\astore,\aheap') \expands{\gamma}
  (\astore,\aheap)$, if \comment[ri]{added, makes things more clear
    for later} $\aheap : \Loc \rightarrow \Loc^\rank$, $\aheap' : \Loc
  \rightarrow \Loc^{\rank+\maxv+\nfv}$ and there exist two disjoint
  heaps, $\mpart{\aheap'}$ and $\bpart{\aheap'}$, such that $\aheap' =
  \mpart{\aheap'} \dunion \bpart{\aheap'}$ and the following
  hold: \begin{compactenum}
  \item{for all $\ell_1,\ell_2 \in \dom{\mpart{\aheap'}}$, if
    $\gamma(\ell_1) = \gamma(\ell_2)$ then $\ell_1 =
    \ell_2$,\label{it:decor:inj}}
  \item{$\gamma(\dom{\mpart{\aheap'}}) =
    \dom{\aheap}$,\label{it:decor:dom}}
  \item{for each $\ell\in \dom{\mpart{\aheap'}}$, we have
    $\aheap'(\ell) =
    \tuple{\mvec{a},\astore(\mvec{\fvar}),b_1^{\ell},\dots,b_{\maxv}^{\ell}}$,
    for some locations $b_1^{\ell}, \ldots, b_\maxv^{\ell} \in \Loc$ and
    $\gamma(\mvec{a}) = \aheap(\gamma(\ell))$,\label{it:decor:img} }
  \item{for each $\ell \in \dom{\bpart{\aheap'}}$, we have
    $\aheap'(\ell) = \mvec{\bots}$ and there exists a location
    $\ell'\in \dom{\mpart{\aheap'}}$ such that
    $\mpart{\aheap'}(\ell')$ is of the form
    $\tuple{\mvec{a},\mvec{\ell},b_1^{\ell'},\dots,b_{\maxv}^{\ell'}}$
    where $\vec{\ell}$ is a tuple of locations and $\ell =
    b_i^{\ell'}$, for some $i \in \interv{1}{\maxv}$.  The element
    $\ell'$ is called the \emph{\gconnection of $\ell$ in $\aheap'$}
    and is denoted by $\gcon{\ell}{\aheap'}$.\footnote{Note that
      $\ell'$ does not depend on $\gamma$, and if several such
      locations exist, then one is chosen arbitrarily.}
    \label{it:decor:link}}
  \end{compactenum}
\end{definition}
Let $(\astore,\aheap')$ be a $\gamma$-expansion of $(\astore,\aheap)$
and let $\ell \in \dom{\mpart{\aheap'}}$ be a location. Since $\nfv
>0$ and for all $i \in \interv{1}{\nfv}$, $\astore(\fvar_i)$ occurs in
$\aheap'(\ell)$, and since we assume that $\astore(\fvar_i) \not
=\bots = \astore(\bot)$ for every $i \in \interv{1}{\nfv}$,
necessarily $\mpart{\aheap'}(\ell) \neq \mvec{\bots}$.  This entails
that the decomposition $\aheap' = \mpart{\aheap'} \dunion
\bpart{\aheap'}$ is unique: $\mpart{\aheap'}$ and $\bpart{\aheap'}$
are the restrictions of $\aheap'$ to the locations $\ell$ in
$\dom{\aheap'}$ such that $\aheap'(\ell) \not = \mvec{\bots}$ and
$\aheap'(\ell) = \mvec{\bots}$, respectively. In the following, we
shall thus freely use the notations $\bpart{\aheap'}$ and
$\mpart{\aheap'}$, for arbitrary heaps $\aheap'$.

\begin{definition}\label{def:trunc}
  Given a heap $\aheap'$, we denote by $\trunc{\aheap'}$ the heap
  $\aheap$ defined as follows: $\dom{\aheap} \isdef \dom{\aheap'}
  \setminus \{\ell\in \dom{\aheap'} \mid \aheap'(\ell) =
    \mvec{\bots}\}$ and for all $\ell \in \dom{\aheap}$, if
  $\aheap'(\ell) = (\ell_1, \ldots, \ell_{\rank+\nfv+\maxv})$, then
  $\aheap(\ell) \isdef (\ell_1, \ldots, \ell_\rank)$.
\end{definition}
\comment[ri]{added} Note that, if $\aheap = \trunc{\aheap'}$ then
$\aheap : \Loc \rightarrow \Loc^\rank$ and $\aheap' : \Loc \rightarrow
\Loc^{\rank+\maxv+\nfv}$ are heaps of different out-degrees. In the
following, we silently assume this fact, to avoid cluttering the
notation by explicitly specifying the out-degree of a heap.

\begin{example}
  Assume that $\Loc = {\Bbb N}$, $\nfv = \maxv = 1$.
  Let $\astore$ be a store such that $\astore(\fvar_1) = 0$. We consider:
  \begin{eqnarray*}
    \aheap & \isdef & \{ \tuple{1,2},\tuple{2,2} \},\\
    \aheap'_1 & \isdef & \{ \tuple{1,(2,0,1)}, \tuple{2,(2,0,3)}, \tuple{3,(\bot,\bot,\bot)} \},\\
    \aheap'_2 & \isdef & \{ \tuple{1,(3,0,1)}, \tuple{2,(4,0,3)}, \tuple{3,(\bot,\bot,\bot)} \}.
  \end{eqnarray*}
 We have $(\astore, \aheap'_1) \expands{\id} (\astore, \aheap)$ and
 $(\astore, \aheap'_2) \expands{\gamma} (\astore, \aheap)$, with
 $\gamma \isdef \{ \tuple{1,1}, \tuple{2,2}, \tuple{3,2}, \tuple{4,2}
 \}$.  Also, $\trunc{\aheap_1'} = \{ \tuple{1,2}, \tuple{2,2} \} =
 \aheap$ and $\trunc{\aheap_2'} = \{ \tuple{1,3}, \tuple{2,4} \}$.
 Note that $\aheap$ has out-degree $\rank = 1$, whereas $\aheap_1'$
 and $\aheap_2'$ have out-degree $3$.  \hfill$\blacksquare$
\end{example}
\putInAppendix{\section{Additional Material for Section \ref{subsect:exp}}}
{
  \begin{lemma}\label{prop:truncemp}
    Consider a total mapping $\gamma: \Loc \rightarrow \Loc$, a store
    $\astore$ and two heaps $\aheap$ and $\aheap'$, such that $(\astore,
    \aheap')\expands{\gamma} (\astore, \aheap)$. We have $\aheap' =
    \emptyset$ if and only if $\aheap = \emptyset$.
  \end{lemma}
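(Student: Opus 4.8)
The plan is to read off both implications directly from the four clauses of Definition~\ref{def:decor_heap}. Recall that $(\astore,\aheap') \expands{\gamma} (\astore,\aheap)$ supplies a decomposition $\aheap' = \mpart{\aheap'} \dunion \bpart{\aheap'}$ into disjoint heaps satisfying clauses (\ref{it:decor:inj})--(\ref{it:decor:link}); in particular clause (\ref{it:decor:dom}) states $\gamma(\dom{\mpart{\aheap'}}) = \dom{\aheap}$, and clause (\ref{it:decor:link}) attaches to every location of $\bpart{\aheap'}$ a \gconnection lying in $\dom{\mpart{\aheap'}}$.

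For the direction $\aheap' = \emptyset \Rightarrow \aheap = \emptyset$ I would argue: if $\aheap' = \emptyset$ then $\mpart{\aheap'} = \emptyset$, so clause (\ref{it:decor:dom}) gives $\dom{\aheap} = \gamma(\emptyset) = \emptyset$, hence $\aheap = \emptyset$. This direction is immediate.

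For the converse $\aheap = \emptyset \Rightarrow \aheap' = \emptyset$: from $\dom{\aheap} = \emptyset$ and clause (\ref{it:decor:dom}) we obtain $\gamma(\dom{\mpart{\aheap'}}) = \emptyset$; since $\gamma$ is a total function, the image of any nonempty set under $\gamma$ is nonempty, so $\dom{\mpart{\aheap'}} = \emptyset$, i.e.\ $\mpart{\aheap'} = \emptyset$. The remaining step — which is the only one needing a little care — is to show $\bpart{\aheap'} = \emptyset$ as well: by clause (\ref{it:decor:link}), every $\ell \in \dom{\bpart{\aheap'}}$ would have a \gconnection $\gcon{\ell}{\aheap'} \in \dom{\mpart{\aheap'}}$, which is impossible once $\dom{\mpart{\aheap'}} = \emptyset$; hence $\dom{\bpart{\aheap'}} = \emptyset$. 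Combining the two, $\aheap' = \mpart{\aheap'} \dunion \bpart{\aheap'} = \emptyset$.

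I do not anticipate any genuine obstacle here: the statement is essentially bookkeeping over Definition~\ref{def:decor_heap}. The one subtlety is that clauses (\ref{it:decor:dom}) and (\ref{it:decor:link}) are \emph{both} needed for the converse, since (\ref{it:decor:dom}) alone controls only the ``main'' part $\mpart{\aheap'}$ of $\aheap'$ and says nothing about the ``auxiliary'' part $\bpart{\aheap'}$, whose emptiness has to be inferred from the fact that auxiliary locations cannot exist without a main location to be connected to.
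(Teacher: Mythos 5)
Your proof is correct and follows exactly the same route as the paper's: both directions are read off from Conditions (\ref{it:decor:dom}) and (\ref{it:decor:link}) of Definition~\ref{def:decor_heap}, with the emptiness of $\bpart{\aheap'}$ inferred from the nonexistence of a \gconnection once $\mpart{\aheap'}$ is empty. Your version merely spells out a couple of steps (e.g.\ that a total map sends nonempty sets to nonempty sets) that the paper leaves implicit.
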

  \begin{proof}
    Since $(\astore, \aheap')\expands{\gamma} (\astore, \aheap)$,
    following the notations of Definition \ref{def:decor_heap},
    $\aheap'$ is of the form $\mpart{\aheap'} \uplus
    \bpart{\aheap'}$. First assume $\aheap' = \emptyset$. Then
    necessarily ${\mpart{\aheap'}} = \emptyset$ and by Condition
    \ref{it:decor:dom} of Definition \ref{def:decor_heap}, we deduce
    that ${\aheap} = \emptyset$. Conversely, if $\aheap = \emptyset$,
    then by Condition \ref{it:decor:dom} of Definition
    \ref{def:decor_heap} we must have $\mpart{\aheap'} = \emptyset$
    and by Condition \ref{it:decor:link}, we also have
    $\bpart{\aheap'} = \emptyset$. \qed
  \end{proof}
  
  \begin{lemma}\label{prop:expand-dunion}
    Consider a store $\astore$, heaps $\aheap_1, \aheap_1', \aheap_2,
    \aheap_2'$ and total mappings $\gamma, \gamma_1, \gamma_2: \Loc
    \rightarrow \Loc$, such that the following
    hold:\begin{compactenum}
    \item\label{it1:expand-union} $\dom{\aheap_1'}\cap \dom{\aheap_{2}'}
      = \emptyset$ and $\dom{\aheap_1}\cap \dom{\aheap_2} = \emptyset$.
    \item\label{it2:expand-union} $(\astore, \aheap_i')
      \expands{\gamma_i} (\astore, \aheap_i)$, for $i = 1,2$,
    \item\label{it3:expand-union} for all $\ell \in \locs{\aheap_i'}$,
      $\gamma(\ell) = \gamma_i(\ell)$, for $i = 1,2$.    
    \end{compactenum}
    By letting $\aheap'\isdef \aheap_1'\dunion \aheap_2'$ and
    $\aheap\isdef \aheap_1\dunion \aheap_2$, we have
    $\aheap'\expands{\gamma}\aheap$.
  \end{lemma}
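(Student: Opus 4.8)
The plan is to check, one by one, the four conditions of Definition~\ref{def:decor_heap} that witness $(\astore,\aheap')\expands{\gamma}(\astore,\aheap)$, reusing for each the decomposition guaranteed for the two smaller expansions. The preliminary step is to pin down the $\mpart{\cdot}/\bpart{\cdot}$ split of $\aheap'$: by hypothesis~\ref{it1:expand-union} every $\ell\in\dom{\aheap'}$ lies in exactly one $\dom{\aheap_i'}$ and then $\aheap'(\ell)=\aheap_i'(\ell)$, and since this split is determined solely by whether the record equals $\mvec{\bots}$, we get $\mpart{\aheap'}=\mpart{\aheap_1'}\dunion\mpart{\aheap_2'}$ and $\bpart{\aheap'}=\bpart{\aheap_1'}\dunion\bpart{\aheap_2'}$ (both unions defined, being restrictions of the disjoint $\aheap_1',\aheap_2'$), while $\aheap=\aheap_1\dunion\aheap_2$ is defined by hypothesis; the out-degrees are inherited from the components.

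The recurring tool is hypothesis~\ref{it3:expand-union}: whenever a location belongs to $\locs{\aheap_i'}$ — in particular when it lies in $\dom{\mpart{\aheap_i'}}$, or occurs as a component of some record $\aheap_i'(\ell)$, i.e.\ in $\imgh{\aheap_i'}$ — we may swap $\gamma$ for $\gamma_i$. With this in hand: for Condition~\ref{it:decor:inj}, take $\ell_1,\ell_2\in\dom{\mpart{\aheap'}}$ with $\gamma(\ell_1)=\gamma(\ell_2)$; if both lie in the same $\dom{\mpart{\aheap_i'}}$, replace $\gamma$ by $\gamma_i$ and apply Condition~\ref{it:decor:inj} of the $i$-th expansion; the mixed case $\ell_1\in\dom{\mpart{\aheap_1'}}$, $\ell_2\in\dom{\mpart{\aheap_2'}}$ is excluded, because then $\gamma(\ell_1)=\gamma_1(\ell_1)\in\gamma_1(\dom{\mpart{\aheap_1'}})=\dom{\aheap_1}$ and symmetrically $\gamma(\ell_2)\in\dom{\aheap_2}$ by Condition~\ref{it:decor:dom}, contradicting $\dom{\aheap_1}\cap\dom{\aheap_2}=\emptyset$ (hypothesis~\ref{it1:expand-union}). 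For Condition~\ref{it:decor:dom}, $\gamma(\dom{\mpart{\aheap'}})=\gamma(\dom{\mpart{\aheap_1'}})\cup\gamma(\dom{\mpart{\aheap_2'}})=\gamma_1(\dom{\mpart{\aheap_1'}})\cup\gamma_2(\dom{\mpart{\aheap_2'}})=\dom{\aheap_1}\cup\dom{\aheap_2}=\dom{\aheap}$, using hypothesis~\ref{it3:expand-union} and then Condition~\ref{it:decor:dom} of the two expansions.

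For Condition~\ref{it:decor:img}, take $\ell\in\dom{\mpart{\aheap'}}$, say $\ell\in\dom{\mpart{\aheap_i'}}$; then $\aheap'(\ell)=\aheap_i'(\ell)$ already has the prescribed shape $\tuple{\mvec{a},\astore(\mvec{\fvar}),b_1^{\ell},\dots,b_{\maxv}^{\ell}}$, and it remains to check $\gamma(\mvec{a})=\aheap(\gamma(\ell))$; since $\ell$ and each component of $\mvec{a}$ lie in $\locs{\aheap_i'}$, hypothesis~\ref{it3:expand-union} gives $\gamma(\ell)=\gamma_i(\ell)$ and $\gamma(\mvec{a})=\gamma_i(\mvec{a})$, while $\gamma_i(\ell)\in\dom{\aheap_i}$ (Condition~\ref{it:decor:dom}) forces $\aheap(\gamma_i(\ell))=\aheap_i(\gamma_i(\ell))=\gamma_i(\mvec{a})$ by Condition~\ref{it:decor:img} of the $i$-th expansion. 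Finally, Condition~\ref{it:decor:link} is immediate: for $\ell\in\dom{\bpart{\aheap'}}=\dom{\bpart{\aheap_i'}}$ (with $i$ unique), we have $\aheap'(\ell)=\aheap_i'(\ell)=\mvec{\bots}$, and the \gconnection $\ell'=\gcon{\ell}{\aheap_i'}$ from the $i$-th expansion lies in $\dom{\mpart{\aheap_i'}}\subseteq\dom{\mpart{\aheap'}}$ with $\mpart{\aheap'}(\ell')=\mpart{\aheap_i'}(\ell')$ of the required form, and since the \gconnection does not depend on $\gamma$ it serves as $\gcon{\ell}{\aheap'}$. I expect no genuine difficulty; the only delicate point is to consistently route every step through hypothesis~\ref{it3:expand-union} — applying it to the locations occurring inside records, not merely to domain locations — and to notice that the disjointness $\dom{\aheap_1}\cap\dom{\aheap_2}=\emptyset$ is precisely what rules out the cross case in the injectivity argument.
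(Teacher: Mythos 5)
Your proof is correct and follows essentially the same route as the paper's: you build the witness decomposition $\mpart{\aheap'}=\mpart{\aheap_1'}\dunion\mpart{\aheap_2'}$, $\bpart{\aheap'}=\bpart{\aheap_1'}\dunion\bpart{\aheap_2'}$ and verify the four conditions of Definition~\ref{def:decor_heap} componentwise, using hypothesis~\ref{it3:expand-union} to replace $\gamma$ by $\gamma_i$ on $\locs{\aheap_i'}$ and the disjointness of $\dom{\aheap_1}$ and $\dom{\aheap_2}$ to exclude the cross case in the injectivity check, exactly as in the paper.
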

  \begin{proof}
    By Point (\ref{it2:expand-union}), we have $\aheap_i' =
    \mpart{\aheap_i'}\dunion \bpart{\aheap_{i}'}$, for $i = 1,2$. By
    Point (\ref{it1:expand-union}), $\dom{\mpart{\aheap_1'}} \cap
    \dom{\mpart{\aheap_2'}} = \dom{\bpart{\aheap_1'}} \cap
    \dom{\bpart{\aheap_2'}} = \emptyset$.  Let $\mpart{\aheap'} =
    \mpart{\aheap_1'} \dunion \mpart{\aheap_2'}$ and $\bpart{\aheap'}
    = \bpart{\aheap_1'} \dunion \bpart{\aheap_2'}$. We prove that
    these heaps satisfy the conditions of Definition
    \ref{def:decor_heap}: \begin{compactenum}
    \item Let $\ell_1, \ell_2 \in \dom{\mpart{\aheap'}}$. If
      $\ell_1\in \dom{\mpart{\aheap_1'}}$ and $\ell_2\in
      \dom{\mpart{\aheap_2'}}$, then $\gamma(\ell_1) =
      \gamma_1(\ell_1) \in \dom{\aheap_1}$, and $\gamma(\ell_2) =
      \gamma_2(\ell_2) \in \dom{\aheap_2}$. Since both sets are
      disjoint, it is impossible to have $\gamma(\ell_1) =
      \gamma(\ell_2)$. Else, if $\ell_1, \ell_2 \in
      \dom{\mpart{\aheap_1'}}$, then $\gamma(\ell_1) =
      \gamma_1(\ell_1)$ and $\gamma(\ell_2) = \gamma_1(\ell_2)$, thus
      $\gamma_1(\ell_1) = \gamma_1(\ell_2)$ and $\ell_1 = \ell_2$
      follows from point (\ref{it2:expand-union}). The other cases are
      symmetric.
    \item We compute:
      \begin{eqnarray*}
        \gamma(\dom{\mpart{\aheap'}}) &=& \gamma(\dom{\mpart{\aheap_1'}} \dunion \dom{\mpart{\aheap_{2}'}})\\
        & =& \gamma_1(\dom{\mpart{\aheap_1'}}) \dunion \gamma_2(\dom{\mpart{\aheap_2'}})\\
        & = & \dom{\aheap_1} \dunion \dom{\aheap_2}\\
        & =& \dom{\aheap}.
      \end{eqnarray*}
    \item Let $\ell \in \dom{\mpart{\aheap'}}$ and assume $\ell \in
      \dom{\mpart{\aheap_1'}}$; the other case is symmetric. Then by
      construction $\mpart{\aheap'}(\ell) = \mpart{\aheap_1'}(\ell)$
      is of the form
      $\tuple{\mvec{a},\astore(\mvec{\fvar}),b_1^{\ell},\dots,b_{\maxv}^{\ell}}$
      where $\gamma_1(\mvec{a}) = \aheap_1(\gamma_1(\ell))$. Since
      $\mvec{a}, \ell \in \locs{\aheap_1'}$, we deduce that
      $\gamma(\mvec{a}) = \aheap_1(\gamma(\ell))$, hence
      $\gamma(\mvec{a}) = \aheap(\gamma(\ell))$.
    \item Let $\ell \in \dom{\bpart{\aheap'}}$. By definition, $\ell
      \in \dom{\bpart{\aheap_i'}}$, for some $i = 1,2$, and by Point
      \ref{it2:expand-union}, we deduce that $\aheap'_i(\ell) =
      \mvec{\bots}$ and that there exists a \gconnection $\ell'\in
      \dom{\mpart{\aheap_i'}}$ of $\ell$ in $\aheap'_i$.  Then
      $\aheap'(\ell) = \aheap'_i(\ell) = \mvec{\bots}$ and $\ell'\in
      \dom{\mpart{\aheap'}}$, hence $\ell'$ is also a \gconnection of
      $\ell$ in $\aheap'$. \qed
    \end{compactenum}
  \end{proof}
  
  \begin{lemma}\label{prop:expand-bij}
    Assume $(\astore, \aheap') \expands{\gamma} (\astore, \aheap)$ and
    let $\eta: \Loc\rightarrow \Loc$ be a bijection such that
    $\eta(\bots) = \bots$. If $\gamma' = \gamma \circ\inv{\eta}$, then
    $(\apl{\eta}{\astore}, \apl{\eta}{\aheap'}) \expands{\gamma'}
    (\apl{\eta}{\astore},{\aheap})$.
  \end{lemma}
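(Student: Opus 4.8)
The plan is to take the obvious candidate decomposition for the $\gamma'$-expansion, namely $\apl{\eta}{\aheap'} = \apl{\eta}{\mpart{\aheap'}} \dunion \apl{\eta}{\bpart{\aheap'}}$, and to verify the four conditions of Definition~\ref{def:decor_heap} one by one. Before that I would record a handful of elementary observations about $\eta$. Since $\eta$ is a bijection fixing $\bots$, its inverse $\inv{\eta}$ also fixes $\bots$, and for any location $\ell$ we have $\eta(\ell) = \bots$ iff $\ell = \bots$; hence for any tuple $\mvec{t}$ of appropriate length, $\eta(\mvec{t}) = \mvec{\bots}$ iff $\mvec{t} = \mvec{\bots}$. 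Injectivity of $\eta$ guarantees that $\apl{\eta}{\aheap'}$, $\apl{\eta}{\mpart{\aheap'}}$ and $\apl{\eta}{\bpart{\aheap'}}$ are well-defined heaps in the sense of Definition~\ref{def:struct-img}, with domains $\eta(\dom{\aheap'})$, $\eta(\dom{\mpart{\aheap'}})$, $\eta(\dom{\bpart{\aheap'}})$ respectively; the last two are disjoint and their union is $\apl{\eta}{\aheap'}$; and $\apl{\eta}{\astore}$ is a legitimate store, since $\apl{\eta}{\astore}(\bot) = \eta(\bots) = \bots$ and $\apl{\eta}{\astore}(x) = \bots$ forces $x = \bot$. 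From the observation on $\mvec{\bots}$ and the uniqueness of the decomposition $\aheap' = \mpart{\aheap'} \dunion \bpart{\aheap'}$ into its non-$\mvec{\bots}$ and $\mvec{\bots}$ parts (noted after Definition~\ref{def:decor_heap}), one also gets $\mpart{\apl{\eta}{\aheap'}} = \apl{\eta}{\mpart{\aheap'}}$ and $\bpart{\apl{\eta}{\aheap'}} = \apl{\eta}{\bpart{\aheap'}}$. Finally I would state the componentwise identities $\apl{\eta}{\astore}(\mvec{\fvar}) = \eta(\astore(\mvec{\fvar}))$ and $\gamma' \circ \eta = \gamma$ (on locations, hence on tuples), both immediate from $\gamma' = \gamma \circ \inv{\eta}$.

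With these in hand the four conditions are routine. For Condition~\ref{it:decor:inj}: if $\ell_1,\ell_2 \in \dom{\mpart{\aheap'}}$ and $\gamma'(\eta(\ell_1)) = \gamma'(\eta(\ell_2))$, then $\gamma(\ell_1) = \gamma(\ell_2)$ by $\gamma' \circ \eta = \gamma$, so $\ell_1 = \ell_2$ by Condition~\ref{it:decor:inj} for the original expansion, whence $\eta(\ell_1) = \eta(\ell_2)$. For Condition~\ref{it:decor:dom}: $\gamma'(\dom{\mpart{\apl{\eta}{\aheap'}}}) = \gamma'(\eta(\dom{\mpart{\aheap'}})) = \gamma(\dom{\mpart{\aheap'}}) = \dom{\aheap}$, the last equality being Condition~\ref{it:decor:dom} for the original; this is exactly why the right-hand heap in the conclusion remains $\aheap$ and not $\apl{\eta}{\aheap}$ — the factor $\inv{\eta}$ in $\gamma'$ cancels the $\eta$ applied to the domain of $\apl{\eta}{\aheap'}$. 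For Condition~\ref{it:decor:img}: given $m = \eta(\ell)$ with $\ell \in \dom{\mpart{\aheap'}}$, write $\aheap'(\ell) = \tuple{\mvec{a},\astore(\mvec{\fvar}),b_1^{\ell},\dots,b_{\maxv}^{\ell}}$ with $\gamma(\mvec{a}) = \aheap(\gamma(\ell))$; applying $\eta$ componentwise gives $\apl{\eta}{\aheap'}(m) = \tuple{\eta(\mvec{a}),\apl{\eta}{\astore}(\mvec{\fvar}),\eta(b_1^{\ell}),\dots,\eta(b_{\maxv}^{\ell})}$, and $\gamma'(\eta(\mvec{a})) = \gamma(\mvec{a}) = \aheap(\gamma(\ell)) = \aheap(\gamma'(m))$, so $\eta(\mvec{a})$ together with the locations $\eta(b_i^{\ell})$ witness the condition. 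For Condition~\ref{it:decor:link}: given $m = \eta(\ell)$ with $\ell \in \dom{\bpart{\aheap'}}$, we have $\apl{\eta}{\aheap'}(m) = \eta(\mvec{\bots}) = \mvec{\bots}$; taking $\ell' = \gcon{\ell}{\aheap'}$, Conditions~\ref{it:decor:img} and~\ref{it:decor:link} for the original yield $\mpart{\aheap'}(\ell') = \tuple{\mvec{a},\astore(\mvec{\fvar}),b_1^{\ell'},\dots,b_{\maxv}^{\ell'}}$ with $b_i^{\ell'} = \ell$ for some $i \in \interv{1}{\maxv}$, so $\mpart{\apl{\eta}{\aheap'}}(\eta(\ell')) = \tuple{\eta(\mvec{a}),\apl{\eta}{\astore}(\mvec{\fvar}),\eta(b_1^{\ell'}),\dots,\eta(b_{\maxv}^{\ell'})}$ has $\eta(b_i^{\ell'}) = \eta(\ell) = m$ as its $(\rank+\nfv+i)$-th component, and $\eta(\ell') \in \dom{\mpart{\apl{\eta}{\aheap'}}}$ is therefore a \gconnection of $m$.

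I do not expect a genuine obstacle; the argument is pure bookkeeping. The one point that needs care is not confusing the occurrences of $\eta$, $\inv{\eta}$ and $\gamma$ — in particular recognising that the asymmetry in the statement (the store and the left heap are transported by $\eta$, while the right heap is left untouched) is correct precisely because $\gamma'$ was chosen as $\gamma \circ \inv{\eta}$. A minor secondary nuisance is the standing overloading of $\eta$, $\gamma$ and $\gamma'$ to act on tuples and on stores, which I would neutralise by stating the relevant componentwise identities up front, as above, and then using them silently.
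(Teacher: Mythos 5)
Your proof is correct and follows essentially the same route as the paper's: both take the decomposition $\apl{\eta}{\aheap'} = \apl{\eta}{\mpart{\aheap'}} \dunion \apl{\eta}{\bpart{\aheap'}}$ and check the four conditions of Definition~\ref{def:decor_heap} using the cancellation $\gamma' \circ \eta = \gamma$. Your preliminary observations (well-definedness via injectivity, $\eta$ fixing $\bots$, commutation of $\eta$ with $\mathrm{main}$ and $\mathrm{aux}$) are exactly the facts the paper uses implicitly.
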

  \begin{proof}
    Note that $\apl{\eta}{\aheap'}$ is well-defined, since $\eta$ is a
    bijection.  We show that $\apl{\eta}{\mpart{\aheap'}}$ and
    $\apl{\eta}{\bpart{\aheap'}}$ satisfy the conditions of Definition
    \ref{def:decor_heap}, thus proving the result: \begin{compactenum}
    \item Let $\ell, \ell' \in \dom{\apl{\eta}{\mpart{\aheap'}}}$ and
      assume that $\gamma'(\ell) = \gamma'(\ell')$. Then there exist
      $\ell_1, \ell_2 \in \dom{\mpart{\aheap'}}$ such that $\ell =
      \eta(\ell_1)$ and $\ell' = \eta(\ell_2)$, hence
      \[\gamma(\ell_1)\ =\ \gamma\circ\inv{\eta}(\eta(\ell_1))\ =\ \gamma'(\ell)\ =\ 
      \gamma'(\ell')\ =\ \gamma\circ\inv{\eta}(\eta(\ell_2))\ =\ \gamma(\ell_2),\]
      so that $\ell_1 = \ell_2$, 
      by Point
      (\ref{it:decor:inj}) of Definition \ref{def:decor_heap}, leading
      to $\ell= \ell'$.
    \item We compute:
      \[
      \begin{array}{lllll}
        \gamma'(\dom{\apl{\eta}{\mpart{\aheap'}}}) & = & \gamma'(\eta(\dom{\mpart{\aheap'}})) 
        & = & \gamma(\dom{\mpart{\aheap'}}) \\
        & = & \dom{\aheap}.
      \end{array}\]
    \item Let $\ell \in \dom{\apl{\eta}{\mpart{\aheap'}}}$. Then there
      exists $\ell_1\in \dom{\mpart{\aheap'}}$ such that $\ell =
      \eta(\ell_1)$, and $\mpart{\aheap'}(\ell_1)$ is of the form
      $\tuple{\mvec{a},\astore(\mvec{\fvar}),b_1,\dots,b_{\maxv}}$,
      for some $b_i\in \Loc$ ($1 \leq i \leq \maxv$) and
      $\gamma(\mvec{a}) = \aheap(\gamma(\ell_1))$. Hence,
      $\apl{\eta}{\mpart{\aheap'}}(\ell) =
      \tuple{\eta(\mvec{a}),\apl{\eta}{\astore}(\mvec{\fvar}),\eta(b_1),\dots,\eta(b_{\maxv})}$
      and
      \[\gamma'(\eta(\mvec{a}))\ =\ \gamma(\mvec{a})\ =\ \aheap(\gamma(\ell_1))\ =\ \aheap(\gamma'(\ell)).\]
    \item Let $\ell \in \dom{\apl{\eta}{\bpart{\aheap'}}}$. Then there
      exists $\ell_2\in \dom{\bpart{\aheap'}}$ such that $\ell =
      \eta(\ell_2)$, and since $\eta(\bot) = \bot$, we have
      $\apl{\eta}{\bpart{\aheap'}}(\ell) = \eta(\bpart{\aheap'}(\ell_2))
      = \eta(\mvec{\bots}) = \mvec{\bots}$. By hypothesis $\ell_2$
      admits a \gconnection $\ell'$ in $\aheap'$, and it is
      straightforward to check that $\eta(\ell')$ is a \gconnection of
      $\ell$ in $\eta(\aheap')$. \qed
    \end{compactenum}
  \end{proof}
  
  \begin{lemma}\label{prop:expand-restr}
    Assume $(\astore, \aheap') \expands{\id} (\astore, \aheap)$, let
    $D\subseteq \dom{\aheap'}$ and consider $\aheap_1'$
    (resp. $\aheap_1$), the restriction of $\aheap'$ (resp. $\aheap$) to
    $D$. If every location in $\dom{\bpart{\aheap'}} \cap D$ has a
    \gconnection in $\aheap_1'$, then $(\astore, \aheap_1')
    \expands{\id} (\astore, \aheap_1)$.
  \end{lemma}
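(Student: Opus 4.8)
The plan is to exhibit the canonical decomposition of $\aheap_1'$ and then check, one by one, the four conditions of Definition~\ref{def:decor_heap} for $\gamma = \id$. First I would recall, from the remark following Definition~\ref{def:decor_heap}, that $\mpart{\aheap'}$ and $\bpart{\aheap'}$ are exactly the restrictions of $\aheap'$ to the locations $\ell$ with $\aheap'(\ell) \neq \mvec{\bots}$ and with $\aheap'(\ell) = \mvec{\bots}$, respectively; hence, since $\aheap_1' = \aheap'|_D$, we have $\mpart{\aheap_1'} = \mpart{\aheap'}|_D$ and $\bpart{\aheap_1'} = \bpart{\aheap'}|_D$, these two heaps are disjoint (being restrictions of the disjoint heaps $\mpart{\aheap'}$ and $\bpart{\aheap'}$), and $\aheap_1' = \mpart{\aheap_1'} \dunion \bpart{\aheap_1'}$. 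I would also note that $\aheap_1'$ has out-degree $\rank + \maxv + \nfv$ and $\aheap_1$ has out-degree $\rank$, as restrictions of $\aheap'$ and $\aheap$.

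Next I would check Conditions~(\ref{it:decor:inj})--(\ref{it:decor:img}). Condition~(\ref{it:decor:inj}) is immediate since $\id$ is injective. For Condition~(\ref{it:decor:dom}), the hypothesis $(\astore,\aheap') \expands{\id} (\astore,\aheap)$ gives $\dom{\mpart{\aheap'}} = \dom{\aheap}$, so $\dom{\mpart{\aheap_1'}} = D \cap \dom{\mpart{\aheap'}} = D \cap \dom{\aheap} = \dom{\aheap_1}$, the last equality because $\aheap_1 = \aheap|_D$ has domain $D \cap \dom{\aheap}$. For Condition~(\ref{it:decor:img}), for $\ell \in \dom{\mpart{\aheap_1'}} \subseteq \dom{\mpart{\aheap'}}$ we have $\aheap_1'(\ell) = \aheap'(\ell) = \tuple{\mvec{a},\astore(\mvec{\fvar}),b_1^{\ell},\dots,b_{\maxv}^{\ell}}$ with $\mvec{a} = \aheap(\ell)$, and since $\ell \in D \cap \dom{\aheap}$ also $\aheap_1(\ell) = \aheap(\ell) = \mvec{a}$.

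Finally, Condition~(\ref{it:decor:link}) is the only point where the hypothesis of the lemma is used: for $\ell \in \dom{\bpart{\aheap_1'}} = \dom{\bpart{\aheap'}} \cap D$ we have $\aheap_1'(\ell) = \bpart{\aheap'}(\ell) = \mvec{\bots}$, and by assumption $\ell$ has a \gconnection $\ell' \in \dom{\mpart{\aheap_1'}}$ in $\aheap_1'$, i.e.\ $\mpart{\aheap_1'}(\ell')$ is of the form $\tuple{\mvec{a},\mvec{\ell},b_1,\dots,b_{\maxv}}$ with $\ell = b_i$ for some $i \in \interv{1}{\maxv}$ --- exactly what Condition~(\ref{it:decor:link}) demands. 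Putting the four checks together yields $(\astore,\aheap_1') \expands{\id} (\astore,\aheap_1)$. I do not expect any genuine obstacle; the argument is essentially bookkeeping. The one subtlety to keep in mind is that $D \subseteq \dom{\aheap'}$ may contain locations of $\bpart{\aheap'}$ that lie outside $\dom{\aheap}$, which is precisely why the \gconnection hypothesis is needed (to salvage Condition~(\ref{it:decor:link})) and why the domain identity in Condition~(\ref{it:decor:dom}) involves $D \cap \dom{\aheap}$ rather than $D$.
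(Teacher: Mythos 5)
Your proof is correct and follows essentially the same route as the paper's: both verify the four conditions of Definition~\ref{def:decor_heap} for the decomposition $\mpart{\aheap_1'} = \mpart{\aheap'}|_D$, $\bpart{\aheap_1'} = \bpart{\aheap'}|_D$, with the \gconnection hypothesis used exactly and only for Condition~(\ref{it:decor:link}). The paper's version is just terser (it dismisses Conditions~(\ref{it:decor:inj}) and~(\ref{it:decor:img}) in one line each), whereas you spell out the bookkeeping explicitly.
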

  \begin{proof}
    We check the conditions of Definition
    \ref{def:decor_heap}: \begin{compactenum}
    \item Trivial. 
    \item Since $\dom{\mpart{\aheap'}} = \dom{\aheap}$, we obtain
      $\dom{\mpart{\aheap'_1}} = \dom{\mpart{\aheap'}} \cap D = \dom{\aheap}
      \cap D = \dom{\aheap_1}$. 
    \item Since Point (\ref{it:decor:img}) holds for $\mpart{\aheap'}$, 
      it also holds for $\mpart{\aheap'_1}$. 
    \item Let $\ell \in \dom{\bpart{\aheap'_1}} = \dom{\bpart{\aheap'}}
      \cap D$.  We have $\aheap'_1(\ell) = \aheap'(\ell) =
      \mvec{\bots}$.  Further, $\ell$ has a \gconnection in $\aheap_1'$
      by hypothesis. \qed
    \end{compactenum}
  \end{proof}

  \begin{lemma}\label{prop:trunctunion}
    Let $\aheap_1',\aheap_2'$ be disjoint heaps and let $\aheap_i \isdef
    \trunc{\aheap_i'}$, for $i = 1,2$.  Then $\aheap_1$ and $\aheap_2$
    are disjoint and $\trunc{\aheap_1' \dunion \aheap_2'} = \aheap_1
    \dunion \aheap_2$.
  \end{lemma}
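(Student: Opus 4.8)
The plan is to unfold Definition~\ref{def:trunc} on both sides and check that the resulting heaps have the same domain and the same pointwise values. First I would dispatch the disjointness claim: by definition $\dom{\aheap_i} = \dom{\aheap_i'} \setminus \setof{\ell \in \dom{\aheap_i'}}{\aheap_i'(\ell) = \mvec{\bots}} \subseteq \dom{\aheap_i'}$ for $i = 1,2$, so $\dom{\aheap_1'} \cap \dom{\aheap_2'} = \emptyset$ immediately gives $\dom{\aheap_1} \cap \dom{\aheap_2} = \emptyset$. Hence $\aheap_1 \dunion \aheap_2$ is well-defined, and so is $\aheap' \isdef \aheap_1' \dunion \aheap_2'$, with $\dom{\aheap'} = \dom{\aheap_1'} \cup \dom{\aheap_2'}$ and $\aheap'(\ell) = \aheap_i'(\ell)$ whenever $\ell \in \dom{\aheap_i'}$.

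Next I would compare the domains of $\trunc{\aheap'}$ and $\aheap_1 \dunion \aheap_2$. Since the union $\dom{\aheap_1'} \cup \dom{\aheap_2'}$ is disjoint, every $\ell \in \dom{\aheap'}$ lies in $\dom{\aheap_i'}$ for a unique $i$, and then $\aheap'(\ell) = \mvec{\bots}$ iff $\aheap_i'(\ell) = \mvec{\bots}$. Removing exactly these locations from $\dom{\aheap'}$ therefore yields
\[
\dom{\trunc{\aheap'}} \;=\; \bigl(\dom{\aheap_1'} \setminus \setof{\ell}{\aheap_1'(\ell) = \mvec{\bots}}\bigr) \;\cup\; \bigl(\dom{\aheap_2'} \setminus \setof{\ell}{\aheap_2'(\ell) = \mvec{\bots}}\bigr) \;=\; \dom{\aheap_1} \cup \dom{\aheap_2} \;=\; \dom{\aheap_1 \dunion \aheap_2}.
\]
Finally, for the values: given $\ell \in \dom{\trunc{\aheap'}}$, pick the unique $i$ with $\ell \in \dom{\aheap_i'}$; then $\aheap'(\ell) = \aheap_i'(\ell)$, and by Definition~\ref{def:trunc} both $\trunc{\aheap'}(\ell)$ and $\aheap_i(\ell)$ equal the tuple of the first $\rank$ components of $\aheap_i'(\ell)$, while $(\aheap_1 \dunion \aheap_2)(\ell) = \aheap_i(\ell)$ because $\ell \notin \dom{\aheap_{3-i}}$. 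Hence $\trunc{\aheap'}(\ell) = (\aheap_1 \dunion \aheap_2)(\ell)$, which finishes the proof.

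There is no genuine obstacle here; the statement is a routine bookkeeping lemma about $\trunc{\cdot}$. The only point that deserves a word of care is that the decomposition $\dom{\aheap'} = \dom{\aheap_1'} \uplus \dom{\aheap_2'}$ is a \emph{disjoint} union, so that the phrase ``the unique $i$ with $\ell \in \dom{\aheap_i'}$'' is meaningful and the set of removed (i.e.\ $\mvec{\bots}$-valued) locations splits cleanly along this union; everything else is a direct expansion of the definitions.
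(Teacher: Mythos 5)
Your proof is correct and follows essentially the same route as the paper's: unfold Definition~\ref{def:trunc}, observe that $\dom{\aheap_i}\subseteq\dom{\aheap_i'}$ gives disjointness, and check that the set of $\mvec{\bots}$-valued locations removed from $\dom{\aheap_1'\dunion\aheap_2'}$ splits along the disjoint union so that the domains (and then the values) coincide. Your treatment of the pointwise values is in fact slightly more explicit than the paper's, which leaves that step as ``clear.''
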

  \begin{proof}
    By definition, $\dom{\aheap_i} \subseteq \dom{\aheap_i'}$ hence
    since $\aheap_1',\aheap_2'$ are disjoint, $\aheap_1$ and $\aheap_2$
    are also disjoint. 
    and $\aheap \isdef \aheap_1 \dunion \aheap_2$.  Let $D$ be the set
    of locations $\ell$ such that $\aheap_1'(\ell) = \mvec{\bots}$ or
    $\aheap_2'(\ell) = \mvec{\bots}$.  By Definition \ref{def:trunc}, we
    have $\dom{\aheap_i} = \dom{\aheap_i'} \setminus D$, hence
    $\dom{\aheap_1 \dunion \aheap_2} = \dom{\aheap_1' \dunion \aheap_2'}
    \setminus D = \dom{\trunc{\aheap_1' \dunion \aheap_2'}}$.  It is
    clear that this entails that $\trunc{\aheap_1' \dunion \aheap_2'} =
    \aheap_1 \dunion \aheap_2$. \qed
  \end{proof}
}

\begin{lemma}\label{prop:exp-trunc}
  If $(\astore, \aheap') \expands{\gamma} (\astore, \aheap)$ then
  $\aheap = \apl{\gamma}{\trunc{\aheap'}}$, hence
  $(\astore, \aheap') \expands{\id} (\astore, \trunc{\aheap'})$.
\end{lemma}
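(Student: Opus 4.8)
The plan is to simply unfold Definitions \ref{def:decor_heap}, \ref{def:trunc} and \ref{def:struct-img} and match them up; there is no deep obstacle here, only some care with the (unique) decomposition of $\aheap'$. First I would recall the observation made right after Definition \ref{def:decor_heap}: since $\nfv>0$ and $\astore(\fvar_i)\neq\bots$, the heaps $\mpart{\aheap'}$ and $\bpart{\aheap'}$ witnessing $(\astore,\aheap')\expands{\gamma}(\astore,\aheap)$ are forced to be the restrictions of $\aheap'$ to the locations $\ell$ with $\aheap'(\ell)\neq\mvec{\bots}$, resp.\ $\aheap'(\ell)=\mvec{\bots}$. Comparing with Definition \ref{def:trunc}, this yields $\dom{\trunc{\aheap'}}=\dom{\mpart{\aheap'}}$, and for each such $\ell$, $\trunc{\aheap'}(\ell)$ is exactly the tuple of the first $\rank$ components of $\mpart{\aheap'}(\ell)$.

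Next I would prove $\aheap=\apl{\gamma}{\trunc{\aheap'}}$. By Point (\ref{it:decor:inj}) of Definition \ref{def:decor_heap}, $\gamma$ is injective on $\dom{\mpart{\aheap'}}=\dom{\trunc{\aheap'}}$, so $\apl{\gamma}{\trunc{\aheap'}}$ is well defined (Definition \ref{def:struct-img}), with domain $\gamma(\dom{\trunc{\aheap'}})=\gamma(\dom{\mpart{\aheap'}})=\dom{\aheap}$ by Point (\ref{it:decor:dom}). For the values, fix $\ell\in\dom{\mpart{\aheap'}}$ and write $\mpart{\aheap'}(\ell)=\tuple{\mvec{a},\astore(\mvec{\fvar}),b_1^{\ell},\dots,b_{\maxv}^{\ell}}$ with $\gamma(\mvec{a})=\aheap(\gamma(\ell))$, as granted by Point (\ref{it:decor:img}). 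By the previous paragraph $\trunc{\aheap'}(\ell)=\mvec{a}$, so $\apl{\gamma}{\trunc{\aheap'}}(\gamma(\ell))=\gamma(\trunc{\aheap'}(\ell))=\gamma(\mvec{a})=\aheap(\gamma(\ell))$. Since every location of $\dom{\aheap}$ is of the form $\gamma(\ell)$ for some $\ell\in\dom{\mpart{\aheap'}}$, this gives $\aheap=\apl{\gamma}{\trunc{\aheap'}}$.

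Finally, for the ``hence'' part I would show that the very same decomposition $\aheap'=\mpart{\aheap'}\dunion\bpart{\aheap'}$ witnesses $(\astore,\aheap')\expands{\id}(\astore,\trunc{\aheap'})$: Point (\ref{it:decor:inj}) is immediate for $\gamma=\id$; Point (\ref{it:decor:dom}) is $\dom{\mpart{\aheap'}}=\dom{\trunc{\aheap'}}$, established above; Point (\ref{it:decor:img}) holds because, with $\mpart{\aheap'}(\ell)=\tuple{\mvec{a},\astore(\mvec{\fvar}),b_1^{\ell},\dots,b_{\maxv}^{\ell}}$, we have $\trunc{\aheap'}(\ell)=\mvec{a}=\id(\mvec{a})$; and Point (\ref{it:decor:link}) carries over verbatim since its statement does not mention $\gamma$. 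The only thing requiring attention throughout is the appeal to the uniqueness of the $\mpart{\cdot}/\bpart{\cdot}$ split, so that $\dom{\trunc{\aheap'}}=\dom{\mpart{\aheap'}}$ may be used freely; everything else is bookkeeping.
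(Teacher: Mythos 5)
Your proof is correct and follows essentially the same route as the paper's: unfold Definitions \ref{def:decor_heap} and \ref{def:trunc}, use the uniqueness of the $\mpart{\cdot}/\bpart{\cdot}$ split to get $\dom{\trunc{\aheap'}}=\dom{\mpart{\aheap'}}$, and match domains and values via Points (\ref{it:decor:inj})--(\ref{it:decor:img}). The only difference is that you spell out the verification of the four conditions for the $\expands{\id}$ claim, which the paper dismisses as ``immediate''; this is a harmless (and arguably welcome) addition.
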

\optproof{\subsection{Proof of Lemma \ref{prop:exp-trunc}}}
{
  Since $(\astore, \aheap') \expands{\gamma} (\astore, \aheap)$, the
  restriction of $\gamma$ to $\dom{\mpart{\aheap'}}$ is injective and
  $\gamma(\dom{\mpart{\aheap'}}) = \dom{\aheap}$, by Point
  (\ref{it:decor:dom}) of Definition \ref{def:decor_heap}.
  Furthermore, by Definition \ref{def:trunc}, $\dom{\trunc{\aheap'}} =
  \dom{\mpart{\aheap'}}$.  Thus $\dom{\apl{\gamma}{\trunc{\aheap'}}} =
  \gamma(\dom{\trunc{\aheap'}}) = \gamma(\dom{\mpart{\aheap'}}) =
  \dom{\aheap}$.  Moreover, for any $\ell \in \dom{\mpart{\aheap'}}$,
  we have $\aheap'(\ell) = \tuple{\vec{a}, \astore(\vec{w}), b_1^\ell,
    \ldots, b_\maxv^\ell}$ and $\aheap(\gamma(\ell)) =
  \gamma(\vec{a})$, for some $\vec{a} \in \Loc^\rank$ and $b_1^\ell,
  \ldots, b_\maxv^\ell \in \Loc$, thus by Definition \ref{def:trunc},
  $\gamma(\trunc{\aheap'}(\ell)) = \gamma(\vec{a}) =
  \gamma(\aheap(\ell))$.
  The second part of the proof is immediate. \qed
}

The converse of Lemma \ref{prop:exp-trunc} does not hold in general,
but it holds under some additional conditions:
\begin{lemma}
\label{prop:trunc}
Consider a store $\astore$, let $\aheap'$ be a heap and let $\aheap
\isdef \trunc{\aheap'}$. Let $D_2 \isdef \{ \ell\in\dom{\aheap'} \mid
\aheap'(\ell) = \mvec{\bots} \}$ and $D_1 \isdef \dom{\aheap'}
\setminus D_2$. Assume that:
\begin{compactenum}
\item{for every location $\ell \in D_1$, $\aheap(\ell)$ is of the form
  $(\ell_1,\dots,\ell_\rank)$ and $\aheap'(\ell)$ is of the form
  $(\ell_1,\dots,\ell_\rank,\astore(\mvec{\fvar}),\ell_1',\dots,\ell_\maxv')$; \label{trunc:img}}
\item{every location $\ell \in D_2$ has a \gconnection in
  $\aheap'$. \label{trunc:link}}
\end{compactenum}
Then $(\astore,\aheap') \expands{\id} (\astore,\aheap)$.
\end{lemma}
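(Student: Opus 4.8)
The plan is to exhibit the decomposition $\aheap' = \mpart{\aheap'} \dunion \bpart{\aheap'}$ required by Definition~\ref{def:decor_heap} and then to check its four clauses with $\gamma = \id$, all of which follow directly from the hypotheses. I would take $\mpart{\aheap'}$ to be the restriction of $\aheap'$ to $D_1$ and $\bpart{\aheap'}$ to be its restriction to $D_2$; since $\set{D_1, D_2}$ partitions $\dom{\aheap'}$, these heaps are disjoint and $\aheap' = \mpart{\aheap'} \dunion \bpart{\aheap'}$, and by the uniqueness remark following Definition~\ref{def:decor_heap} this agrees with the generic reading of the notations $\mpart{\cdot}$ and $\bpart{\cdot}$ (the components with value $\neq \mvec{\bots}$, resp.\ $= \mvec{\bots}$). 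The out-degree side conditions are automatic: $\aheap'$ has out-degree $\rank+\maxv+\nfv$ (as is silently assumed whenever $\trunc{\cdot}$ is applied to it) and $\aheap = \trunc{\aheap'}$ has out-degree $\rank$.

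Next I would record two elementary facts about truncation. By Definition~\ref{def:trunc}, $\dom{\aheap} = \dom{\aheap'} \setminus D_2 = D_1$, and for each $\ell \in D_1$ the tuple $\aheap(\ell)$ consists of the first $\rank$ components of $\aheap'(\ell)$; combined with hypothesis~\ref{trunc:img} this lets us write $\aheap'(\ell) = \tuple{\aheap(\ell), \astore(\mvec{\fvar}), \ell_1', \ldots, \ell_\maxv'}$ for suitable $\ell_1', \ldots, \ell_\maxv' \in \Loc$. With this, the four clauses of Definition~\ref{def:decor_heap} for $\gamma = \id$ are immediate. Clause~\ref{it:decor:inj} holds because $\id$ is injective. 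Clause~\ref{it:decor:dom} holds because $\id(\dom{\mpart{\aheap'}}) = D_1 = \dom{\aheap}$. Clause~\ref{it:decor:img} holds by the displayed form of $\aheap'(\ell)$ for $\ell \in \dom{\mpart{\aheap'}} = D_1$, taking $\mvec{a} \isdef \aheap(\ell)$ and $b_j^{\ell} \isdef \ell_j'$, so that $\id(\mvec{a}) = \aheap(\ell) = \aheap(\id(\ell))$. Finally, clause~\ref{it:decor:link} holds because each $\ell \in \dom{\bpart{\aheap'}} = D_2$ satisfies $\aheap'(\ell) = \mvec{\bots}$ by definition of $D_2$ and, by hypothesis~\ref{trunc:link}, has a \gconnection in $\aheap'$, which is exactly the witness required by that clause.

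There is really no obstacle here: the statement is a bookkeeping check matching the hypotheses, one by one, to the clauses of Definition~\ref{def:decor_heap} instantiated at $\gamma = \id$. The only point deserving a moment's thought is that ``$\ell$ has a \gconnection in $\aheap'$'' in hypothesis~\ref{trunc:link} is to be read as the notion introduced in clause~\ref{it:decor:link} of Definition~\ref{def:decor_heap}; this is legitimate because, as its footnote observes, that notion does not mention $\gamma$, and the notations $\mpart{\cdot}$ and $\bpart{\cdot}$ have been extended to arbitrary heaps.
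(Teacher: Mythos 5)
Your proposal is correct and follows exactly the paper's argument: the paper likewise takes $\mpart{\aheap'}$ and $\bpart{\aheap'}$ to be the restrictions of $\aheap'$ to $D_1$ and $D_2$ respectively, and then observes that the four clauses of Definition~\ref{def:decor_heap} with $\gamma = \id$ follow directly from the hypotheses and the definition of $\trunc{\aheap'}$. Your version merely spells out the bookkeeping that the paper calls ``straightforward.''
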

\optproof{\subsection{Proof of Lemma \ref{prop:trunc}}}
{
It is straightforward to check that Conditions \ref{it:decor:inj} and
\ref{it:decor:dom} of Definition \ref{def:decor_heap} hold, with
$\mpart{\aheap'}$ (resp. $\bpart{\aheap'}$) defined as the restriction
of $\aheap'$ to $D_1$ (resp. $D_2$).  Condition \ref{it:decor:img}
follows immediately from Point \ref{trunc:img} and from the definition
of $\trunc{\aheap'}$.  Condition \ref{it:decor:link} holds by Point
\ref{trunc:link}. \qed
}

\subsection{Transforming the Consequent}
\label{sect:cons}

\newcommand{\rightdecor}[1]{\widehat{#1}}
\newcommand{\rd}[1]{\widehat{#1}}

\newcommand{\decoration}{decoration\xspace}
\newcommand{\decorated}[2]{{#1}_{#2}}

\newcommand{\leftsid}{\asid_l}
\newcommand{\rightsid}{\asid_r}

\newcommand{\astorei}{\hat{\astore}}

We first describe the transformation for the right-hand side of the
entailment problem, as this transformation is simpler. 

\begin{definition} \label{def:right_deco}
  We associate each $n$-ary predicate $p\in \rightpred$ with a new
  predicate $\rd{p}$ of arity $n + \nfv$.  We denote by
  $\rightdecor{\alpha}$ the formula obtained from $\alpha$ by
  replacing every predicate atom $p(x_1,\dots,x_n)$ by
  $\rd{p}(x_1,\dots,x_n,\mvec{\fvar})$, \comment[ri]{added} where
  $\mvec{\fvar} = (\fvar_1, \ldots, \fvar_\nfv)$.
\end{definition} 

\begin{definition}\label{def:right_rule}
We denote by $\rightdecor{\asid}$ the set of rules of the form: 
\[\rd{p}(x_1,\dots,x_n,\mvec{\fvar}) \Leftarrow  
x_1 \mapsto (y_1,\dots,y_\rank,\mvec{\fvar},z_1,\dots,z_{\maxv})\sigma
* \rightdecor{\genrest}\sigma * \xi_I * \chi_\sigma\]
where: \begin{compactitem}
\item{ $p(x_1,\dots,x_n) \Leftarrow x_1 \mapsto (y_1,\dots,y_\rank) *
  \genrest$ is a rule in $\asid$ with $p \in \rightpred$, }
\item{ $\{ z_1,\dots,z_\maxv \}$ is a set of variables disjoint from $\fv{\genrest} \cup \{
  x_1,\dots,x_n,\allowbreak y_1,\dots,y_\rank,\allowbreak \fvar_1,\dots,\fvar_\nfv \}$, }
\item{ $\sigma$ is a substitution with $\dom{\sigma} \subseteq
  \fv{\genrest} \setminus \{ x_1 \}$ and $\img{\sigma} \subseteq \{
  \fvar_1,\dots,\fvar_\nfv \}$, }
\item{ $\xi_I \isdef \bigast_{i \in I} \botp(z_i)$, with $I \subseteq
  \{ 1,\dots,\maxv\}$, }
\item{ $\chi_\sigma \isdef \bigast_{x\in \dom{\sigma}} x \iseq
  x\sigma$.  }
\end{compactitem}
We denote by $\rightsid$ the set of rules in $\rightdecor{\asid}$ that
are connected\footnote{Note that all the rules in $\rightdecor{\asid}$
  are progressing.}.
\end{definition} 
Note that the free variables $\mvec{w}$ are added as parameters in the
rules above, instead of some arbitrary tuple of fresh variables
$\mvec{\omega}$, \comment[ri]{added} of the same length as
$\mvec{w}$. This is for the sake of conciseness \comment[ri]{instead
  of clarity}, since these parameters $\mvec{\omega}$ will be
systematically mapped to $\mvec{w}$.

\begin{example}
Assume that $\psi = \exists x~.~p(x,\fvar_1)$, with $\nfv = 1$, $\maxv = 1$ and $\fvargs(p) = \{ 2 \}$.
Assume also that $p$ is associated with the rule:
\[
\begin{tabular}{lll}
$p(u_1,u_2)$ & $\Leftarrow$ & $u_1 \mapsto u_1 * q(u_2)$. \\
\end{tabular}
\]
Observe that the rule is \connected, but not connected. 
Then $\dom{\sigma} \subseteq \set{u_2}$, $\img{\sigma} \subseteq \set{\fvar_1}$ and $I\subseteq \set{1}$, so that $\rightdecor{\asid}$ contains the following rules:
\[
\begin{tabular}{lllll}
(1) & $p(u_1,u_2,\fvar_1)$ & $\Leftarrow$ & $u_1 \mapsto (u_1,\fvar_1,z_1) * q(u_2)$ \\
(2) & $p(u_1,u_2,\fvar_1)$ & $\Leftarrow$ & $u_1 \mapsto (u_1,\fvar_1,z_1) * q(u_2) * \botp(z_1) $ \\
(3) & $p(u_1,u_2,\fvar_1)$ & $\Leftarrow$ & $u_1 \mapsto (u_1,\fvar_1,z_1) * q(\fvar_1) * u_2 \iseq \fvar_1$ \\
(4) & $p(u_1,u_2,\fvar_1)$ & $\Leftarrow$ & $u_1 \mapsto (u_1,\fvar_1,z_1) * q(\fvar_1) * \botp(z_1) * u_2 \iseq \fvar_1$ \\
\end{tabular}
\]
Rules (1) and (2) are not connected, hence do not occur in
$\rightsid$.  Rules (3) and (4) are connected, hence occur in
$\rightsid$. Note that (4) is established, but (3) is not.
\hfill$\blacksquare$
\end{example}

We now relate the SIDs $\asid$ and $\rightsid$ by the following result: 

\begin{lemma}\label{lem:right_equiv}
Let $\alpha$ be a formula that is \restricted w.r.t.\ $\{
\fvar_1,\dots,\fvar_\nfv\}$ and contains no points-to atoms, with
$\preds{\alpha} \subseteq \rightpred$. Given a store $\astore$ and two
heaps $\aheap$ and $\aheap'$, such that $(\astore,\aheap')
\expands{\id} (\astore,\aheap)$, we have $(\astore,\aheap')
\models_{\rightsid} \rightdecor{\alpha}$ if and only if
$(\astore,\aheap) \models_{\asid} \alpha$.
\end{lemma}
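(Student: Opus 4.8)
The plan is to derive the lemma from two statements proved together, by well-founded induction on the derivations of $\models_{\asid}$ and $\models_{\rightsid}$: for every formula $\beta$ that is \restricted w.r.t.\ $\set{\fvar_1,\dots,\fvar_\nfv}$, contains no points-to atom and satisfies $\preds{\beta}\subseteq\rightpred$, and every store $\astore$: (a) if $(\astore,\aheap)\models_{\asid}\beta$ then $(\astore,\aheap')\models_{\rightsid}\rightdecor{\beta}$ for \emph{every} $\aheap'$ with $(\astore,\aheap')\expands{\id}(\astore,\aheap)$; and (b) if $(\astore,\aheap')\models_{\rightsid}\rightdecor{\beta}$ then $(\astore,\aheap')\expands{\id}(\astore,\trunc{\aheap'})$ and $(\astore,\trunc{\aheap'})\models_{\asid}\beta$. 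The lemma follows from (a), (b) and Lemma~\ref{prop:exp-trunc}, which yields $\aheap=\trunc{\aheap'}$ whenever $(\astore,\aheap')\expands{\id}(\astore,\aheap)$.

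For the base cases $\beta\in\set{x\iseq y,\; x\not\iseq y}$: $\rightdecor{\beta}=\beta$, both satisfaction relations force an empty heap and only constrain $\astore$, and an $\id$-expansion is empty iff the heap it expands is (Lemma~\ref{prop:truncemp}), so (a) and (b) are immediate. The cases $\beta=\beta_1\vee\beta_2$ and $\beta=\exists x~.~\beta_1$ follow directly from the induction hypothesis, since $\rightdecor{\cdot}$ commutes with $\vee$ and $\exists$ and the $\fvar_i$ lie in the domain of every store involved. For $\beta=\beta_1*\beta_2$ I would transport the $*$-splitting across the expansion: in direction (b), split the decomposition of $\aheap'$ witnessing $(\astore,\aheap')\models_{\rightsid}\rightdecor{\beta_1}*\rightdecor{\beta_2}$, apply the induction hypothesis to each half, and conclude with Lemmas~\ref{prop:trunctunion} and~\ref{prop:expand-dunion}; in direction (a), given $\aheap=\aheap_1\dunion\aheap_2$ with $(\astore,\aheap_i)\models_{\asid}\beta_i$ and the fixed expansion $\aheap'$ of $\aheap$, restrict $\mpart{\aheap'}$ to $\dom{\aheap_1},\dom{\aheap_2}$ and send each $\ell\in\dom{\bpart{\aheap'}}$ to the side containing $\gcon{\ell}{\aheap'}$, so that Lemma~\ref{prop:expand-restr} gives $(\astore,\aheap'_i)\expands{\id}(\astore,\aheap_i)$ and the induction hypothesis applies to each.

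The heart of the proof is the predicate-atom case $\beta=p(u_1,\dots,u_n)$ with $p\in\rightpred$; since $\beta$ is \restricted w.r.t.\ $\set{\fvar_1,\dots,\fvar_\nfv}$ we have $u_j\in\set{\fvar_1,\dots,\fvar_\nfv}$ for every $j\in\fvargs(p)$. In direction (a), from the $\asid$-rule $p(x_1,\dots,x_n)\Leftarrow x_1\mapsto(y_1,\dots,y_\rank)*\genrest$ and the extension of $\astore$ witnessing the derivation, with $\theta=\substinterv{x_i}{u_i}{i}{\interv{1}{n}}$, I would select the rule of $\rightdecor{\asid}$ whose substitution $\sigma$ maps $v\in\fv{\genrest}\setminus\set{x_1}$ to $\fvar_k$ precisely when $v\theta$ is interpreted as $\astore(\fvar_k)$; in particular $\sigma$ sends every parameter $x_j$, with $j\in\fvargs(p)$, that is the first argument of a predicate atom of $\genrest$ and is not among $y_1,\dots,y_\rank$, to $u_j$. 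Such a parameter cannot be $x_1$ (a progressing rule with $q(x_1,\dots)$ in its body is unsatisfiable by disjointness, so this situation does not arise in a model), hence after applying $\sigma$ every predicate atom of $\rightdecor{\genrest}\sigma$ has its first argument among the record fields $y_1\sigma,\dots,y_\rank\sigma,\fvar_1,\dots,\fvar_\nfv,z_1,\dots,z_\maxv$, so the rule is connected and lies in $\rightsid$. I then pick $I$ and the values of $z_1,\dots,z_\maxv$ to match the last $\maxv$ fields of the cell of $\astore(u_1)$ in the given $\aheap'$ (with $i\in I$ iff that field belongs to $\dom{\bpart{\aheap'}}$ with connection $\astore(u_1)$), decomposing $\aheap'$ into the $x_1$-cell, the $\mvec{\bots}$-cells for $I$, and, via Lemma~\ref{prop:expand-restr}, an $\id$-expansion of the sub-heap of $\aheap$ satisfying $\genrest\theta$; the equalities $\chi_\sigma$ hold by the choice of $\sigma$, $\genrest\sigma\theta$ is \restricted w.r.t.\ $\set{\fvar_1,\dots,\fvar_\nfv}$ by Proposition~\ref{prop:restr-V} and Condition~\ref{safe:ind} of Definition~\ref{def:safe}, and its predicates lie in $\rightpred$, so the induction hypothesis closes the case. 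Direction (b) is symmetric: in a $\rightsid$-unfolding of $\rightdecor{p}(u_1,\dots,u_n,\mvec{\fvar})$, the atoms of $\chi_\sigma$ force the store extension to interpret $v$ and $v\sigma$ identically, so $\rightdecor{\genrest}\sigma\theta$ and $\rightdecor{\genrest}\theta$ have the same models; the $\xi_I$-part is a disjoint union of $\mvec{\bots}$-cells that vanishes under $\trunc{\cdot}$, the $x_1$-cell truncates to the $\asid$-cell of $u_1$, the induction hypothesis handles the remainder, and Lemma~\ref{prop:trunc}---whose hypotheses follow from the shape of the $\rightsid$-rules and the induction hypothesis---recovers $(\astore,\aheap')\expands{\id}(\astore,\trunc{\aheap'})$.

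I expect the real difficulty to be exactly the claim, in direction (a), that the decorated rule one needs is \emph{connected} and hence present in $\rightsid$: this is precisely where all the safety conditions combine---progress rules out a predicate atom re-allocating $x_1$, the \connected condition confines every remaining ``misplaced'' first argument to a parameter position in $\fvargs(p)$, and the \restricted condition on $\psi$, propagated to $\beta$ along the unfoldings via the fv-profile and Proposition~\ref{prop:restr-V}, guarantees that those positions are instantiated by the free variables $\fvar_1,\dots,\fvar_\nfv$, so that the substitution $\sigma$ used to connect them is consistent with the equalities $\chi_\sigma$. The rest---checking that the $\mpart{\cdot}$/$\bpart{\cdot}$ partition and the \gconnection attachments are carried correctly along the unfolding, so that truncation really inverts expansion---is more routine, relying on the auxiliary lemmas of \S\ref{subsect:exp}. \qed
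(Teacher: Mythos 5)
Your proposal is correct and follows essentially the same route as the paper's proof: the same induction with the same case analysis, the same choice of $\sigma$ (matching variables whose images under the store coincide with some $\astore(\fvar_k)$) and of $I$ (from the $\mvec{\bots}$-cells hanging off $\astore(u_1)$), the same splitting of the auxiliary locations by their \gconnection in the $*$ case, and the same use of Proposition~\ref{prop:restr-V}, progress and $\lambda$-connectivity to show the selected decorated rule is connected and hence in $\rightsid$. The only differences are cosmetic reorganizations (deriving the expansion of $\trunc{\aheap'}$ in the backward direction rather than assuming it, and an explicit $\exists$ case), which do not change the substance.
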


\putInAppendix{\section{Additional Material for Section \ref{sect:cons}}}{}
\optproof{\subsection{Proof of Lemma \ref{lem:right_equiv}}}{
  We first need the following:
  
  \begin{lemma}\label{prop:rightgconn}
    If $(\astore,\aheap') \models_{\rightsid} \rightdecor{\alpha}$
    then for all $\ell \in \dom{\bpart{\aheap'}}$, $\ell$ has a
    \gconnection in $\aheap'$.
  \end{lemma}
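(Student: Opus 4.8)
Throughout, $\alpha$ is as in Lemma~\ref{lem:right_equiv}: in particular it contains no points-to atom and $\preds{\alpha}\subseteq\rightpred$. The plan is to prove the statement by induction on the derivation of $(\astore,\aheap')\models_{\rightsid}\rightdecor{\alpha}$, by case analysis on the outermost connective of $\alpha$. The driving observation is that a location $\ell$ with $\aheap'(\ell)=\mvec{\bots}$ can be allocated only by an instance of the rule $\botp(x)\Leftarrow x\mapsto\mvec{\bot}$: every other rule of $\rightsid$ has the shape of Definition~\ref{def:right_rule} and allocates its root through a points-to atom whose $\nfv$ middle record fields evaluate to $\astore(\mvec{\fvar})$, a tuple of locations all distinct from $\bots$ since $\nfv>0$ and $\astore(\fvar_i)\neq\bots$ for every $i$. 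Moreover, since $\botp\notin\rightpred$ and $\rightdecor{\cdot}$ introduces no $\botp$-atoms, every $\botp$-atom met along the derivation stems from the factor $\xi_I=\bigast_{i\in I}\botp(z_i)$ of an instance of such a rule, where $z_i$ is the $(\rank+\nfv+i)$-th record field of the root allocated by that very instance --- and that is exactly what makes the root a \gconnection of $\ell$.

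Concretely I would argue as follows.
\begin{compactitem}
\item If $\alpha$ is an equational atom, then $\aheap'=\emptyset$, hence $\dom{\bpart{\aheap'}}=\emptyset$ and there is nothing to show; $\alpha$ is not a points-to atom by hypothesis.
\item If $\alpha$ is $\alpha_1\vee\alpha_2$, $\alpha_1*\alpha_2$ or $\exists x~.~\alpha_1$, I would apply the induction hypothesis to the immediate sub-derivation(s); in the separating-conjunction case one also uses that $\mpart{\aheap'}$ and $\bpart{\aheap'}$ are fixed locally (by which locations map to $\mvec{\bots}$), so that a \gconnection of $\ell$ in a subheap of $\aheap'=\aheap_1'\dunion\aheap_2'$ remains a \gconnection of $\ell$ in $\aheap'$.
\item If $\alpha=p(u_1,\dots,u_n)$ with $p\in\rightpred$, then $\rightdecor{\alpha}=\rd{p}(u_1,\dots,u_n,\mvec{\fvar})$ is unfolded by some rule of $\rightsid$ of the shape of Definition~\ref{def:right_rule}; after renaming its existentials and instantiating its parameters, $\aheap'$ decomposes along the separating-conjunction factors of the (instantiated) body as $\aheap'_{\mathrm{root}}\dunion\aheap'_{\genrest}\dunion\aheap'_{\xi}\dunion\emptyset$, corresponding respectively to the root atom, to $\rightdecor{\genrest}\sigma=\rightdecor{\genrest\sigma}$ (decoration commuting with $\sigma$ and with the parameter instantiation, since $\mvec{\fvar}$ meets neither domain), to $\xi_I$, and to $\chi_\sigma$. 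Now: $\aheap'_{\mathrm{root}}$ allocates a single location $\ell_0$ with value $\neq\mvec{\bots}$, so $\ell_0\in\dom{\mpart{\aheap'}}$ and $\aheap'_{\mathrm{root}}$ contributes nothing to $\bpart{\aheap'}$; each factor $\botp(z_i)$ allocates precisely the location $b_i$ interpreting $z_i$, with value $\mvec{\bots}$, and since $b_i$ is the $(\rank+\nfv+i)$-th component of $\aheap'(\ell_0)$, the location $\ell_0$ is a \gconnection of $b_i$ in $\aheap'$; and $\genrest\sigma$ has no points-to atom and $\preds{\genrest\sigma}\subseteq\preds{\genrest}\subseteq\rightpred$ (as $p\in\rightpred$ and $\genrest$ occurs in a rule for $p$), so the induction hypothesis applies to $(\astorep,\aheap'_{\genrest})\models_{\rightsid}\rightdecor{\genrest\sigma}$ and supplies a \gconnection in $\aheap'_{\genrest}\subseteq\aheap'$ for every location of $\dom{\bpart{\aheap'_{\genrest}}}$. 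Since $\dom{\bpart{\aheap'}}=\dom{\bpart{\aheap'_{\genrest}}}\cup\{b_i\mid i\in I\}$, the case follows.
\end{compactitem}

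The only non-routine case is the last one. The two points that need care are: identifying $\rightdecor{\genrest}\sigma$ as the decoration of a formula that still meets the hypotheses of Lemma~\ref{lem:right_equiv}, so that the induction hypothesis applies to it, whereas the root points-to atom and the $\botp$-atoms --- which are \emph{not} of that form --- must be dispatched by hand as above (this is also why the induction is carried on the derivation rather than on the structure of $\alpha$, so that the $\botp$-atoms occurring inside rule bodies need not be fed back into the induction hypothesis); and checking that every location of $\dom{\bpart{\aheap'}}$ belongs to one of these two families and that a \gconnection found in a subheap survives in $\aheap'$, which both rely on the local, value-based characterization of $\mpart{\aheap'}$ and $\bpart{\aheap'}$ recalled after Definition~\ref{def:decor_heap}.
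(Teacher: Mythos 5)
Your proposal is correct and rests on exactly the same two observations as the paper's proof: a location mapped to $\mvec{\bots}$ cannot be allocated by the points-to atom of a rule from Definition~\ref{def:right_rule} (since that atom's value contains $\astore(\mvec{\fvar})$ with $\nfv>0$ and $\astore(\fvar_i)\neq\bots$), so it must be allocated by some $\botp(z_i)$ from $\xi_I$, and $z_i$ occurs among the last $\maxv$ record fields of that rule's root, which is therefore a \gconnection. The paper states this directly in a few lines, whereas you wrap it in an explicit induction on the derivation; the extra scaffolding is sound but not essential.
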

  \begin{proof}
    Let $\ell \in \dom{\bpart{\aheap'}}$. By definition $\aheap'(\ell) =
    \mvec{\bots}$.  Observe that $\ell$ cannot be allocated by
    the points-to atom of a rule in $\rightsid$,
    since otherwise, by Definition \ref{def:right_rule}, it would be
    mapped to a tuple containing
    $\astore(\fvar_1),\dots,\astore(\fvar_\nfv)$, hence to a tuple
    distinct from $\mvec{\bot}$ since $\nfv > 0$ and $\astore(\fvar_i)
    \not = \bots$ for $i = 1,\dots,\nfv$.  Consequently, $\ell$ must be
    allocated by a predicate $\botp(z_i)$ invoked in a rule in Definition
    \ref{def:right_rule}.  Since $z_i$ also occurs as one of the last
    $\maxv$ components on the right-hand side of the points-to atom of the
    considered rule, necessarily $\ell$ has a \gconnection in $\aheap'$. \qed
  \end{proof}

The proof of Lemma \ref{lem:right_equiv} is by induction on the pair
$(\len{\aheap},\size{\alpha})$, using the lexicographic order.  We
distinguish several cases, depending on the form of
$\alpha$: \begin{compactitem}
\item{If $\alpha$ is of the form $x \iseq y$ then by Definition
  \ref{def:right_deco}, $\rightdecor{\alpha} = \alpha$. We have
  $(\astore,\aheap') \models_{\rightsid} \rightdecor{\alpha}$ iff
  $\aheap' = \emptyset$ and $\astore(x) = \astore(y)$, and
  $(\astore,\aheap) \models_{\asid} {\alpha}$ iff $\aheap = \emptyset$
  and $\astore(x) = \astore(y)$. By Lemma \ref{prop:truncemp},
  $\aheap' = \emptyset$ iff $\aheap = \emptyset$, hence the result.}
\item{The proof is similar if $\alpha$ is of the form $x \not\iseq
  y$.}
\item{Assume that $\alpha = \alpha_1 \vee \alpha_2$.  By construction
  we have $\rightdecor{\alpha} = \rightdecor{\alpha}_1 \vee
  \rightdecor{\alpha}_2$.  Now, $(\astore,\aheap') \models_{\rightsid}
  \rightdecor{\alpha}$ if and only if $(\astore,\aheap')
  \models_{\rightsid} \rightdecor{\alpha_i}$, for some $i \in
  \{1,2\}$.  By the induction hypothesis, this is equivalent to
  $(\astore,\aheap) \models_{\rightsid} \alpha_i$, for some $i \in
  \{1,2\}$, i.e.\ equivalent to $(\astore,\aheap) \models_{\rightsid}
  \alpha$. }
\item{Assume that $\alpha = \alpha_1 * \alpha_2$. Then it is
  straightforward to check that $\rightdecor{\alpha} =
  \rightdecor{\alpha}_1 * \rightdecor{\alpha}_2$.  If
  $(\astore,\aheap') \models_{\rightsid} \rightdecor{\alpha}$ then
  there exists $\aheap_1',\aheap_2'$ such that $\aheap' = \aheap_1'
  \dunion \aheap_2'$ and $(\astore,\aheap'_i) \models_{\rightsid}
  \rightdecor{\alpha_i}$ for $i = 1,2$.  Let $\aheap_i =
  \trunc{\aheap_i'}$.  By Lemma \ref{prop:rightgconn}, every location
  in $\bpart{\aheap_i'}$ has a \gconnection in $\aheap_i'$, thus, by
  Lemma \ref{prop:expand-restr} (applied with $D = \dom{\aheap_i'}$),
  we deduce that $(\astore,\aheap_i') \expands{\id}
  (\astore,\aheap_i)$. By the induction hypothesis, we deduce that
  $(\astore,\aheap_i) \models_{\asid} \alpha_i$, and by Lemma
  \ref{prop:trunctunion}, $\aheap = \aheap_1 \dunion \aheap_2$.  Thus
  $(\astore,\aheap) \models_{\asid} \alpha$.

  Conversely, assume that $(\astore,\aheap) \models_{\asid} \alpha$.
  Then there exists $\aheap_1,\aheap_2$ such that $\aheap = \aheap_1
  \dunion \aheap_2$ and $(\astore,\aheap_i) \models_{\asid} \alpha_i$
  for $i = 1,2$. We define the sets: 
  \[\begin{array}{rcl}
  D & \isdef & \dom{\aheap'} \setminus \dom{\aheap} \\
  D_1 & \isdef & \setof{d \in D}{\gcon{d}{\aheap'} \in \dom{\aheap_1}} \\ 
  D_2 & \isdef & D \setminus D_1
  \end{array}\] 
  For $i = 1,2$, let $\aheap_i'$ be the restriction of $\aheap'$ to
  $\dom{\aheap_i}\cup D_i$. Since $(\astore, \aheap') \expands{\id}
  (\astore, \aheap)$ by hypothesis, it is straightforward to verify
  that $(\astore, \aheap_i') \expands{\id} (\astore, \aheap_i)$, and
  by the induction hypothesis, we deduce that $(\astore,\aheap_i')
  \models_{\rightsid} \rightdecor{\alpha_i}$ for $i = 1,2$.  By
  construction, $D_1$, $D_2$, $\dom{\aheap_1}$ and $\dom{\aheap_2}$
  are pairwise disjoint and $\dom{\aheap_1} \cup D_1 \cup
  \dom{\aheap_2} \cup D_2 = \dom{\aheap} \cup D = \dom{\aheap'}$,
  hence $\aheap' = \aheap_1' \dunion \aheap_2'$. We conclude that
  $(\astore,\aheap') \models_{\rightsid} \rightdecor{\alpha}$.  }
\item Assume that $\alpha = p(u_1,\dots,u_n)$, so that
  $\rightdecor{\alpha} = \rd{p}(u_1,\dots,u_n,\mvec{\fvar})$.  If
  $(\astore,\aheap') \models_{\rightsid} \rightdecor{\alpha}$, then
  $\rightsid$ contains a rule of the form
  \[ \rd{p}(x_1,\dots,x_n,\mvec{\fvar}) \Leftarrow x_1 \mapsto (y_1,\dots,y_\rank,\mvec{\fvar},z_1,\dots,z_\maxv) *
  \rightdecor{\genrest}\sigma * \xi_I * \chi_\sigma,\] satisfying the
  conditions of Definition \ref{def:right_rule}, and there exists an
  extension $\astorep$ of $\astore$ such that $(\astorep,\aheap')
  \models_{\rightsid} u_1 \mapsto
  (y_1,\dots,y_\rank,\mvec{\fvar},z_1,\dots,z_\maxv)\theta *
  \rightdecor{\genrest}\sigma\theta * \xi_I\theta * \chi_\sigma\theta$,
  with $\theta \isdef \substinterv{x_i}{u_i}{i}{\interv{1}{n}}$.  Then,
  necessarily, $\asid$ contains a rule:
  \[p(x_1,\dots,x_n) \Leftarrow x_1 \mapsto (y_1,\dots,y_\rank) * \genrest\ (\ddagger)\]
  Let $\aheap_1'$ be the restriction of $\aheap'$ to $\dom{\aheap}
  \setminus \left(\{ \astore(u_1) \} \cup \{\astorep(z_i) \mid i \in I
  \}\right)$ and let $\aheap_1 \isdef \trunc{\aheap_1'}$. By
  definition of $\aheap_1'$, we have $(\astorep,\aheap_1')
  \models_{\rightsid} \rightdecor{\genrest}\sigma\theta$.  By
  hypothesis $\alpha$ is \restricted w.r.t.\ $\{
  \fvar_1,\dots,\fvar_\nfv\}$, thus $\vargs{p(u_1,\dots,u_n)}
  \subseteq \{ \fvar_1,\dots,\fvar_\nfv\}$ and, since the entailment
  problem under consideration is \safe, rule $(\ddagger)$ is
  necessarily \restricted by Point (\ref{safe:ind}) of Definition
  \ref{def:safe}. Thus, $\genrest\theta$ must be \restricted
  w.r.t.\ $\{ \fvar_1,\dots,\fvar_\nfv\}$, by Lemma
  \ref{prop:restr-V}.  Since the image of $\sigma$ is contained in $\{
  \fvar_1,\dots,\fvar_\nfv\}$, we deduce that $\genrest\sigma\theta$
  is also \restricted w.r.t.\ $\{ \fvar_1,\dots,\fvar_\nfv\}$.  By the
  induction hypothesis, this entails that $(\astorep,\aheap_1)
  \models_{\asid} \genrest\sigma\theta$.  Since $(\astorep,\emptyset)
  \models_{\asid} \chi_\sigma\theta$ by definition of $\chi_\sigma$
  --- see Definition \ref{def:right_rule}, we have $\astorep(x\theta)
  = \astorep(x\sigma\theta)$ for every variable $x \in
  \dom{\sigma}$. But the latter equality trivially holds for every
  variable $x \not\in \dom{\sigma}$, hence replacing all variables
  $x\sigma\theta$ occurring in $\genrest\sigma\theta$ by $x\theta$
  preserves the truth value of the formula in
  $(\astorep,\aheap_1)$. Consequently $\genrest\sigma\theta$ and
  $\genrest\theta$ have the same truth value in $(\astorep,\aheap_1)$,
  and thus $(\astorep,\aheap_1) \models_{\asid} \genrest\theta$.
 
  Let $\aheap_u'$ denote the restriction of $\aheap'$ to
  $\set{\astore(u_1)}$.  By construction $\aheap'(\astorep(z_i)) =
  \mvec{\bots}$ for every $i \in I$, hence by Lemmas
  \ref{prop:exp-trunc} and \ref{prop:trunctunion} we have
  \[\aheap = \trunc{\aheap'} = \trunc{\aheap_1' \dunion \aheap_u'} =
  \aheap_1 \dunion \trunc{\aheap_u'}.\] Since $\trunc{\aheap_u'} =
  \{(\astorep(u_1), \tuple{\astorep(y_1\theta), \dots,
    \astorep(y_\rank\theta)})\}$, we deduce that
  $(\astorep,\aheap) \models_{\asid} u_1 \mapsto
  (y_1,\dots,y_\rank)\theta * \genrest\theta$, and therefore that
  $(\astore,\aheap) \models_{\asid} p(u_1,\dots,u_n)$.

  Conversely, assuming that $(\astore,\aheap) \models_{\asid}
  p(u_1,\dots,u_n)$, $\asid$ contains a rule:
  \[p(x_1,\dots,x_n) \Leftarrow x_1 \mapsto (y_1,\dots,y_\rank) *
  \genrest\ (\dagger)\] and there exists an extension $\astorep$ of
  $\astore$ such that $(\astorep,\aheap) \models_{\asid} u_1 \mapsto
  (y_1,\dots,y_\rank)\theta * \genrest\theta$, where $\theta \isdef
  \substinterv{x_i}{u_i}{i}{\interv{1}{n}}$.  Thus we must have:
  \[\aheap'(\astorep(u_1))\ =\ \tuple{\astorep(y_1\theta), \dots, \astorep(y_\rank\theta),
    \astorep(\mvec{\fvar}),\ell_1,\dots,\ell_{\maxv}}\] for some
  locations $\ell_1,\dots,\ell_{\maxv}$.  Since the variables
  $z_1,\dots,z_\maxv$ do not occur in $\{ x_1,\dots,x_n,$ \\ 
  $y_1,\ldots,y_\rank, \fvar_1, \dots, \fvar_\nfv \} \cup \fv{\genrest}$
  by hypothesis (see Definition \ref{def:right_rule}), we may assume,
  w.l.o.g., that $\astorep(z_i) = \ell_i$.  Let $I \isdef \{ i \in
  \interv{1}{\maxv}\mid \aheap'(\ell_i) = \mvec{\bot} \}$ and let
  $\sigma$ be the substitution defined as follows: \begin{compactitem}
  \item $\dom{\sigma} = (\fv{\genrest} \setminus \{ x_1 \}) \cap \{x
    \mid \astorep(x\theta) \in \{\astorep(\fvar_1), \ldots,
    \astorep(\fvar_\maxv)\}\}$, and
  \item for every variable $x \in \fv{\genrest} \setminus \{ x_1 \}$,
    such that $\astorep(x\theta) = \astorep(\fvar_i)$ for some $i \in
    \interv{1}{\nfv}$, we let $\sigma(x) \isdef w_i$; if several such
    values of $i$ are possible, then one is chosen arbitrarily.
  \end{compactitem}
  By construction (see again Definition \ref{def:right_rule}),
  $\rightdecor{\asid}$ contains the rule:
  \[ \rd{p}(x_1,\dots,x_n,\mvec{\fvar}) \Leftarrow x_1 \mapsto (y_1,\dots,y_\rank,\mvec{\fvar},z_1,\dots,z_\maxv) *
  \rightdecor{\genrest}\sigma * \xi_I * \chi_\sigma.\] Let $\aheap_1$ be
  the restriction of $\aheap$ to $\dom{\aheap}\setminus \{ \astore(u_1)
  \}$, and let $\aheap_1'$ be the restriction of $\aheap'$ to
  $\dom{\aheap'}\setminus \set{\astore(u_1), \ell_i \mid i \in I}$.
  Since $(\astore, \aheap') \expands{\id} (\astore, \aheap)$, no element
  $\ell_i$ with $i\in I$ can be in $\dom{\aheap}$.  By Definition, every
  element of $\dom{\bpart{\aheap_1'}}$ has a \gconnection in $\aheap_1'$
  (since the locations $\ell_i$ with $i \in I$ are the only elements of
  $\dom{\bpart{\aheap'}}$ whose \gconnection is $\astore(u_1)$), and by
  Lemma \ref{prop:expand-restr} we deduce that $(\astore, \aheap_1')
  \expands{\id} (\astore, \aheap_1)$.
  
  Note that by hypothesis $\alpha$ is \restricted w.r.t.\ $\{
  \fvar_1,\dots,\fvar_\nfv\}$ and the entailment problem $\aprob$ is
  \safe, thus $\genrest\theta$ is \restricted w.r.t.\ $\{
  \fvar_1,\dots,\fvar_\nfv\}$ by Lemma \ref{prop:restr-V}.  By
  construction $\img{\sigma} \subseteq \{ \fvar_1,\dots,\fvar_\nfv\}$
  thus necessarily, $\genrest\sigma\theta$ must also be \restricted
  w.r.t.\ $\{ \fvar_1,\dots,\fvar_\nfv\}$. Furthermore, since the rule
  ($\dagger$) is \connected, for every $q(x_1',\dots,x_m')$ occurring
  in $\genrest$, if $x_1'\not \in \{ y_1,\dots,y_\rank \}$ then
  necessarily $x_1'\theta \in \vargs{p(u_1, \ldots, u_n)} \subseteq \{
  \fvar_1,\dots,\fvar_\nfv \}$.  By the definition of $\sigma$ we
  deduce that $x_1' \in \dom{\sigma}$, and the rule above must be
  connected and occur in $\rightsid$ (note that we cannot have $x_1' =
  x_1$, because all the rules are progressing, hence
  $\astore_e(x_1'\theta) \in \dom{\aheap_1}$ and by definition
  $\astore(x_1\theta) \not\in \dom{\aheap_1}$).

  Now $\genrest\sigma\theta$ is \restricted w.r.t.\ $\{
  \fvar_1,\dots,\fvar_\nfv\}$, and since $(\astorep,\aheap_1)
  \models_{\asid} {\genrest}\theta$, by definition of $\sigma$ we have
  $(\astorep,\aheap_1) \models_{\asid} {\genrest}\sigma\theta$. By the
  induction hypothesis, $(\astorep,\aheap_1') \models_{\rightsid}
  \rd{\genrest}\sigma\theta$.  For every $i \in I$, we have
  $\aheap'(\astorep(z_i)) = \mvec{\bot}$, and by definition of
  $\sigma$ we have $\astorep(x\theta) = \astorep(x\sigma\theta)$ for
  every $x\in\dom{\sigma}$, thus $(\astorep,\aheap')
  \models_{\rightsid} x_1 \mapsto
  (y_1,\dots,y_\rank,\mvec{\fvar},z_1,\dots,z_\maxv)\theta *
  \genrest\sigma\theta * \xi_I\theta * \chi_\sigma\theta$, hence
  $(\astore,\aheap') \models_{\rightsid}
  \rd{p}(u_1,\dots,u_n,\mvec{\fvar})$. \qed
\end{compactitem}}

\subsection{Transforming the Antecedent}
\label{sect:ant}

\putInAppendix{\section{Additional Material for Section \ref{sect:ant}}}{}

We now describe the transformation operating on the left-hand side of
the entailment problem. For technical convenience, we make the
following assumption:

\begin{assumption}\label{ass:rule-genbod}
  We assume that, for every predicate $p \in\leftpred$, every rule of
  the form $p(x_1,\dots,x_n) \Leftarrow \genbod$ in $\asid$ and every
  atom $q(x'_1,\dots,x'_m)$ occurring in $\genbod$, $x'_1 \not \in \{
  x_1,\ldots,x_n\}$.
\end{assumption}
This is without loss of generality, because every variable $x'_1 \in
\{x_1,\ldots,x_n\}$ can be replaced by a fresh variable $z$, while
conjoining the equational atom $z \iseq x'_1$ to $\genbod$. Note that
the obtained SID may no longer be connected, but this is not
problematic, \comment[ri]{added} because the left-hand side of the
entailment is not required to be connected anyway.

\begin{definition}\label{def:deco}
We associate each pair $(p,X)$, where $p \in \leftpred$, $\arity{p}=n$
and $X \subseteq \interv{1}{n}$, with a fresh predicate symbol
$\decorated{p}{X}$, such that $\arity{\decorated{p}{X}} = n + \nfv$.
A {\em decoration} of a formula $\alpha$ containing no points-to
atoms, such that $\preds{\alpha} \subseteq \leftpred$, is a formula
obtained by replacing each predicate atom $\beta \isdef
q(y_1,\dots,y_m)$ in $\alpha$ by an atom of the form
$\decorated{q}{X_\beta}(y_1,\dots,y_m,\mvec{\fvar})$, with $X_\beta
\subseteq \interv{1}{m}$. The set of {\decoration}s of a formula
$\alpha$ is denoted by $\decor{\alpha}$.
\end{definition}
The role of the set $X$ in a predicate atom
$p_X(x_1,\ldots,x_n,\mvec{\fvar})$ will be explained below.  Note that
the set of {\decoration}s of an atom $\alpha$ is always finite.

\begin{definition}\label{def:deco_rule}
  We denote by $\decor{\asid}$ the set of rules of the form
  \[\decorated{p}{X}(x_1,\dots,x_n,\mvec{\fvar}) \Leftarrow
  x_1 \mapsto
  (y_1,\dots,y_\rank,\mvec{\fvar},z_1,\dots,z_{\maxv})\sigma *
  \genrest' * \bigast_{i \in I} \botp(z_i),\]
  where: \begin{compactitem}
  \item{ $p(x_1,\dots,x_n) \Leftarrow x_1 \mapsto (y_1,\dots,y_\rank)
    * \genrest$ is a rule in $\asid$ and $X \subseteq \interv{1}{n}$;}
  \item{$\{ z_1,\dots,z_\maxv \} = (\fv{\genrest} \cup \{
    y_1,\dots,y_\rank\}) \setminus \{ x_1,\dots,x_n \}$,}
  \item{$\sigma$ is a substitution, with $\dom{\sigma} \subseteq
    \{ z_1,\dots,z_\maxv \}$ and $\img{\sigma} \subseteq \{
    x_1,\dots,x_n,\fvar_1,\dots,\fvar_\nfv, z_1,\dots,z_\maxv \}$;}
  \item{ $\genrest'$ is a \decoration of $\genrest\sigma$;}
  \item{$I \subseteq \{ 1,\dots,\maxv\}$ and $z_i \not
    \in\dom{\sigma}$, for all $i \in I$.}
 \end{compactitem}
\end{definition}
 
\begin{lemma}\label{lem:deco2form}
  Let $\alpha$ be a formula containing no points-to atom, with
  $\preds{\alpha} \subseteq \leftpred$, and let $\alpha'$ be a
  \decoration of $\alpha$.  If $(\astore,\aheap')
  \models_{\decor{\asid}} \alpha'$ and $(\astore,\aheap')
  \expands{\id} (\astore,\aheap)$, then $(\astore,\aheap)
  \models_{\asid} \alpha$.
\end{lemma}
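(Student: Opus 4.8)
The plan is to prove this by induction on the pair $(\len{\aheap}, \size{\alpha})$ ordered lexicographically, mirroring the proof of Lemma~\ref{lem:right_equiv} but keeping only the projection direction and working with $\decor{\asid}$ and decorations in place of $\rightsid$ and $\rd{\,\cdot\,}$. First I would isolate an auxiliary lemma, the analogue of Lemma~\ref{prop:rightgconn}: if $(\astore,\aheap') \models_{\decor{\asid}} \beta'$ for a decoration $\beta'$, then every $\ell \in \dom{\bpart{\aheap'}}$ has a \gconnection in $\aheap'$. This holds because $\aheap'(\ell) = \mvec{\bots}$ prevents $\ell$ from being allocated by the head points-to atom of a rule of Definition~\ref{def:deco_rule} (that cell contains $\astore(\mvec{\fvar})$ and $\nfv > 0$, $\astore(\fvar_i) \neq \bots$); hence $\ell$ is allocated by a $\botp(z_i)$ atom, and since $z_i$ also occurs among the last $\maxv$ fields of that rule's head cell, the latter is a \gconnection of $\ell$. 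I would also record that $\expands{\id}$ depends on the store only through $\astore(\mvec{\fvar})$, so $\astore$ may be replaced by any extension in all the expansion statements below.

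The base cases are easy: a decoration leaves $x \iseq y$ and $x \not\iseq y$ unchanged, so $\alpha' = \alpha$, and $(\astore,\aheap') \expands{\id} (\astore,\aheap)$ gives $\aheap' = \emptyset \iff \aheap = \emptyset$ by Lemma~\ref{prop:truncemp}, which transfers the truth value. For $\alpha = \alpha_1 \vee \alpha_2$ the decoration distributes and the claim follows from the induction hypothesis; $\alpha = \exists x ~.~ \alpha_1$ is similar, using that $x \notin \{\fvar_1, \dots, \fvar_\nfv\}$. For $\alpha = \alpha_1 * \alpha_2$, a decoration is $\alpha_1' * \alpha_2'$, a model splits as $\aheap' = \aheap_1' \dunion \aheap_2'$ with $(\astore,\aheap_i') \models_{\decor{\asid}} \alpha_i'$, and with $\aheap_i \isdef \trunc{\aheap_i'}$ the auxiliary lemma plus Lemma~\ref{prop:expand-restr} (applied with $D = \dom{\aheap_i'}$) give $(\astore,\aheap_i') \expands{\id} (\astore,\aheap_i)$; the induction hypothesis yields $(\astore,\aheap_i) \models_{\asid} \alpha_i$, and Lemmas~\ref{prop:exp-trunc} and~\ref{prop:trunctunion} give $\aheap = \aheap_1 \dunion \aheap_2$.

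The substantial case is $\alpha = p(u_1, \dots, u_n)$, with $\alpha' = \decorated{p}{X}(u_1, \dots, u_n, \mvec{\fvar})$. Here a model provides a rule of Definition~\ref{def:deco_rule}, built from a rule $p(x_1, \dots, x_n) \Leftarrow x_1 \mapsto (y_1, \dots, y_\rank) * \genrest$ of $\asid$ with a substitution $\sigma$, a set $I$, and a decoration $\genrest'$ of $\genrest\sigma$, together with an extension $\astorep$ of $\astore$ satisfying the $\theta$-instance of the rule body, where $\theta \isdef \substinterv{x_i}{u_i}{i}{\interv{1}{n}}$. I would split $\aheap' = \aheap_0' \dunion \aheap_r' \dunion \aheap_I'$ into the cell allocated by the head atom, the part satisfying $\genrest'\theta$, and the cells $\{\astorep(z_i) \mid i \in I\}$ (all equal to $\mvec{\bots}$). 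The auxiliary lemma and Lemma~\ref{prop:expand-restr} give $(\astorep,\aheap_r') \expands{\id} (\astorep, \trunc{\aheap_r'})$, and since $\trunc{\aheap_r'}$ drops the head cell and $\genrest'\theta$ is a decoration of $\genrest\sigma\theta$ with $\preds{\genrest\sigma\theta} = \preds{\genrest} \subseteq \leftpred$, the induction hypothesis gives $(\astorep, \trunc{\aheap_r'}) \models_{\asid} \genrest\sigma\theta$. Then I would eliminate $\sigma$: since $\dom{\sigma} \subseteq \{z_1, \dots, z_\maxv\}$ is disjoint from $\{x_1, \dots, x_n\}$, the substitution lemma lets me pass to the store $\astorep^*$ that agrees with $\astorep$ except that $\astorep^*(z_j) \isdef \astorep(z_j\sigma\theta)$ for $z_j \in \dom{\sigma}$, obtaining $(\astorep^*, \trunc{\aheap_r'}) \models_{\asid} \genrest\theta$ and, after checking the first $\rank$ fields of the head cell, $(\astorep^*, \trunc{\aheap_0'}) \models_{\asid} (x_1 \mapsto (y_1, \dots, y_\rank))\theta$. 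As $\trunc{\aheap_I'} = \emptyset$ and $\aheap = \trunc{\aheap'} = \trunc{\aheap_0'} \dunion \trunc{\aheap_r'}$ (Lemmas~\ref{prop:exp-trunc} and~\ref{prop:trunctunion}), this gives $(\astorep^*, \aheap) \models_{\asid} (x_1 \mapsto (y_1, \dots, y_\rank) * \genrest)\theta$, and since $\astorep^*$ coincides with $\astore$ on $\{u_1, \dots, u_n\}$ (the $z_j$ being fresh), we conclude $(\astore, \aheap) \models_{\asid} p(u_1, \dots, u_n)$.

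I expect the main obstacle to be precisely this predicate case: reconciling the $\sigma\theta$-instances produced by $\decor{\asid}$ with the $\theta$-instances of the original $\asid$-rules — concretely, the field-wise equalities of the head cell, the fact that $\genrest'\theta$ is a decoration of $\genrest\sigma\theta$, and the strict decrease of the induction measure (the head cell is allocated in $\aheap$ but not in $\trunc{\aheap_r'}$). Everything else follows the template of the proof of Lemma~\ref{lem:right_equiv}.
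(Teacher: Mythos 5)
Your proposal is correct and follows essentially the same route as the paper's proof: the same lexicographic induction on $(\len{\aheap},\size{\alpha})$, the same auxiliary \gconnection lemma (the paper's Lemma~\ref{prop:leftgconn}), the same use of Lemmas~\ref{prop:expand-restr}, \ref{prop:exp-trunc} and~\ref{prop:trunctunion} in the $*$ and predicate cases, and the same decomposition isolating the head cell and the $\botp$-allocated cells. Your explicit store $\astorep^*$ for eliminating $\sigma$ only spells out a step the paper leaves implicit (it silently picks the extension mapping $z_j$ to $\astorep(z_j\sigma\theta)$ when re-establishing the $\asid$-rule body).
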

\optproof{
\subsection{Proof of Lemma \ref{lem:deco2form}}}
{ We need the following lemma, similar to Lemma \ref{prop:rightgconn},
  but for the left-hand side SID $\leftsid$:
\begin{lemma}\label{prop:leftgconn}
If $(\astore,\aheap') \models_{\leftsid} \alpha$ then for all $\ell
\in \dom{\bpart{\aheap'}}$, $\ell$ has a \gconnection in $\aheap'$.
\end{lemma}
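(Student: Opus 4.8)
The plan is to adapt the proof of Lemma~\ref{prop:rightgconn} almost verbatim, using Definition~\ref{def:deco_rule} in place of Definition~\ref{def:right_rule}. First I would fix $\ell \in \dom{\bpart{\aheap'}}$, so that $\aheap'(\ell) = \mvec{\bots}$ by definition of $\bpart{\cdot}$. Since $\alpha$ contains no points-to atom and its predicates are decorated left predicates, $(\astore,\aheap') \models_{\leftsid} \alpha$ means that $\ell$ is allocated, inside some unfolding of $\alpha$ w.r.t.\ $\leftsid$, by a unique points-to atom; and every rule of $\leftsid$ is either of the shape given in Definition~\ref{def:deco_rule} (whose body contains no points-to atom other than the ``main'' one $x_1 \mapsto (y_1,\dots,y_\rank,\mvec{\fvar},z_1,\dots,z_\maxv)\sigma$, since $\genrest'$ decorates the points-to-free formula $\genrest\sigma$, using that $\asid$ is progressing) or the rule $\botp(x) \Leftarrow x \mapsto \mvec{\bot}$. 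Hence $\ell$ is allocated either by an instance of such a main atom or by an instance of the $\botp$-rule.

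Next I would exclude the first alternative. Because $\dom{\sigma} \subseteq \{ z_1,\dots,z_\maxv \}$ and the variables $\fvar_1,\dots,\fvar_\nfv$ do not occur in $\asid$, we have $\mvec{\fvar}\sigma = \mvec{\fvar}$, so the tuple stored in $\ell$ would carry $\astore(\fvar_1),\dots,\astore(\fvar_\nfv)$ in its components $\rank+1,\dots,\rank+\nfv$; here I use that the local store of any rule instance coincides with $\astore$ on the parameters $\mvec{\fvar}$, a property that propagates through every unfolding because decorations always pass $\mvec{\fvar}$ as the last $\nfv$ arguments (Definition~\ref{def:deco}). As $\nfv > 0$ and $\astore(\fvar_i) \neq \bots$ for all $i$, this tuple is distinct from $\mvec{\bots}$, contradicting $\aheap'(\ell) = \mvec{\bots}$.

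Therefore $\ell$ is allocated by an atom $\botp(z_i)$, for some $i \in I$, occurring in the body of an instance of a rule of $\decor{\asid}$; let $\ell'$ be the location allocated by the main points-to atom of that very instance. By the previous paragraph $\aheap'(\ell') \neq \mvec{\bots}$, hence $\ell' \in \dom{\mpart{\aheap'}}$. Since $i \in I$ forces $z_i \notin \dom{\sigma}$, the variable $z_i$ occurs literally as the $(\rank+\nfv+i)$-th component of $(y_1,\dots,y_\rank,\mvec{\fvar},z_1,\dots,z_\maxv)\sigma$, so $\mpart{\aheap'}(\ell')$ has the form $\tuple{\mvec{a},\mvec{\ell},b_1^{\ell'},\dots,b_\maxv^{\ell'}}$ with $b_i^{\ell'} = \ell$; by Point~\ref{it:decor:link} of Definition~\ref{def:decor_heap}, $\ell'$ is a \gconnection of $\ell$ in $\aheap'$, which finishes the argument. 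I expect no genuine obstacle beyond bookkeeping: the only facts needing attention are that $\sigma$ fixes $\mvec{\fvar}$ and fixes each $z_i$ with $i \in I$, both immediate from Definition~\ref{def:deco_rule}, the remainder being the same reasoning as in Lemma~\ref{prop:rightgconn}.
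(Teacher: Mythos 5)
Your proposal is correct and follows exactly the intended argument: the paper's own proof of this lemma is literally ``similar to that of Lemma~\ref{prop:rightgconn}'', and your adaptation (a location mapped to $\mvec{\bots}$ cannot be allocated by the main points-to atom of a rule in Definition~\ref{def:deco_rule} because $\nfv>0$ and $\astore(\fvar_i)\neq\bots$, so it is allocated by some $\botp(z_i)$ with $i\in I$, and $z_i\sigma=z_i$ appears among the last $\maxv$ components of the main points-to image, yielding the \gconnection) is precisely that argument, with the bookkeeping about $\sigma$ fixing $\mvec{\fvar}$ and the $z_i$ for $i\in I$ spelled out. No gaps.
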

\begin{proof}
The proof is similar to that of Lemma \ref{prop:rightgconn}. \qed
\end{proof}

The proof of Lemma \ref{lem:deco2form} is by induction on the pair
$(\len{\aheap},\size{\alpha'})$, using the lexicographic order. We
distinguish several cases: \begin{compactitem}
  \item{If $\alpha'$ is of the form $x \iseq y$ or $x \not \iseq y$,
    then necessarily $\alpha = \alpha'$ and $\aheap' =
    \emptyset$. Thus $\aheap = \emptyset$ by Lemma
    \ref{prop:truncemp} and $(\astore,\aheap) \models_{\asid}
    \alpha$.}
  \item{If $\alpha'$ is of the form $\alpha_1' \vee \alpha_2'$ then
    $\alpha$ is of the form $\alpha_1 \vee \alpha_2$ where $\alpha_i'$
    is a \decoration of $\alpha_i$.  If $(\astore,\aheap')
    \models_{\decor{\asid}} \alpha'$ then $(\astore,\aheap')
    \models_{\decor{\asid}} \alpha'_i$ for some $i = 1,2$, and by the
    induction hypothesis we deduce that $(\astore,\aheap)
    \models_{\asid} \alpha_i$, thus $(\astore,\aheap) \models_{\asid}
    \alpha$. }
  \item{If $\alpha'$ is of the form $\alpha_1' * \alpha_2'$ then
    $\aheap' = \aheap_1' \dunion \aheap_2'$, with $(\astore,\aheap_i')
    \models_{\decor{\asid}} \alpha_i'$ and $\alpha$ is of the form
    $\alpha_1 * \alpha_2$ where $\alpha_i'$ is a \decoration of
    $\alpha_i$, for $i = 1, 2$. Let $\aheap_i$ be the restriction of
    $\aheap$ to the locations occurring in $\dom{\aheap_i'}$.  By
    Lemma \ref{prop:leftgconn}, every element of
    $\dom{\bpart{\aheap_i'}}$ has a \gconnection in $\aheap_i'$.
    Therefore, by Lemma \ref{prop:expand-restr}, we deduce that
    $(\astore,\aheap_i') \expands{\id} (\astore,\aheap_i)$ and by the
    induction hypothesis we deduce that $(\astore,\aheap_i)
    \models_{\asid} \alpha_i$.  By definition of $\expands{\id}$, we
    have $\dom{\aheap_i} = \dom{\aheap_i'} \setminus \{ \ell \mid
    \aheap'(\ell) = \mvec{\bots} \}$, and since $\dom{\aheap_1'} \cap
    \dom{\aheap_2'} = \emptyset$, $\aheap_1$ and $\aheap_2$ are
    disjoint. Furthermore, we have:
    \[
    \begin{tabular}{lll}
      $\dom{\aheap}$ & $=$ & $\dom{\aheap'}\setminus \{ \ell \mid \aheap'(\ell) = \mvec{\bots}  \}$ \\
      & = & $(\dom{\aheap_1'} \cup \dom{\aheap_2'}) \setminus \{ \ell \mid \aheap'(\ell) = \mvec{\bots}  \} $\\
      &  = & $(\dom{\aheap_1'} \setminus \{ \ell \mid \aheap'(\ell) = \mvec{\bots}  \}) \cup (\dom{\aheap_2'} \setminus \{ \ell \mid \aheap'(\ell) = \mvec{\bots}  \})$ \\
      & = & $\dom{\aheap_1} \cup \dom{\aheap_2}$,
    \end{tabular}
    \]
    therefore $\aheap = \aheap_1 \dunion \aheap_2$ and
    $(\astore,\aheap) \models_{\asid} \alpha_1 * \alpha_2 = \alpha$. }
  \item{If $\alpha'$ is of the form
    $\decorated{p}{X}(u_1,\dots,u_n,\mvec{\fvar})$, then $\alpha =
    p(u_1,\dots,u_n)$.  By definition $\decor{\asid}$ contains a rule
    \[\decorated{p}{X}(x_1,\dots,x_n,\mvec{\fvar}) \Leftarrow  
    x_1 \mapsto
    (y_1,\dots,y_\rank,\mvec{\fvar},z_1,\dots,z_{\maxv})\sigma *
    \genrest' * \bigast_{i \in I} \botp(z_i)\] satisfying the
    conditions of Definition \ref{def:deco_rule}, and there exists an
    extension $\astorep$ of $\astore$ with $(\astorep,\aheap')
    \models_{\decor{\asid}} u_1 \mapsto
    (y_1,\dots,y_\rank,\mvec{\fvar},z_1,\dots,z_{\maxv})\sigma\theta *
    \genrest'\theta * \bigast_{i \in I} \botp(z_i)$, where $\theta
    \isdef \substinterv{x_i}{u_i}{i}{\interv{1}{n}}$.  In particular,
    $\asid$ contains a rule $p(x_1,\dots,x_n) \Leftarrow x_1 \mapsto
    (y_1,\dots,y_\rank) * \genrest$, where $\genrest'$ is a
    \decoration of $\genrest\sigma$ and note that $\genrest'\theta$ is
    a \decoration of $\genrest\sigma\theta$.
			
    Let $\aheap_1'$ be the restriction of $\aheap'$ to $\dom{\aheap'}
    \setminus \left( \set{\astore(u_1)} \cup \{\astorep(z_i) \mid i
    \in I \}\right)$ and let $\aheap_1$ be the restriction of $\aheap$
    to $\dom{\aheap} \setminus \{ \astore(u_1) \}$. We have
    $(\astorep,\aheap_1') \models_{\decor{\asid}} \genrest'\theta$,
    with $\len{\aheap_1'} < \len{\aheap'}$.  By Lemma
    \ref{prop:leftgconn}, every element of $\dom{\bpart{\aheap_1'}}$
    has a \gconnection in $\aheap_1'$.  Since $(\astore,\aheap')
    \expands{\id} (\astore,\aheap)$, by Lemma \ref{prop:expand-restr}
    we have $(\astorep,\aheap_1') \expands{\id} (\astorep,\aheap_1)$.
    Hence, by the induction hypothesis, we deduce that
    $(\astorep,\aheap_1) \models_{\asid} \genrest\sigma\theta$.
    Moreover we have
    \[\aheap'(u_1)\ =\ (\astorep(y_1\sigma\theta),\dots, \astore(y_\rank\sigma\theta), 
    \astore(\mvec{\fvar}),\astore(z_1\sigma\theta),\dots,\astore(z_\maxv\sigma\theta)),\] 
    and since $(\astore,\aheap') \expands{\id} (\astore,\aheap)$, we
    deduce that $\aheap(u_1) =
    (\astorep(y_1\sigma\theta),\dots,\astore(y_\rank\sigma\theta))$.
    Consequently, $(\astorep,\aheap) \models_{\asid} (x_1 \mapsto
    (y_1,\dots,y_\rank) * \genrest)\theta$, and therefore
    $(\astore,\aheap) \models_{\asid} p(x_1\theta,\dots,x_n\theta) =
    p(u_1,\dots,u_n)$. \qed}		
  \end{compactitem}
}

At this point, the set $X$ for predicate symbol $p_X$ is of little
interest: atoms are simply decorated with arbitrary sets. However, we
shall restrict the considered rules in such a way that for every model
$(\astore,\aheap)$ of an atom
$\decorated{p}{X}(x_1,\ldots,x_{n+\nfv})$, with $n = \arity{p}$, the
set $X$ denotes a set of indices $i \in \interv{1}{n}$ such that
$\astore(x_i) \in \dom{\aheap}$.  In other words, $X$ will denote a
set of formal parameters of $\decorated{p}{X}$ that are allocated in
every model of $\decorated{p}{X}$.

\newcommand{\eqf}[1]{\sim_{#1}}
\newcommand{\allocf}[1]{\mathit{Alloc}(#1)}
\newcommand{\reachf}[1]{\mathit{Reach}(#1)}
\newcommand{\welldefined}{well-defined\xspace}
 
\begin{definition}\label{def:well-defined}
Given a formula $\alpha$, we define the set $\allocf{\alpha}$ as
follows: $x \in \allocf{\alpha}$ iff  $\alpha$ contains either a
points-to atom of the form \comment[ri]{changed} $x \mapsto (y_1,
\dots, y_{\rank+\maxv+\nfv})$, or a predicate atom
$\decorated{q}{X}(x'_1,\dots,x'_{m+\nfv})$ with $x'_i = x$ for some $i
\in X$.
\end{definition}
\comment[ri]{added} Note that, in contrast with Definition
\ref{def:alloc}, we do not consider that $x \in \allocf{\alpha}$, for
those variables $x$ related to a variable from $\allocf{\alpha}$ by
equalities.

\begin{definition}
\label{def:welldefined}
A rule $\decorated{p}{X}(x_1,\dots,x_{n+\nfv}) \Leftarrow \genbod$ in
$\decor{\asid}$ with $n = \arity{p}$ with $\rho = x_1 \mapsto
(y_1,\dots,y_k,\mvec{\fvar},z_1,\dots,z_\maxv) *\rho'$ is {\em
  \welldefined} if the following conditions hold: \begin{compactenum}
\item{ $\{ x_1 \} \subseteq
  \allocf{\decorated{p}{X}(x_1,\dots,x_{n+\nfv})} \subseteq
  \allocf{\genbod}$;
\label{welldefined:propagate}}
\item{ $\fv{\genbod} \subseteq \allocf{\genbod} \cup \{
  x_1,\dots,x_{n+\nfv}\}$. \label{welldefined:established}}
\end{compactenum}
We denote by $\leftsid$ the set of \welldefined rules in
$\decor{\asid}$.
\end{definition}
 
We first establish some important properties of $\leftsid$.

\begin{lemma}
\label{prop:alloc}
If $i \in X$ then $x_i$ is allocated in every predicate-less unfolding
of $p_X(x_1, \ldots, x_{n+\nfv})$.
\end{lemma}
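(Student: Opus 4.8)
The plan is to prove, by a straightforward induction, a statement about arbitrary formulas of which the lemma is an immediate instance. Concretely I would show: for every formula $\beta$ and every variable $x \in \allocf{\beta}$, if $\beta \Leftarrow_{\leftsid}^* \phi$ with $\phi$ predicate-less, then $x$ is allocated by $\phi$ in the sense of Definition \ref{def:alloc}. The lemma then follows by taking $\beta \isdef p_X(x_1,\dots,x_{n+\nfv})$: since $i \in X$, Definition \ref{def:well-defined} gives $x_i \in \allocf{\beta}$, so $x_i$ is allocated in every predicate-less unfolding of $\beta$. Throughout I will use two elementary facts that are read off directly from Definition \ref{def:well-defined}: $\allocf{\beta_1} \subseteq \allocf{\beta_2}$ whenever every atom occurring in $\beta_1$ also occurs in $\beta_2$, and applying a substitution $\theta$ turns a points-to or predicate witness for $x$ in $\beta$ into such a witness for $x\theta$ in $\beta\theta$, i.e.\ $\allocf{\beta}\theta \subseteq \allocf{\beta\theta}$.

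The induction is on the length of the unfolding sequence $\beta \Leftarrow_{\leftsid}^* \phi$. In the base case the sequence is empty and $\phi = \beta$ is already predicate-less; then $x \in \allocf{\beta}$ cannot come from a predicate atom, so it is witnessed by a points-to atom $x \mapsto (\dots)$ occurring in $\phi$, and $x$ is allocated by $\phi$ via the trivial (length-one) equality chain of Definition \ref{def:alloc}. For the inductive step, write $\beta \Leftarrow_{\leftsid} \beta' \Leftarrow_{\leftsid}^* \phi$, where $\beta'$ is obtained from $\beta$ by replacing a single occurrence of a predicate atom $q_Y(\vec u)$ by $\pi\theta$, for some rule $q_Y(\vec a) \Leftarrow \pi$ of $\leftsid$ and the substitution $\theta$ mapping the formal parameters $\vec a$ to the arguments $\vec u$ (the fresh renaming of the rule's existential variables plays no role here). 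It then suffices to establish $x \in \allocf{\beta'}$, since the induction hypothesis applied to the strictly shorter sequence $\beta' \Leftarrow_{\leftsid}^* \phi$ concludes.

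To establish $x \in \allocf{\beta'}$, distinguish cases on the atom of $\beta$ witnessing $x \in \allocf{\beta}$. If that witness is a points-to atom, or a predicate atom other than the replaced occurrence $q_Y(\vec u)$, then it still occurs verbatim in $\beta'$ (an unfolding step touches only the atom $q_Y(\vec u)$), hence $x \in \allocf{\beta'}$. The remaining case is the heart of the argument, and the only place where the hypothesis on $\leftsid$ is used: the witness is precisely $q_Y(\vec u)$, say $u_j = x$ with $j \in Y$. Since the rule $q_Y(\vec a) \Leftarrow \pi$ belongs to $\leftsid$ it is \welldefined, so Condition \ref{welldefined:propagate} of Definition \ref{def:welldefined} yields $\allocf{q_Y(\vec a)} \subseteq \allocf{\pi}$; as $j \in Y$ we have $a_j \in \allocf{q_Y(\vec a)}$, so $a_j \in \allocf{\pi}$, so $a_j\theta \in \allocf{\pi\theta}$, i.e.\ $x = u_j = a_j\theta \in \allocf{\pi\theta} \subseteq \allocf{\beta'}$ (every atom of $\pi\theta$ occurs in $\beta'$). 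In all cases $x \in \allocf{\beta'}$ and the induction goes through. I expect no real obstacle here: it is a routine induction, the only points requiring a little care being the handling of the unfolding substitution and the remark that one unfolding step leaves every other atom --- and hence its contribution to $\allocf{\cdot}$ --- intact.
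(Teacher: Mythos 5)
Your proof is correct and follows essentially the same route as the paper's: both are inductions on the length of the unfolding sequence whose engine is Condition \ref{welldefined:propagate} of Definition \ref{def:welldefined}, which pushes membership in $\allocf{\cdot}$ from a predicate atom's head into the instantiated rule body. The only difference is organizational: you strengthen the statement to the invariant that $x \in \allocf{\beta}$ is preserved by each single unfolding step of the whole formula and conclude at the predicate-less end, whereas the paper recursively descends into the sub-unfolding of the witnessing atom (the $\psi = \psi' * \psi''$ decomposition ``modulo AC''), a bookkeeping step your formulation neatly avoids.
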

\optproof{\subsection{Proof of Lemma \ref{prop:alloc}}}
{
  Let $\phi$ be a predicate-less unfolding of
  $\decorated{p}{X}(x_1,\dots,x_{n+\nfv})$. The proof is by induction
  on the length of derivation from
  $\decorated{p}{X}(x_1,\dots,x_{n+\nfv})$ to $\phi$.  Assume that
  $i\in X$.  Then $\leftsid$ contains a
  rule \[\decorated{p}{X}(x_1,\dots,x_n,\mvec{\fvar}) \Leftarrow x_1
  \mapsto (y_1,\dots,y_\rank,\mvec{\fvar},z_1,\dots,z_{\maxv})\sigma *
  \genrest' * \bigast_{i \in I} \botp(z_i)\] and $x_1 \mapsto
  (y_1,\dots,y_\rank,\mvec{\fvar},z_1,\dots,z_{\maxv})\sigma *
  \genrest' * \bigast_{i \in I} \botp(z_i) \Leftarrow_{\leftsid}^*
  \psi$. If $i = 1$ then it is clear that $x_i$ is allocated in
  $\psi$. Otherwise by Condition \ref{welldefined:propagate} of
  Definition \ref{def:welldefined} we have
  \begin{eqnarray*}	
    x_i &\in &\allocf{\decorated{p}{X}(x_1,\dots,x_n,\mvec{\fvar})}\\
    & \subseteq& \allocf{x_1 \mapsto (y_1,\dots,y_\rank,\mvec{\fvar},z_1,\dots,z_{\maxv})\sigma* \genrest' * \bigast_{i \in I} \botp(z_i)} \\
    &=& \{ x_1 \} \cup \allocf{\genrest' * \bigast_{i \in I} \botp(z_i)},	
  \end{eqnarray*} 
  thus (since $\{ z_1,\dots,z_{\maxv} \} \cap \{ x_1,\dots,x_n \} =
  \emptyset$) there exists an atom $\decorated{q}{Y}(x_1', \ldots,
  x_m')$ occurring in $\genrest'$ and an index $j\in Y$ such that $x_i
  = x_j'$.  Then $\psi$ is of the form (modulo AC) $\psi' * \psi''$,
  with $\decorated{q}{Y}(x_1', \ldots, x_m') \Leftarrow_{\asid}^*
  \psi'$, and by the induction hypothesis, $x_j'$ is allocated in
  $\psi'$, hence $x_i$ is allocated in $\psi$. \qed
}

\comment[ri]{changed the corollary into a lemma}
\begin{lemma}\label{cor:pce}
  Every rule in $\leftsid$ is progressing, connected and established.
\end{lemma}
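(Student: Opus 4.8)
The plan is to fix an arbitrary rule $r \in \leftsid$ and check the three conditions of Definition~\ref{def:pce} in turn. By Definitions~\ref{def:deco_rule} and~\ref{def:welldefined}, $r$ is obtained from a rule $p(x_1,\dots,x_n) \Leftarrow x_1 \mapsto (y_1,\dots,y_\rank) * \genrest$ of $\asid$ with $p \in \leftpred$, and has the shape
\[
\decorated{p}{X}(x_1,\dots,x_n,\mvec{\fvar}) \Leftarrow x_1 \mapsto (y_1,\dots,y_\rank,\mvec{\fvar},z_1,\dots,z_{\maxv})\sigma * \genrest' * \bigast_{i \in I} \botp(z_i),
\]
with $\{z_1,\dots,z_\maxv\} = (\fv{\genrest} \cup \{y_1,\dots,y_\rank\}) \setminus \{x_1,\dots,x_n\}$, $\dom{\sigma} \subseteq \{z_1,\dots,z_\maxv\}$, $z_i \notin \dom{\sigma}$ for $i \in I$, and $\genrest'$ a decoration of $\genrest\sigma$. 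For \emph{progress}, I would first note that $x_1\sigma = x_1$ (since $\dom{\sigma}$ is disjoint from $\{x_1,\dots,x_n\}$), so the body begins with a points-to atom allocating $x_1$; since every rule of $\asid$ is progressing by Condition~\ref{safe:progress} of Definition~\ref{def:safe}, $\genrest$ — hence $\genrest\sigma$, hence its decoration $\genrest'$ — contains no points-to atom, and the $\botp(z_i)$ are predicate atoms, so $r$ is progressing.

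For \emph{connectedness} I would show that the first argument of every predicate atom of $r$ occurs among the $\rank+\nfv+\maxv$ fields of $x_1 \mapsto (y_1,\dots,y_\rank,\mvec{\fvar},z_1,\dots,z_\maxv)\sigma$. For $\botp(z_i)$ with $i \in I$ this is immediate, since $z_i = z_i\sigma$ is the $(\rank+\nfv+i)$-th field. A predicate atom of $\genrest'$ is $\decorated{q}{Y}(v_1\sigma,\dots,v_m\sigma,\mvec{\fvar})$ for some atom $q(v_1,\dots,v_m)$ of $\genrest$; by Assumption~\ref{ass:rule-genbod} applied to the rule $p(x_1,\dots,x_n) \Leftarrow x_1 \mapsto (y_1,\dots,y_\rank) * \genrest$ we have $v_1 \notin \{x_1,\dots,x_n\}$, and as $v_1 \in \fv{\genrest}$ this forces $v_1 = z_j$ for some $j \in \interv{1}{\maxv}$, so the atom's first argument $z_j\sigma$ is the $(\rank+\nfv+j)$-th field. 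I expect this to be the main point of the proof and the step where the construction pays off: it works precisely because the added record fields $z_1,\dots,z_\maxv$ enumerate all of $\fv{\genrest} \cup \{y_1,\dots,y_\rank\}$ — making every possible first argument of a body atom available as a field — and because Assumption~\ref{ass:rule-genbod} rules out the case $v_1 \in \{x_1,\dots,x_n\}$, which those fields would not cover.

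For \emph{establishment}, let $\genbod$ be the body of $r$. Since $r \in \leftsid$ it is \welldefined, so by Condition~\ref{welldefined:established} of Definition~\ref{def:welldefined} every existential variable of $r$, i.e.\ every variable in $\fv{\genbod} \setminus \{x_1,\dots,x_n,\fvar_1,\dots,\fvar_\nfv\}$, lies in $\allocf{\genbod}$. It then suffices to argue that any $x \in \allocf{\genbod}$ is allocated in every predicate-less unfolding of $\genbod$: if $x$ is allocated by the points-to atom of $\genbod$ (namely $x = x_1$) this is clear; if $x = z_i$ occurs in some $\botp(z_i)$, its unique unfolding $z_i \mapsto \mvec{\bot}$ allocates $z_i$; and if $x$ occurs at a position $i \in Y$ of a predicate atom $\decorated{q}{Y}(x_1',\dots,x_{m+\nfv}')$ of $\genbod$, then Lemma~\ref{prop:alloc} gives that $x$ is allocated in every predicate-less unfolding of that atom, and since every predicate-less unfolding of $\genbod$ contains such an unfolding as a sub-conjunction (modulo associativity and commutativity of $*$), $x$ is allocated in it. Combining the three parts shows that every rule of $\leftsid$ is progressing, connected and established; the auxiliary rule $\botp(x) \Leftarrow x \mapsto \mvec{\bot}$ trivially has all three properties as well.
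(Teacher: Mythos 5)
Your proof is correct and follows essentially the same route as the paper's: progress is immediate from the progress of $\asid$, connectedness follows from Assumption~\ref{ass:rule-genbod} together with the fact that $z_1,\dots,z_\maxv$ enumerate $(\fv{\genrest}\cup\{y_1,\dots,y_\rank\})\setminus\{x_1,\dots,x_n\}$ and appear as record fields, and establishment combines Condition~\ref{welldefined:established} of Definition~\ref{def:welldefined} with Lemma~\ref{prop:alloc}. Your treatment is if anything slightly more explicit than the paper's (e.g.\ the separate handling of the $\botp(z_i)$ atoms in the establishment case), but the decomposition and the key ingredients are identical.
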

\begin{proof}
  Since $\asid$ is progressing by hypothesis, it is straightforward to
  verify that $\leftsid$ is also progressing. Consider a rule
  $\decorated{p}{X}(x_1,\dots,x_n,\mvec{\fvar}) \Leftarrow \genbod$,
  with
  \[\genbod\ \isdef\  x_1 \mapsto (y_1,\dots,y_\rank,\mvec{\fvar},z_1,\dots,z_{\maxv})\sigma * 
  \genrest' * \bigast_{i \in I} \botp(z_i),\] that occurs in
  $\leftsid$, and a predicate atom $\alpha$ occurring in $\genrest' *
  \bigast_{i \in I} \botp(z_i)$.  This rule is obtained from a rule
  $p(x_1,\dots,x_n) \Leftarrow x_1 \mapsto (y_1,\dots,y_\rank) *
  \genrest$ in $\asid$. The atom $\alpha$ is either of the form
  $\botp(z_i)$ for some $i\in I$ (so that $z_i \not \in
  \dom{\sigma}$), or a \decoration
  $\decorated{q}{Y}(x'_1,\dots,x'_m,\mvec{\fvar})$ of some atom
  $q(x'_1,\dots,x'_m)$ occurring in $\genrest$. By Assumption
  \ref{ass:rule-genbod} $x_1'\notin \set{x_1, \ldots, x_n}$, hence by
  definition of $z_1, \ldots z_\maxv$ we have $x_1'\in \set{z_1,
    \ldots, z_\maxv}\sigma$ and the rule is connected. Let $x$ be a
  variable occurring in $\fv{\genbod}\setminus\set{x_1, \ldots,
    x_{n+\nfv}}$ and assume $(\astore, \aheap) \models_{\leftsid}
  \genbod$. Then by Condition \ref{welldefined:established} of
  Definition \ref{def:welldefined}, $x\in \allocf{\genbod}$. Since $x \not = x_1$ and 
  $x_1
    \mapsto
    (y_1,\dots,y_\rank,\mvec{\fvar},z_1,\dots,z_{\maxv})\sigma$ is the only points-to atom in $\pi$, 
  $\genbod$ contains 
  \comment[ri]{the only points-to atom in $\genbod$ is $x_1
    \mapsto
    (y_1,\dots,y_\rank,\mvec{\fvar},z_1,\dots,z_{\maxv})\sigma$,
    right?}\comment[np]{true, changed the proof accordingly} 
    a predicate atom
  $\decorated{q}{X}(x'_1,\dots,x'_{m+\nfv})$ where $x'_i = x$ for some
  $i \in X$. 
  By
  Lemma \ref{prop:alloc}, $x_i'$ is allocated in any predicate-free
  unfolding of $\decorated{q}{X}(x'_1,\dots,x'_{m+\nfv})$, hence $x$
  is allocated in any predicate-free unfolding of $\genbod$.  This
  proves that the rule is established. \qed
\end{proof}

\newcommand{\consistent}{consistent\xspace}
\newcommand{\sfrm}[2]{#1[#2]}

We now relate the systems $\asid$ and $\leftsid$ by the following result: 

\begin{definition}
A store $\astore$ is {\em quasi-injective} if, for all $x,y\in
\dom{\astore}$, the implication $\astore(x) = \astore(y) \Rightarrow x
= y$ holds whenever $\{ x,y \} \not \subseteq \{
\fvar_1,\dots,\fvar_\nfv \}$.
\end{definition}

\begin{lemma}\label{lem:form2deco}
Let $L$ be an infinite subset of $\Loc$.  Consider a formula $\alpha$
containing no points-to atom, with $\preds{\alpha} \subseteq
\leftpred$, and let $(\astore,\aheap)$ be an $\asid$-model of
$\alpha$, where $\astore$ is quasi-injective, and $(\img{\astore} \cup
\locs{\aheap}) \cap L = \emptyset$.  There exists a \decoration
$\alpha'$ of $\alpha$, a heap $\aheap'$ and a mapping $\gamma: \Loc
\rightarrow\Loc$ such that: \begin{compactitem}
\item $(\astore,\aheap') \expands{\gamma} (\astore,\aheap)$,
\item if $\ell \not \in L$ then $\gamma(\ell) = \ell$,
\item $\locs{\aheap'} \setminus \img{\astore} \subseteq L$,
\item $\dom{\bpart{\aheap'}} \subseteq L$ and
\item $(\astore,\aheap') \models_{\leftsid} \alpha'$.
\end{compactitem}
Furthermore, if $\astore(u) \in \dom{\aheap'} \setminus
\{\astore(\fvar_i) \mid 1 \leq i \leq \nfv \}$ then $u \in
\allocf{\alpha'}$. 
\end{lemma}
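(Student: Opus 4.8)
The plan is to prove the statement by well-founded induction on the pair $(\len{\aheap},\size{\alpha})$ ordered lexicographically, in the style of the proofs of Lemmas~\ref{lem:right_equiv} and~\ref{lem:deco2form}. The equational cases are immediate: if $\alpha$ is $x\iseq y$ or $x\not\iseq y$ then $\aheap=\emptyset$, and one takes $\alpha'\isdef\alpha$, $\aheap'\isdef\emptyset$, $\gamma\isdef\id$, all bullets and the ``furthermore'' clause holding vacuously. For $\alpha=\alpha_1\vee\alpha_2$ one applies the induction hypothesis (same heap, smaller formula) to the disjunct satisfied by $(\astore,\aheap)$ and decorates the other arbitrarily, using $\allocf{\alpha_i'}\subseteq\allocf{\alpha'}$ for the ``furthermore'' clause. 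For $\alpha=\alpha_1*\alpha_2$, split $\aheap=\aheap_1\dunion\aheap_2$ along the two conjuncts, partition $L$ into two infinite disjoint subsets $L_1,L_2$, and apply the induction hypothesis to $(\astore,\aheap_i)\models_\asid\alpha_i$ with $L_i$, obtaining $\alpha_i'$, $\aheap_i'$, $\gamma_i$. The bullets ``$\locs{\aheap_i'}\setminus\img{\astore}\subseteq L_i$'' and ``$\dom{\bpart{\aheap_i'}}\subseteq L_i$'', together with $\img{\astore}\cap L=\emptyset$ and $\dom{\aheap_1}\cap\dom{\aheap_2}=\emptyset$, make $\dom{\aheap_1'}$ and $\dom{\aheap_2'}$ disjoint and force $\locs{\aheap_1'}\cap\locs{\aheap_2'}\subseteq\img{\astore}$, on which both $\gamma_1$ and $\gamma_2$ equal $\id$ (each being $\id$ outside $L_i$); hence $\gamma\isdef\gamma_i$ on $\locs{\aheap_i'}$ and $\gamma\isdef\id$ elsewhere is well defined, Lemma~\ref{prop:expand-dunion} applies, and the claims for $\alpha'\isdef\alpha_1'*\alpha_2'$ and $\aheap'\isdef\aheap_1'\dunion\aheap_2'$ follow by the same partition argument.

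The predicate case $\alpha=p(u_1,\dots,u_n)$ is the crux. Unfold using a rule $p(x_1,\dots,x_n)\Leftarrow x_1\mapsto(y_1,\dots,y_\rank)*\genrest$ of $\asid$ (progressing by hypothesis) and a witnessing extension $\astorep$ of $\astore$; since $\img{\astore}\cap L=\emptyset$, the value of a ``relevant'' existential variable of the rule is pinned down only up to elements of $\locs{\aheap}\cup\img{\astore}$, and the ``irrelevant'' ones may be sent to $\astore(u_1)$, so we may assume $\img{\astorep}\cap L=\emptyset$. Peel off the head cell at $\astore(u_1)$, leaving $(\astorep,\aheap_1)\models_\asid\genrest\theta$ with $\len{\aheap_1}<\len{\aheap}$ and $\theta\isdef\substinterv{x_i}{u_i}{i}{\interv{1}{n}}$. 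Introduce a collapsing substitution $\sigma$ on $\{z_1,\dots,z_\maxv\}$ --- this is precisely the purpose of the substitutions in Definition~\ref{def:deco_rule}: $\sigma(z)$ is a canonical representative of the location $\astorep(z)$, namely a parameter $x_j$ when $\astorep(z)=\astore(u_j)$, a free variable $\fvar_k$ when $\astorep(z)=\astore(\fvar_k)$, and otherwise the least-indexed existential variable sharing the value $\astorep(z)$ (in particular $z$ itself). Replacing a variable by an $\astore$-equal one preserves truth, so $(\astorep,\aheap_1)\models_\asid\genrest\sigma\theta$, and by construction the restriction of $\astorep$ to $\fv{\genrest\sigma\theta}\cup\{u_1,\dots,u_n,\fvar_1,\dots,\fvar_\nfv\}$ is quasi-injective. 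Apply the induction hypothesis to this model of $\genrest\sigma\theta$ (a formula with no points-to atom and with $\preds{\genrest\sigma\theta}\subseteq\leftpred$) with a fresh infinite subset $L_0\subseteq L$, keeping $L\setminus L_0$ in reserve; this yields a decoration $\genrest''$ of $\genrest\sigma\theta$, a heap $\aheap_1'$, a map $\gamma_1$, and the ``furthermore'' assertion relating the allocated locations of $\aheap_1'$ to $\allocf{\genrest''}$.

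It remains to re-root the subtrees and re-insert the head cell. Pick pairwise distinct locations $\ell_z\in L\setminus L_0$, one per surviving existential variable $z\notin\dom{\sigma}$ with $\astorep(z)\neq\bots$, and let $\eta$ be the involution of $\Loc$ swapping $\astorep(z)$ with $\ell_z$ for these $z$ (it fixes $\bots$ and $\img{\astore}$, and $\astorep(z)\notin L_0$ since $z$ is relevant, so $\gamma_1$ is the identity on $\astorep(z)$). By Lemmas~\ref{lem:gamma_inj} and~\ref{prop:expand-bij}, $(\apl{\eta}{\astorep},\apl{\eta}{\aheap_1'})\models_{\leftsid}\genrest''$ and $(\apl{\eta}{\astorep},\apl{\eta}{\aheap_1'})\expands{\gamma_1\circ\inv{\eta}}(\apl{\eta}{\astorep},\aheap_1)$, with $\apl{\eta}{\astorep}$ still an extension of $\astore$. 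Form $\aheap'$ by adjoining to $\apl{\eta}{\aheap_1'}$ the head cell at $\astore(u_1)$, mapped to $\tuple{\eta(\astorep(y_1\theta)),\dots,\eta(\astorep(y_\rank\theta)),\astore(\mvec{\fvar}),\eta(\astorep(z_1\sigma\theta)),\dots,\eta(\astorep(z_\maxv\sigma\theta))}$, and a $\botp$-cell at $\ell_z$ for every surviving existential $z$ that is not allocated in $\aheap_1$ and has $\astorep(z)\neq\bots$; let $I$ gather the corresponding indices and put $\alpha'\isdef\decorated{p}{X}(u_1,\dots,u_n,\mvec{\fvar})$ with $X\isdef\{1\}\cup\{\,j\mid\astore(u_j)\in\dom{\aheap}\,\}$. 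This realizes a rule of the shape prescribed by Definition~\ref{def:deco_rule} (one reads off the decoration sets of the atoms of $\genrest''$), and combining the head-cell expansion with $(\apl{\eta}{\astorep},\apl{\eta}{\aheap_1'})\expands{\gamma_1\circ\inv{\eta}}(\apl{\eta}{\astorep},\aheap_1)$ through Lemma~\ref{prop:expand-dunion} yields $(\astore,\aheap')\expands{\gamma}(\astore,\aheap)$ for the obvious glued $\gamma$; the three location bullets hold because, after re-rooting, the only ``old'' locations occurring in $\aheap'$ are $\astore(u_1)$ and the $\astore(u_j)$, $\astore(\fvar_k)$, everything else lying in $L$. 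The rule is \welldefined in the sense of Definition~\ref{def:welldefined}: Condition~\ref{welldefined:established} holds because every surviving existential is either $\botp$-allocated, or pinned to $\bots$ (hence occurring only as a $\bot$-valued record field, which demands nothing), or the first argument of a decorated atom of $\genrest''$ whose set contains $1$ (by the ``furthermore'' clause of the induction hypothesis); Condition~\ref{welldefined:propagate} is obtained by choosing the decoration set of each atom of $\genrest''$ so that, for $j\in X$, the parameter $x_j$ is registered as allocated --- again via the ``furthermore'' clause and Lemma~\ref{prop:alloc}. Thus the rule lies in $\leftsid$, so $(\astore,\aheap')\models_{\leftsid}\alpha'$, and the ``furthermore'' clause for $\alpha'$ follows from $X\supseteq\{\,j\mid\astore(u_j)\in\dom{\aheap}\,\}$ together with the quasi-injectivity of $\astore$.

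The main obstacle is the predicate case, where one must simultaneously manage: (a) the collapsing substitution $\sigma$, which both restores quasi-injectivity of the witnessing store (so the induction hypothesis applies) and records which existentials coincide with a parameter or a free variable; (b) the re-rooting of the subtrees onto fresh locations of $L$, which is exactly what the three location bullets demand and what lets the separating-conjunction case be handled with disjoint halves of $L$; and (c) the choice of the decoration sets $X$, for which the ``furthermore'' clause of the induction hypothesis together with Lemma~\ref{prop:alloc} is tailor-made. A secondary point needing care is the treatment of existential variables with no allocated image --- the pure ``dummy'' variables (best collapsed by $\sigma$ onto $x_1$) and those the model pins to $\bots$ (occurring only as $\bot$-valued record fields, so requiring neither allocation nor re-rooting).
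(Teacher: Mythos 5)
Your overall strategy matches the paper's: lexicographic induction on $(\len{\aheap},\size{\alpha})$, the same case split, a collapsing substitution $\sigma$ restoring quasi-injectivity of the witnessing store, a bijection relocating the surviving existentials into $L$, and the re-insertion of the head cell together with $\botp$-cells. The equational, disjunction and $*$ cases are handled as in the paper. However, there is a genuine gap in the predicate case, precisely at the point where you must show that the constructed rule is \welldefined.

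You set $X\isdef\{1\}\cup\{j\mid\astore(u_j)\in\dom{\aheap}\}$ and then claim that Condition~\ref{welldefined:propagate} ``is obtained by choosing the decoration set of each atom of $\genrest''$ so that, for $j\in X$, the parameter $x_j$ is registered as allocated.'' This inverts the logical dependency: the decoration $\genrest''$ is \emph{produced} by the induction hypothesis, together with the model $(\astorei,\aheap_1')\models_{\leftsid}\genrest''$; you cannot enlarge the sets $X_\beta$ of its atoms afterwards, since $q_{X_\beta}$ and $q_{X_\beta'}$ are distinct predicate symbols interpreted by different sets of \welldefined rules, and satisfaction is not monotone in $X_\beta$. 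With your $X$, Condition~\ref{welldefined:propagate} requires the implication $\astore(u_j)\in\dom{\aheap}\Rightarrow u_j\in\allocf{\genrest'}$, which does \emph{not} follow from the stated inductive invariant: the ``furthermore'' clause only applies when $\astore(u_j)\in\dom{\aheap_1'}$, and $\astore(u_j)\in\dom{\aheap_1}=\gamma_1(\dom{\mpart{\aheap_1'}})$ does not by itself give $\astore(u_j)\in\dom{\aheap_1'}$ (a fresh location of $L_0$ could be the $\gamma_1$-preimage); moreover your $X$ does not exclude indices $j$ with $\astore(u_j)=\astore(\fvar_k)$, for which the ``furthermore'' clause is explicitly silent. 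The paper resolves this by defining $X$ the other way around, as $\{1\}\cup\{i\mid u_i\in\allocf{\genrest'}\wedge\forall j.\,\astore(u_i)\neq\astore(\fvar_j)\}$, so that Condition~\ref{welldefined:propagate} holds by construction and the ``furthermore'' clause is what is then \emph{proved}. Two smaller points: your proposal to leave the $\bots$-valued surviving existentials unallocated contradicts the purely syntactic Condition~\ref{welldefined:established}, which requires every non-parameter free variable of the rule body to lie in $\allocf{\genbod}$ regardless of its value; and the assertion that one ``may assume $\img{\astorep}\cap L=\emptyset$'' is stated without justification (the paper sidesteps it by applying the induction hypothesis with $L\setminus\img{\astorei}$ in place of $L$).
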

\optproof{\subsection{Proof of Lemma \ref{lem:form2deco}}}
{
  Note that by hypothesis, $L$ cannot contain $\bots$, since
  $\astore(\bot) = \bots$ and $\img{\astore} \cap L = \emptyset$. The
  proof is by induction on the pair $(\len{\aheap},\size{\alpha})$,
  using the lexicographic order. We distinguish several
  cases: \begin{compactitem}
  \item{If $\alpha$ is of the form $x \iseq y$ or $x \not \iseq y$,
    then $\aheap = \emptyset$, and $\alpha$ is a \decoration of
    itself, since it contains no predicate symbol. Since
    $(\astore,\aheap) \expands{\id} (\astore,\aheap)$, we may thus set
    $\alpha' \isdef \alpha$, $\gamma \isdef \id$ and $\aheap'\isdef
    \aheap$.}
  \item{If $\alpha'$ is of the form $\alpha_1' \vee \alpha_2'$ then
    the proof follows immediately from the induction hypothesis.}
  \item{If $\alpha$ is of the form $\alpha_1 * \alpha_2$, then let
    $L_1,L_2$ be two disjoint infinite subsets of $L$. Since
    $(\astore,\aheap) \models_{\asid} \alpha_1 * \alpha_2$,
 there exist disjoint heaps
    $\aheap_1$, $\aheap_2$ such that $\aheap = \aheap_1 \dunion
    \aheap_2$ and $(\astore,\aheap_i) \models_{\asid} \alpha_i$.  By
    the induction hypothesis, for $i = 1,2$, there exists a
    {\decoration} $\alpha_i'$ of $\alpha_i$, a heap $\aheap_i'$ and a
    mapping $\gamma_i: \Loc \rightarrow \Loc$ such that:
    $(\astore,\aheap_i')\expands{\gamma_i} (\astore,\aheap_i)$ ; $\ell
    \not \in L_i \Rightarrow \gamma_i(\ell) = \ell$; $\locs{\aheap'_i}
    \setminus \img{\astore} \subseteq L_i$; $\dom{\bpart{\aheap'}}
    \subseteq L_i$; $(\astore,\aheap_i')\models_{\leftsid} \alpha_i'$
    and if $\astore(u) \in \dom{\aheap'_i} \setminus
    \{\astore(\fvar_j) \mid 1 \leq j \leq \nfv \}$ then $u \in
    \allocf{\alpha'_i}$.  Let $\alpha' \isdef \alpha_1' * \alpha_2'$
    and consider the function
    \[
    \gamma: \ell \mapsto
    \begin{cases}
      \gamma_1(\ell) & \text{if $\ell \in L_1$} \\
      \gamma_2(\ell) & \text{if $\ell \in L_2$}\\
      \ell & \text{otherwise}
    \end{cases}
    \]
    Since $L_1 \cap L_2 = \emptyset$, this function is well-defined,
    and since $L_1 \cup L_2 \subseteq L$, if $\ell \not \in L$ then
    $\gamma(\ell) = \ell$. Assume that $\dom{\aheap_1'} \cap
    \dom{\aheap_2'}$ contains an element $\ell$. Then by the induction
    hypothesis, for $i = 1, 2$, $\ell \in \img{\astore} \cup L_i$; and
    since $L_1 \cap L_2 = \emptyset$, we deduce that $\ell \in
    \img{\astore}$, so that $\ell \not \in L$.  Since
    $\dom{\bpart{\aheap_i'}} \subseteq L_i$, necessarily $\ell \in
    \dom{\mpart{\aheap_i'}}$ and
    $(\astore,\aheap_i')\expands{\gamma_i} (\astore,\aheap_i)$, by
    Condition \ref{it:decor:dom} of Definition \ref{def:decor_heap},
    $\gamma_i(\ell) \in \dom{\aheap_i}$.  Since $\ell \not \in L$ we
    have $\gamma_1(\ell) = \gamma_2(\ell) = \ell$, and we deduce that
    $\ell \in \dom{\aheap_1} \cap \dom{\aheap_2}$, which contradicts
    the fact that $\aheap_1$ and $\aheap_2$ are disjoint.
    Consequently, $\aheap_1'$ and $\aheap_2'$ are disjoint and we may
    define $\aheap' \isdef \aheap_1' \dunion \aheap_2'$. Since
    $(\astore,\aheap_i') \models_{\leftsid} \alpha_i'$, for both
    $i=1,2$, we have $(\astore,\aheap') \models_{\leftsid} \alpha'$.
	
    Let $\ell \in \locs{\aheap_i'}$ for $i = 1,2$. If $\ell \in
    \img{\astore}$, then by hypothesis $\ell \notin L$ and by
    construction, $\gamma(\ell) = \ell = \gamma_i(\ell)$. Otherwise,
    $\ell \in \locs{\aheap_i'} \setminus \img{\astore}$, thus by the
    induction hypothesis $\ell \in L_i$ and by construction,
    $\gamma(\ell) = \gamma_i(\ell)$. We deduce by Lemma
    \ref{prop:expand-dunion} that $(\astore,\aheap') \expands{\gamma}
    (\astore,\aheap)$. Furthermore, still by the induction hypothesis
    we have: \[\locs{\aheap'}\setminus \img{\astore} \subseteq
    (\locs{\aheap_1'} \setminus \img{\astore}) \cup (\locs{\aheap_2'}
    \setminus \img{\astore}) \subseteq L_1 \cup L_2 \subseteq L.\] We
    also have $\bpart{\aheap'} = \bpart{\aheap_1'} \dunion
    \bpart{\aheap_2'}$, thus $\dom{\bpart{\aheap'}} =
    \dom{\bpart{\aheap_1'}} \cup \dom{\bpart{\aheap_2'}} \subseteq L_1
    \cup L_2 \subseteq L$. Finally, if $\astore(x)\in
    \dom{\aheap'}\setminus \{\astore(\fvar_i) \mid 1 \leq i \leq \nfv
    \}$ then necessarily $\astore(x) \in \dom{\aheap_i'}$ for some $i
    = 1,2$, so that $x\in \allocf{\alpha_i'}$ (by the induction
    hypothesis, last point of the lemma) and therefore $x\in
    \allocf{\alpha'}$.  }
  \item{Assume that $\alpha$ is of the form $p(u_1,\dots,u_n)$ and
    that $\dom{\astore} = \fv{\alpha} \cup \{
    \fvar_1,\dots,\fvar_\nfv, \bot\}$. Then, $\asid$ contains a rule
    $p(x_1,\dots,x_n) \Leftarrow x_1 \mapsto (y_1,\dots,y_\rank) *
    \genrest$ such that $(\astorep,\aheap) \models_{\asid} u_1 \mapsto
    (y_1,\dots,y_\rank)\theta * \genrest\theta$, where $\theta =
    \substinterv{x_i}{u_i}{i}{\interv{1}{n}}$ and $\astorep$ is an
    extension of $\astore$.  Let $\{z_1,\dots,z_\maxv\} \isdef
    (\fv{\genrest} \cup \{ y_1,\dots,y_\rank\}) \setminus \{
    x_1,\dots,x_n\}$, be the set of existential variables of the above
    rule. We have $\dom{\astore_e} = \dom{\astore} \cup \{
    z_1,\dots,z_\maxv \}$. Consider the substitution $\sigma$ such
    that: $\dom{\sigma} \subseteq \{ z_1,\dots,z_\maxv \}$ and
    $\sigma(z_i) = z$ iff $z$ is the first variable in $u_1,\dots,u_n,
    \fvar_1,\dots,\fvar_\nfv, z_1,\dots,z_{i-1}$ such that
    $\astorep(z_i) = \astorep(z)$ ($\sigma(z_i)$ is undefined in there
    is no such variable). By construction, if $x$ is a variable
    occurring in $\genrest\sigma$, then $x\notin \dom{\sigma}$.  Let
    $\astorei$ be the restriction of $\astorep$ to $\dom{\astorep}
    \setminus \dom{\sigma}$ and we show that $\astorei$ is
    quasi-injective. Assume that $\astorei(x) = \astorei(x')$ for
    distinct variables $x,x' \in \dom{\astorei}$ with $\{ x,x' \} \not
    \subseteq \{ \fvar_i \mid i \in \interv{1}{\nfv}\}$. Since
    $\astorei$ is a restriction of $\astorep$, we have $\astorep(x) =
    \astorep(x')$. We deduce that $x$ and $x'$ both occur in the
    sequence
    $u_1,\dots,u_n,\fvar_1,\dots,\fvar_\nfv,z_1,\dots,z_\maxv$, and we
    assume w.l.o.g. that $x$ occurs before $x'$ in this sequence.  If
    $x,x' \in \set{u_1,\ldots, u_n,\fvar_1,\dots,\fvar_\nfv}$, then
    since $\astorep$ is an extension of $\astore$, we would have
    \(\astore(x)\ =\ \astorep(x)\ =\ \astorep(x')\ =\ \astore(x')\),
    so that $x = x'$, because by hypothesis $\astore$ is
    quasi-injective. Thus, one of the variables $x,x'$ is in $\{
    z_1,\dots,z_\maxv \}$. Since $x$ occurs before $x'$ in the
    sequence
    $u_1,\dots,u_n,\fvar_1,\dots,\fvar_\nfv,z_1,\dots,z_\maxv$, we
    deduce that $x' \in \{ z_1,\dots,z_\maxv \}$.  By definition of
    $\sigma$, this entails that $x'\sigma = x\sigma \neq x'$, hence
    $x' \in \dom{\sigma}$ and $x' \not \in \dom{\astorei}$, which
    contradicts our assumption.

    Let $\aheap_1$ be the restriction of $\aheap$ to $\dom{\aheap}
    \setminus \{ \astore(u_1) \}$, so that $(\astorep,\aheap_1)
    \models_{\asid} \genrest\theta$.  Then by construction,
    $(\astorei,\aheap_1)\models_{\asid} \genrest\sigma\theta$.  Let
    $L_1 \isdef L \setminus \img{\astorei}$.  Since $\locs{\aheap_1}
    \subseteq \locs{\aheap}$ and $L_1\subseteq L$, we have
    $(\img{\astorei}\cup \locs{\aheap_1}) \cap L_1 = \emptyset$.
    Thus, by the induction hypothesis, there exists a \decoration
    $\genrest'$ of $\genrest\sigma\theta$, a mapping $\gamma_1: \Loc
    \rightarrow \Loc$ satisfying $\ell\not \in L_1 \Rightarrow
    \gamma_1(\ell) = \ell$ and a heap $\aheap_1'$ satisfying
    $\locs{\aheap'_1} \setminus \img{\astorei} \subseteq L_1$ and
    $\dom{\bpart{\aheap_1'}} \subseteq L_1$, such that
    $(\astorei,\aheap_1') \expands{\gamma_1} (\astorei,\aheap_1)$,
    $(\astorei,\aheap_1') \models_{\leftsid} \genrest'$
     and for all variables $u$, if $\astorei(u)
    \in \dom{\aheap_1'} \setminus \setof{\astorei(\fvar_i)}{1\leq
      i\leq\nfv}$, then $u\in \allocf{\genrest'}$.
 
Let $E \isdef (\img{\astorei} \cup \locs{\aheap'_1}) \setminus
\img{\astore}$ and consider a bijection $\eta: \Loc\rightarrow \Loc$
such that: \begin{compactitem}
\item if $\ell \in E$ then $\eta(\ell) \in L\setminus E$ and
  $\eta(\eta(\ell)) = \ell$;
\item if $\ell \in \Loc \setminus(E\cup \eta(E))$ then $\eta(\ell) =
  \ell$.
\end{compactitem} 
Such a bijection necessarily exists because $E$ is finite and $L$ is
infinite. Let $\ell \in \img{\astore}$, so that $\ell \notin E$, and
assume $\ell \in \eta(E)$. Then $\inv{\eta}(\ell) \in E$, hence $\ell
\in L\setminus E$. But by the hypotheses of the lemma, $(\img{\astore}
\cup \locs{\aheap}) \cap L = \emptyset$, so this case is
impossible. We deduce that $\ell \in \Loc \setminus (E\cup \eta(E))$
and that $\eta(\ell) = \ell$.  Thus, in particular, $\eta(\bots) =
\bots$.  Consider the mapping $\gamma \isdef \gamma_1 \circ
\inv{\eta}$, the heap $\aheap_1'' \isdef \apl{\eta}{\aheap_1'}$ and
the store $\astorei' \isdef \apl{\eta}{\astorei}$.  By Lemma
\ref{prop:expand-bij} $(\astorei',\aheap_1'')
\expands{\gamma}(\astorei',\aheap_1)$.  By Lemma \ref{lem:gamma_inj},
since $(\astorei,\aheap_1') \models_{\leftsid} \genrest'$, we deduce
that $(\astorei',\aheap_1'') \models_{\leftsid} \genrest'$.  We have
$\dom{\astore} \subseteq \dom{\astorep}\setminus \dom{\sigma} =
\dom{\astorei} = \dom{\astorei'}$, hence the restriction of
$\astorei'$ to $\dom{\astore}$ is well-defined, and if $x\in
\dom{\astore}$, then
\[\astorei'(x)\ = \ \eta(\astorei(x))\ =\ \eta(\astorep(x))\ =\ \eta(\astore(x))\ =\ \astore(x).\] This shows that the restriction of $\astorei'$ to $\dom{\astore}$ coincides with $\astore$.

Let $j \in \interv{1}{\maxv}$ such that $z_j\not \in \dom{\sigma}$.
By definition we have $\astorei(z_j) \not \in \img{\astore}$, thus
$\astorei(z_j) \in E$ and $\astorei'(z_j) = \eta(\astorei(z_j)) \in
L$.  Let $I$ be the set of indices $j \in \interv{1}{\maxv}$ such that
$z_j \not \in \dom{\sigma}$ and $\astorei(z_j) \not \in
\dom{\aheap_1'}$; for all $j\in I$, we therefore have $\astorei'(z_j)
\in L$ (indeed, by definition of $\sigma$, $\astorei(z_j) =
\astorep(z_j) \not \in \img{\astore}$, thus $\astorei(z_j) \in E$, and
by definition of $\eta$, $\eta(\astorei(z_j)) \in L$, hence
$\astorei'(z_j)\in L$). Consider the set:
\[X\ \isdef\ \{ 1\} \cup \{ i \in \interv{1}{n} \mid u_i \in \allocf{\genrest'} 
\wedge \forall j \in \interv{1}{\nfv},\,\astore(u_i) \not=
\astore(\fvar_j) \},\] and let $\alpha' \isdef
\decorated{p}{X}(u_1,\dots,u_n,\mvec{\fvar})$. 
By definition,
$\decor{\asid}$ contains a rule $(\mp)$ of the form
 \[\decorated{p}{X}(x_1,\dots,x_n,\mvec{\fvar}) \Leftarrow 
 x_1 \mapsto (y_1,\dots,y_\rank,\mvec{\fvar},z_1,\dots,z_\maxv)\sigma
 * \genrest'' * \bigast_{i \in I} \botp(z_i), \] where
 $\genrest''\theta = \genrest'$. We define the following heaps:
 \[ \begin{array}{lll}
   \aheap_2' & \isdef & 
   \{ \tuple{\astorei'(u_1),(\astorei'(y_1\sigma\theta),\dots,\astorei'(y_\rank\sigma\theta),\astorei'(\mvec{\fvar}),
     \astorei'(z_1\sigma\theta),\dots,\astorei'(z_\maxv\sigma\theta))} \}, \\
   \aheap_3' & \isdef & \{ \tuple{\astorei'(z_j),\mvec{\bots}} \mid j \in I \}.
 \end{array}\]
 By definition of $I$, it cannot be the case that $\astorei'(u_1) =
 \astorei'(z_j)$ for $j\in I$, because otherwise we would have
 $\astore(u_1) = \astorei(z_j)$ and $z_j \in \dom{\sigma}$, hence
 $\dom{\aheap_2'} \cap \dom{\aheap_3'} = \emptyset$.  We show that
 $(\dom{\aheap_2'} \cup \dom{\aheap_3'}) \cap \dom{\aheap_1''} =
 \emptyset$.  First let $j\in I$, and assume $\astorei'(z_j) \in
 \dom{\aheap_1''}$. Then since $\eta$ is a bijection, necessarily,
 $\astorei(z_j) \in \dom{\aheap_1'}$, which is impossible by
 definition of $I$. Now assume that $\astorei'(u_1) \in
 \dom{\aheap_1''}$, so that $\astorei(u_1) \in \dom{\aheap_1'}$. Then
 by definition of $L_1$ we have $\astorei(u_1) \notin L_1$, and
 $\gamma_1(\astorei(u_1)) = \astorei(u_1)$. Since
 $\dom{\bpart{\aheap_1'}} \subseteq L_1$, necessarily $\astorei(u_1)
 \in \dom{\mpart{\aheap_1'}}$, hence $\gamma_1(\astorei(u_1)) =
 \astorei(u_1) = \astore(u_1) \in \dom{\aheap_1}$, which is impossible
 by definition of $\aheap_1$.  This shows that the domains of
 $\aheap_1''$, $\aheap_2'$ and $\aheap_3'$ are pairwise disjoint, that
 $\aheap' \isdef \aheap_1'' \dunion \aheap_2' \dunion \aheap_3'$ is
 well-defined, and by construction,
\[(\astorei',\aheap') \models_{\leftsid} u_1 \mapsto (y_1,\dots,y_\rank,\mvec{\fvar},z_1,\dots,z_\maxv)\sigma\theta 
 * \genrest''\theta * \bigast_{i \in I} \botp(z_i)\theta.\] We show
 that $(\astore,\aheap') \expands{\gamma} (\astore,\aheap)$, with
 $\mpart{\aheap'} \isdef \mpart{\aheap_1''} \dunion \aheap_2'$ and
 $\bpart{\aheap'} \isdef \bpart{\aheap_1''} \dunion \aheap_3'$.  Note
 that if $\ell\in \dom{\aheap_2'}$ then necessarily $\ell =
 \astorei'(u_1) = \astore(u_1)\in \img{\astore}$, so that $\ell \notin
 L_1$ and by definition of $\eta$ and $\gamma_1$, $\gamma(\ell) =
 \gamma_1\circ\inv{\eta}(\ell) = \gamma_1(\ell) = \ell$.  We check the
 four points of Definition \ref{def:decor_heap} below: \begin{compactenum}
 \item Let $\ell_1, \ell_2$ be locations in $\dom{\mpart{\aheap'}}$
   such that $\gamma(\ell_1) = \gamma(\ell_2)$, and assume that
   $\ell_1 \in \dom{\mpart{\aheap_1''}}$. Then $\gamma(\ell_1) \in
   \dom{\aheap_1}$ because $(\astorei',\aheap_1'')
   \expands{\gamma}(\astorei',\aheap_1)$.  If $\ell_2\in
   \dom{\aheap_2'}$ then $\gamma(\ell_2) = \ell_2$. By definition of
   $\aheap_1$ we cannot have $\ell_2\in \dom{\aheap_1}$, and
   therefore, $\ell_1\neq \ell_2$. Otherwise, $\ell_2\in
   \dom{\mpart{\aheap_1''}}$ and for $i=1,2$ we have $\gamma(\ell_i) =
   \gamma_1(\inv{\eta}(\ell_i))$ and $\inv{\eta}(\ell_i) \in
   \dom{\aheap_1'}$. Since $(\astorei, \aheap_1') \expands{\gamma_1}
   (\astorei, \aheap_1)$, we deduce that $\inv{\eta}(\ell_1) =
   \inv{\eta}(\ell_2)$ and because $\eta$ is a bijection, $\ell_1 =
   \ell_2$. The proof is symmetric if $\ell_2 \in
   \dom{\mpart{\aheap_1''}}$.  Finally, if $\ell_1,\ell_2 \in
   \dom{\aheap_2'}$, then $\ell_1 = \ell_2$ since $\dom{\aheap_2'}$ is
   a singleton.
	
	
 \item We have $\gamma(\dom{\mpart{\aheap_1''}}) = \dom{\aheap_1}$ and
   since $\gamma(\astorei'(u_1)) = \astorei'(u_1) = \astore(u_1)$, we
   deduce that $\gamma(\dom{\apl{\eta}{\mpart{\aheap_1''}}} \cup
   \dom{\aheap_2'}) = \dom{\aheap}$.
 \item Let $\ell \in \dom{\mpart{\aheap_1''}} \cup
   \dom{\aheap_2'}$. If $\ell = \astorei'(u_1)$, then by construction
   we have
  \begin{eqnarray*}
    \aheap'(\astorei'(u_1)) & = & 
    (\astorei'(y_1\sigma\theta),\dots,\astorei'(y_\rank\sigma\theta),\astorei'(\mvec{\fvar}),
    \astorei'(z_1\sigma\theta),\dots,\astorei'(z_\maxv\sigma\theta)) \\
    & = & (\astorei'(y_1\sigma\theta),\dots,\astorei'(y_\rank\sigma\theta),\astore(\mvec{\fvar}),
    \astorei'(z_1\sigma\theta),\dots,\astorei'(z_\maxv\sigma\theta)),
  \end{eqnarray*}
  where the second line follows from the fact that $\astorei'$
  coincides with $\astore$ on $\dom{\astore}$.  Note that, using the
  fact that $\img{\astorei} \cap L_1 = \emptyset$ and by definition of
  $\gamma$ and $\astorei'$, the following equalities hold:
  \begin{eqnarray*}
    \gamma\left(\astorei'(y_1\sigma\theta),\dots,\astorei'(y_\rank\sigma\theta)\right) & = & 
    (\gamma_1(\astorei(y_1\sigma\theta)), \ldots, \gamma_1(\astorei(y_\rank\sigma\theta)))\\
    & = & (\astorei(y_1\sigma\theta), \ldots, \astorei(y_\rank\sigma\theta)).
  \end{eqnarray*}
 		 	
We also have: 
\begin{eqnarray*}
  \aheap(\astorep(u_1)) & = & (\astorep(y_1\theta), \ldots, \astorep(y_\rank\theta))\ \text{ and}\\
  \aheap(\astorei(u_1)) & = & (\astorei(y_1\sigma\theta), \ldots, \astorei(y_\rank\sigma\theta)), 
\end{eqnarray*}
where the second equation is a consequence of the definitions of
$\sigma$ and $\astorei$ respectively. This proves that:
\begin{eqnarray*}
  \aheap(\gamma(\astorei'(u_1)))\ = \ \aheap(\astorei(u_1)) & = & 
  \left(\astorei(y_1\sigma\theta), \ldots, \astorei(y_\rank)\sigma\theta\right) \\
  & = & \gamma\left(\astorei'(y_1\sigma\theta),\dots,\astorei'(y_\rank\sigma\theta)\right),
\end{eqnarray*}
hence that $\astorei'(u_1)$ satisfies Condition \ref{it:decor:img} of
Definition \ref{def:decor_heap}.  Now if $\ell \in
\dom{\mpart{\aheap_1''}}$, then it is straightforward to verify that
$\ell$ satisfies Condition \ref{it:decor:img} of Definition
\ref{def:decor_heap}, using the fact that
 $(\astorei', \aheap_1'') \expands{\gamma} (\astorei', \aheap_1)$.
\item If $\ell \in \dom{\aheap_3'}$, then $\ell =
  \astorei'(z_j\theta)$ for some $j\in I$ and it is simple to verify
  that Condition \ref{it:decor:link} of Definition
  \ref{def:decor_heap} is verified, setting $\gcon{\ell}{\aheap'}
  \isdef \astorei'(u_1)$.  If $\ell \in \dom{\bpart{\aheap_1''}}$
  then, using the fact that $(\astorei', \aheap_1'') \expands{\gamma}
  (\astorei', \aheap_1)$, we deduce that Condition \ref{it:decor:link}
  of Definition \ref{def:decor_heap} is verified.
\end{compactenum}
We prove that $\dom{\bpart{\aheap'}} \subseteq L$. Let $\ell \in
\dom{\bpart{\aheap'}}$, and first assume that $\ell \in
\dom{\bpart{\aheap_1''}}$, so that $\ell = \eta(\ell')$ for $\ell'\in
\dom{\bpart{\aheap_1'}}$.  By the induction hypothesis we have
$\dom{\bpart{\aheap'_1}} \subseteq L_1$, thus $\ell' \in L_1$. If
$\ell' \in \img{\astore}$, then $\ell'\not \in E$ (by definition of
$E$), and $\ell' \not\in L$ (by the hypothesis of the lemma), hence by
definition of $\eta$ we have $\eta(\ell') = \ell' = \ell \in L_1
\subseteq L$. Otherwise $\ell'\in E$ and by construction, $\eta(\ell')
\in L\setminus E \subseteq L$. Now assume that $\ell \in
\dom{\aheap_3'}$. Then $\ell = \astorei'(z_j)$ for some $j\in I$, and
since we have shown that $\astorei'(z_j)\in L$, for every $j \in I$,
we have $\ell \in L$.

We now show that $\locs{\aheap'} \setminus \img{\astore} \subseteq L$.
Since $\dom{\bpart{\aheap'}} \subseteq L$ and $\locs{\bpart{\aheap'}}
= \dom{\bpart{\aheap'}} \cup \{ \bots \}$, we deduce that
$\locs{\bpart{\aheap'}}\setminus \img{\astore} \subseteq L$, because
$\bots \in\img{\astore}$.  Now let $\ell \in \locs{\mpart{\aheap'}}
\setminus \img{\astore}$.  If $\ell \in \locs{\aheap_2'}$ then by
definition of $\aheap_2'$ and since $\ell \not \in \img{\astore}$, we
must have $\ell = \astorei'(z_j\sigma\theta) = \astorei'(z_j)$ for
some $j\in I$, hence $\ell \in L$. Otherwise $\ell \in
\locs{\mpart{\aheap_1''}}$, and $\ell = \eta(\ell')$ for some $\ell'
\in \locs{\mpart{\aheap_1'}}$.  If $\ell' \in \img{\astore}$ then
$\eta(\ell') = \ell \in \img{\astore}$, which contradicts our
assumption.  Thus $\ell' \not \in \img{\astore}$, hence $\ell' \in E$,
and by definition of $\eta$, $\eta(\ell') = \ell \in L$.

There remains to prove that Rule $(\mp)$ is \welldefined; this entails
that it occurs in $\leftsid$, hence that $(\astore,\aheap')
\models_{\leftsid} \alpha'$.  We first check that Condition
\ref{welldefined:propagate} of Definition \ref{def:welldefined}
holds. By construction we have $1\in X$, hence $x_1 \in
\allocf{\decorated{p}{X}(x_1,\dots,x_n)}$, and we also have $x_1 \in
\allocf{x_1 \mapsto
  (y_1,\dots,y_\rank,\mvec{\fvar},z_1,\dots,z_\maxv)\sigma}$.  Now
assume that $x_i \in \allocf{\decorated{p}{X}(x_1,\dots,x_n)}$ and
that $i \not = 1$.  By definition of $X$, this entails that $x_i\theta
\in \allocf{\genrest'}$.  Since $\genrest' = \genrest''\theta$, we
deduce that $x_i\in \allocf{\genrest''}$.  Next, we check that
Condition \ref{welldefined:established} of Definition
\ref{def:welldefined} holds.  Let $z$ be a variable occurring on the
right-hand side of rule $(\mp)$ but not on its left-hand side.  Then
$z = z_j$, for some $j$ with $z_j \not \in \dom{\sigma}$ (indeed,
$z_1,\dots,z_\maxv$ are the only existential variables, and if $z\in
\dom{\sigma}$ then by definition of $\sigma$, we have $\sigma(z') \not
= z$ for every variable $z'$, thus $z$ cannot occur in
$\genrest\sigma$, hence in $\genrest'$) and by definition of $\sigma$,
we have $\astorei(z_j) = \astorep(z_j) \not \in \{
\astore(\fvar_1),\dots,\astore(\fvar_\nfv) \}$.  If $j \in I$ then
$z_j \in \allocf{\bigast_{i \in I} \botp(z_i)}$. Otherwise, by
definition of $I$, we must have $\astorei(z_j) \in \dom{\aheap_1'}$,
and since $\img{\astorei} \cap L_1 = \emptyset$, necessarily,
$\astorei(z_j) = \gamma_1(\astorei(z_j))$.  By the induction
hypothesis, since $\astorei(z_j) \not \in \{
\astore(\fvar_1),\dots,\astore(\fvar_\nfv) \}$ and $\astorei(\fvar_i)
= \astore(\fvar_i)$ for $i \in \interv{1}{\nfv}$, we deduce that $z_j
\in \allocf{\genrest'}$; and since $z_j \not \in \dom{\theta}$, we
must have $z_j \in \allocf{\genrest''}$.
 
We finally show that if $\astore(u) \in \dom{\aheap'} \setminus
\{\astore(\fvar_i) \mid 1 \leq i \leq \nfv \}$ then $u\in
\allocf{\alpha'}$. Consider such a variable $u$. Assume $\astore(u)\in
\dom{\aheap_3'}$. Then $\astore(u)$ is of the form $\astorei'(z_j)$
for some $j\in I$, hence $\astore(u) \in L$, since we have shown that
$\astorei'(z_j) \in L$, for every $j \in I$. But $L \cap \img{\astore}
= \emptyset$, so this case is impossible. We deduce that $\astore(u)
\in \dom{\aheap_1''} \cup \{ \astore(u_1) \}$.  If $\astore(u) =
\astore(u_1)$, then since $\astore$ is quasi-injective we deduce that
$u = u_1$ thus $u\in \allocf{\alpha'}$, because by construction, $1
\in X$.  Otherwise, we have $\astore(u) \in \dom{\aheap_1''}$ and
since $\eta(\ell) = \ell$ for all $\ell \in \img{\astore}$,
necessarily $\astore(u) \in \dom{\aheap_1'}$. By the induction
hypothesis, we deduce that $u \in \allocf{\genrest'}$, hence there
exists $x\in \allocf{\genrest''}$ $(1\leq j \leq n$) such that $u =
x\theta$.  By definition of $\theta$ (and assuming by renaming that
$\fv{\genrest} \cap \fv{\alpha} = \emptyset$), necessarily, $x = x_j$,
for some $j \in \interv{1}{n}$. Then by definition of $X$ we have
$j\in X$, thus $u = x_j\theta \in \allocf{\alpha'}$.  }
  \item{Assume that $\alpha$ is of the form $p(u_1,\dots,u_n)$ and
    that $\dom{\astore} \not = \fv{\alpha} \cup \{
    \fvar_1,\dots,\fvar_\nfv, \bots\}$. Note that we have necessarily
    $\dom{\astore} \supseteq \fv{\alpha} \cup \{
    \fvar_1,\dots,\fvar_\nfv, \bots\}$.  Let $\astore'$ be the
    restriction of $\astore$ to $\fv{\alpha} \cup \{
    \fvar_1,\dots,\fvar_\nfv, \bots\}$.  It is clear that
    $(\astore',\aheap) \models \alpha$ and that $\astore'$ fulfills
    all the hypotheses of the lemma.  Thus, by the previous item,
    there exists $\alpha'$, $\aheap'$ and $\gamma: \Loc
    \rightarrow\Loc$ such that $(\astore',\aheap') \expands{\gamma}
    (\astore',\aheap)$, if $\ell \not \in L$ then $\gamma(\ell) =
    \ell$, $\locs{\aheap'} \setminus \img{\astore'} \subseteq L$,
    $(\astore',\aheap') \models_{\leftsid} \alpha'$ and $\astore(u)
    \in \dom{\aheap'} \setminus \{\astore(\fvar_i) \mid 1 \leq i \leq
    \nfv \}\Rightarrow u \in \allocf{\alpha'}$.  It is clear that we
    have $(\astore,\aheap') \expands{\gamma} (\astore,\aheap)$, and
    $(\astore,\aheap') \models_{\leftsid} \alpha'$.  Furthermore,
    since $\img{\astore'} \subseteq \img{\astore}$, we also have
    $\locs{\aheap'} \setminus \img{\astore} \subseteq L$.  Finally, if
    $\astore(u) \in \dom{\aheap'} \setminus \{\astore(\fvar_i) \mid 1
    \leq i \leq \nfv \}$, then, since $\astore(u) \not \in L$, and
    $\imgh{\aheap'} \setminus \img{\astore'} \subseteq L$ we must have
    $\astore(u) \in \img{\astore'}$, thus $u \in \allocf{\alpha'}$ by
    the previous item. \qed}
\end{compactitem}
}

 \subsection{\capitalisewords{Transforming Entailments}}

 \newcommand{\nf}[1]{#1\downarrow}
 
We define $\rsid \isdef \leftsid \cup \rightsid$. We show that the
instance $\phi \vdash_{\asid} \psi$ of the safe entailment problem can
be solved by considering an entailment problem on $\rsid$ involving
the elements of $\decor{\phi}$ (see Definition
\ref{def:deco}). \comment[ri]{added} Note that the rules from
$\leftsid$ are progressing, connected and established, by Lemma
\ref{cor:pce}, whereas the rules from $\rightsid$ are progressing and
connected, by Definition \ref{def:right_rule}. Hence, each entailment
problem $\phi' \vdash_{\rsid} \rightdecor{\psi}$, where $\phi' \in
\decor{\phi}$, is progressing, connected and left-established.

\comment[ri]{changed the corollary into a lemma}
\begin{lemma}\label{cor:safe-equiv}
  $\phi \models_{\asid} \psi$ if and only if $\bigvee_{\phi' \in
    \decor{\phi}} \phi' \models_{\rsid} \rightdecor{\psi}$.
\end{lemma}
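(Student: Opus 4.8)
The plan is to prove the two directions separately, in each case connecting a model of the $\asid$-entailment with a model of the $\rsid$-entailment via the expansion relation $\expands{\gamma}$ and the truncation operation, using the equivalences already established for the antecedent (Lemmas \ref{lem:deco2form} and \ref{lem:form2deco}) and the consequent (Lemma \ref{lem:right_equiv}).

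First I would handle the \emph{``if''} direction: assume $\bigvee_{\phi' \in \decor{\phi}} \phi' \models_{\rsid} \rightdecor{\psi}$ and show $\phi \models_{\asid} \psi$. Take an $\asid$-model $(\astore,\aheap)$ of $\phi$. By Lemma \ref{lem:form2deco} (choosing $L$ to be an infinite set of fresh locations disjoint from $\img{\astore}\cup\locs{\aheap}$, and noting we may assume $\astore$ quasi-injective since $\phi$ is quantifier-free and we can rename colliding variables, or by first replacing $\astore$ by a quasi-injective store mapping distinct non-$\mvec{\fvar}$ variables to distinct locations — this needs a small argument that $\phi$'s truth is preserved, since $\phi$ contains no disequations forcing collisions... actually $\phi$ may contain equalities, so one must be slightly careful and quotient appropriately) there is a decoration $\phi' \in \decor{\phi}$, a heap $\aheap'$ and $\gamma$ with $(\astore,\aheap')\expands{\gamma}(\astore,\aheap)$ and $(\astore,\aheap')\models_{\leftsid}\phi'$, hence $(\astore,\aheap')\models_{\rsid}\phi'$. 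By hypothesis $(\astore,\aheap')\models_{\rsid}\rightdecor{\psi}$. Now by Lemma \ref{prop:exp-trunc}, $(\astore,\aheap')\expands{\id}(\astore,\trunc{\aheap'})$ and $\aheap=\apl{\gamma}{\trunc{\aheap'}}$. Since $\psi$ is \restricted w.r.t.\ $\fv{\phi}$ (Condition \ref{safe:init} of Definition \ref{def:safe}) and $\preds{\psi}\subseteq\rightpred$, Lemma \ref{lem:right_equiv} gives $(\astore,\trunc{\aheap'})\models_{\asid}\psi$. Finally apply Lemma \ref{lem:gamma} with the mapping $\gamma$: its injectivity hypotheses hold on $\dom{\trunc{\aheap'}}$ because $\gamma$ restricted to $\dom{\mpart{\aheap'}}=\dom{\trunc{\aheap'}}$ is injective by Condition \ref{it:decor:inj}, and on $\astore(\fv{\phi})$ because $\fv{\phi}\cap L=\emptyset$ so $\gamma$ is the identity there — so we conclude $(\apl{\gamma}{\astore},\apl{\gamma}{\aheap})\models_{\asid}\psi$; but $\apl{\gamma}{\astore}$ may differ from $\astore$ on $\fv{\psi}\setminus\fv{\phi}$ — here I would use that such variables are existential on the right, i.e.\ $\exists$-quantified, so a matching argument or the observation that $\fv{\psi}\subseteq\fv{\phi}$ after the earlier normalization closes the gap. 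Then $\apl{\gamma}{\astore}=\astore$ on $\dom\astore$ (since $\gamma$ fixes everything outside $L$ and $\img\astore\cap L=\emptyset$) and $\apl{\gamma}{\aheap}=\aheap$ (since $\locs\aheap\cap L=\emptyset$), giving $(\astore,\aheap)\models_{\asid}\psi$.

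For the \emph{``only if''} direction, assume $\phi\models_{\asid}\psi$ and take an $\rsid$-model $(\astore,\aheap')$ of some $\phi'\in\decor{\phi}$; I must show $(\astore,\aheap')\models_{\rsid}\rightdecor{\psi}$. Since $\phi'$ contains no points-to atoms and $\preds{\phi'}\subseteq\leftpred$, and the left-hand rules used are exactly those in $\leftsid$, I would first argue $(\astore,\aheap')\models_{\leftsid}\phi'$ (the $\rightsid$ rules are irrelevant because $\leftpred\cap\rightpred=\emptyset$, so any unfolding of $\phi'$ only uses $\leftsid$). Next I need $(\astore,\aheap')\expands{\id}(\astore,\aheap)$ for $\aheap:=\trunc{\aheap'}$: this follows from Lemma \ref{prop:trunc}, whose hypotheses (Point \ref{trunc:img} on the shape of images, Point \ref{trunc:link} that every $\mvec{\bot}$-location has a \gconnection) hold because every rule of $\leftsid$ allocates with the prescribed record shape $(y_1,\dots,y_\rank,\mvec{\fvar},z_1,\dots,z_\maxv)$ and, as in Lemma \ref{prop:leftgconn}/\ref{prop:rightgconn}, every $\botp$-allocated location is a $z_i$ appearing in the image of its parent — this is where I expect the main bookkeeping. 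Then Lemma \ref{lem:deco2form} yields $(\astore,\aheap)\models_{\asid}\phi$, so by hypothesis $(\astore,\aheap)\models_{\asid}\psi$. Applying Lemma \ref{lem:right_equiv} (with $V=\fv{\phi}$, using $\psi$ \restricted w.r.t.\ $\fv{\phi}$ and $(\astore,\aheap')\expands{\id}(\astore,\aheap)$) gives $(\astore,\aheap')\models_{\rightsid}\rightdecor{\psi}$, hence $(\astore,\aheap')\models_{\rsid}\rightdecor{\psi}$.

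The main obstacle I anticipate is not any single deep step but the careful handling of \emph{stores}: ensuring quasi-injectivity of the antecedent's store can be assumed w.l.o.g.\ (dealing with equalities in $\phi$ and with the free variables $\mvec{\fvar}$ which are allowed to collide), and reconciling the stores on $\fv{\psi}\setminus\fv{\phi}$ — the cleanest fix is to note, from the reduction's global assumptions, that $\fv\psi\subseteq\fv\phi$ (the vector $\mvec{\fvar}$ is defined as the free variables of \emph{both} $\phi$ and $\psi$, and on the right every such variable must appear free in $\phi$ by Condition \ref{safe:init}), so $\apl{\gamma}{\astore}$ and $\astore$ agree on all relevant variables. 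With that settled, both directions are a matter of chaining the four key lemmas along the $\expands{\gamma}$/$\trunc{}$ correspondence, and the proof is short.
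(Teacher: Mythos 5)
Your proposal is correct and follows essentially the same route as the paper: the same two-way chaining of Lemmas \ref{lem:form2deco}, \ref{prop:exp-trunc}, \ref{lem:right_equiv} and \ref{lem:gamma} for one direction, and of Lemma \ref{prop:trunc} (with the \gconnection bookkeeping), Lemma \ref{lem:deco2form} and Lemma \ref{lem:right_equiv} for the other. The store-related worries you flag resolve exactly as you suggest (and as the paper does): quasi-injectivity is free once the store is restricted to $\fv{\phi}\cup\fv{\psi}\subseteq\{\fvar_1,\dots,\fvar_\nfv\}$, and $\apl{\gamma}{\astore}=\astore$ because $\gamma$ is the identity outside $L$ and $\img{\astore}\cap L=\emptyset$.
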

\begin{proof}
  ``$\Rightarrow$'' Assume that $\phi \models_{\asid} \psi$ and let
  $\phi'\in \decor{\phi}$ be a formula, $(\astore,\aheap')$ be an
  $\rsid$-model of $\phi'$ and $\aheap \isdef \trunc{\aheap'}$. By
  construction, $(\astore,\aheap')$ is an $\leftsid$-model of
  $\phi'$. By definition of $\decor{\phi}$, $\phi'$ is a \decoration
  of $\phi$.  Let $D_2 \isdef \{ \ell \in \dom{\aheap'} \mid
  \aheap'(\ell) = \mvec{\bots} \}$, $D_1 \isdef \dom{\aheap'}
  \setminus D_2$, and consider a location $\ell \in \dom{\aheap'}$.
  By definition, $\ell$ must be allocated by some rule in $\leftsid$.
  If $\ell$ is allocated by a rule of the form given in Definition
  \ref{def:deco_rule}, then necessarily $\aheap'(\ell)$ is of the form
  $(\ell_1,\dots,\ell_\rank,\astore(\fvar),\ell_1',\dots,\ell'_\maxv)$
  and $\ell \in D_1$. Otherwise, $\ell$ is allocated by the predicate
  $\botp$ and we must have $\ell \in D_2$ by definition of the only
  rule for $\botp$. Since this predicate must occur within a rule of
  the form given in Definition \ref{def:deco_rule}, $\ell$ necessarily
  occurs in the $\maxv$ last components of the image of a location in
  $D_1$, hence admits a \gconnection in $\aheap'$.  Consequently, by
  Lemma \ref{prop:trunc} $(\astore, \aheap') \expands{\id} (\astore,
  \aheap)$, and by Lemma \ref{lem:deco2form}, $(\astore,\aheap)
  \models_{\asid} \phi$.  Thus $(\astore,\aheap) \models_{\asid}
  \psi$, and by Lemma \ref{lem:right_equiv}, $(\astore,\aheap')
  \models_{\rightsid} \rightdecor{\psi}$, thus $(\astore,\aheap')
  \models_{\rsid} \rightdecor{\psi}$.
  
  ``$\Leftarrow$'' Assume that $\bigvee_{\phi' \in \decor{\phi}} \phi'
  \models_{\rsid} \rightdecor{\psi}$ and let $(\astore,\aheap)$ be a
  $\asid$-model of $\phi$. Since the truth values of $\phi$ and $\psi$ depend
  only on the variables in $\fv{\phi} \cup \fv{\psi}$, we may assume,
  w.l.o.g., that $\astore$ is quasi-injective.  Consider an infinite
  set $L\subseteq \Loc$ such that $(\img{\astore} \cup \locs{\aheap})
  \cap L = \emptyset$.  By Lemma \ref{lem:form2deco}, there exist a
  heap $\aheap'$, a mapping $\gamma:\Loc \rightarrow \Loc$ and a
  \decoration $\phi'$ of $\phi$ such that $\gamma(\ell) = \ell$ for
  all $\ell \notin L$, $(\astore,\aheap') \expands{\gamma}
  (\astore,\aheap)$ and $(\astore,\aheap')\models \phi'$. Since
  $\img{\astore} \cap L =\emptyset$, we also have
  $\apl{\gamma}{\astore} = \astore$.  Then $(\astore,\aheap') \models
  \rightdecor{\psi}$.  Let $\aheap_1 \isdef \trunc{\aheap'}$. Since
  $(\astore,\aheap') \expands{\gamma} (\astore,\aheap)$, by Lemma
  \ref{prop:exp-trunc} we have $(\astore,\aheap') \expands{\id}
  (\astore,\aheap_1)$, and by Lemma \ref{lem:right_equiv},
  $(\astore,\aheap_1) \models \psi$.  By Lemma \ref{prop:exp-trunc} we
  have $\aheap = \gamma(\aheap_1)$. Since $\psi$ is \restricted
  w.r.t.\ $\{ \fvar_1,\dots,\fvar_n\}$, we deduce by Lemma
  \ref{lem:gamma} that $(\astore,\aheap) \models \psi$. \qed
   \end{proof}

This leads to the main result of this paper: 

\begin{theorem}\label{thm:safe-complexity}
  The safe entailment problem is 2EXPTIME-complete. 
\end{theorem}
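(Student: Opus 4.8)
The plan is to derive the theorem from results already in place, the substantive half being the upper bound. For the \twoexptime\ lower bound I would observe that the safe class generalizes the progressing, connected and restricted class (see Figure~\ref{fig:reductions}): after discarding the rules unreachable from $\phi$ or $\psi$, which loses no generality, a progressing, connected and restricted instance satisfies conditions~\ref{safe:progress}--\ref{safe:ind} of Definition~\ref{def:safe} (condition~\ref{safe:progress} holds because all remaining rules are then reachable and hence progressing on either side, condition~\ref{safe:init} because its consequent is already \restricted w.r.t.\ the free variables, and condition~\ref{safe:ind} because the rules reachable from $\psi$ are \connected and \restricted). Since the restricted class is \twoexptime-hard \cite[Theorem~32]{EIP21a}, so is the safe class.

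For \twoexptime\ membership, the heavy lifting is done by Lemma~\ref{cor:safe-equiv}: for an instance $\aprob = \pb{\phi}{\psi}{\asid}$ it reduces the validity of $\phi \models_{\asid}\psi$ to the validity of all the entailment problems $\pb{\phi'}{\rightdecor{\psi}}{\rsid}$ with $\phi' \in \decor{\phi}$ (since $\bigvee_{\phi'\in\decor{\phi}}\phi' \models_{\rsid}\rightdecor{\psi}$ amounts to testing each disjunct separately), and each of these is progressing, connected and left-established, as observed just before Lemma~\ref{cor:safe-equiv}, hence decidable by the algorithm of Theorem~\ref{thm:pce-complexity}. I would then carry out the size analysis. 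After the harmless preprocessing of Section~\ref{sect:reduction} (eliminating points-to atoms from $\phi$ and $\psi$, making $\leftpred\cap\rightpred=\emptyset$, and fixing the number $\maxv$ of existential variables per rule), one has $\nfv,\maxv = \bigO(\width{\aprob})$; the set $\decor{\phi}$ contains at most $2^{\size{\phi}} = 2^{\bigO(\width{\aprob})}$ quantifier-free formulas, each of size $\bigO(\width{\aprob}^2)$; and $\rsid = \leftsid \cup \rightsid$ is computable in time $\size{\asid}\cdot 2^{\bigO(\width{\aprob}\log\width{\aprob})}$ and has width $\bigO(\width{\aprob}^2)$, because enumerating the sets $X$ and $I$, the substitutions $\sigma$ and the decorations of the rule bodies introduces only a single exponential in the rule width, and the connectedness and well-definedness tests are polynomial. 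Consequently each instance $\pb{\phi'}{\rightdecor{\psi}}{\rsid}$ has width polynomial in $\width{\aprob}$ and size at most $\size{\asid}\cdot 2^{\bigO(\width{\aprob}\log\width{\aprob})}$, so by Theorem~\ref{thm:pce-complexity} it is decided in time $2^{2^{\bigO(\mathrm{poly}(\width{\aprob})\cdot\log\size{\aprob})}}$, i.e.\ doubly exponential in $\size{\aprob}$; since there are only $2^{\bigO(\width{\aprob})}$ of them, running that algorithm on all of them, together with the polynomial-time computability of $\decor{\phi}$, $\rightdecor{\psi}$ and $\rsid$, stays within \twoexptime.

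The one point requiring genuine care is precisely this bookkeeping: the expansion-based reduction must be checked to blow up the problem only \emph{singly} exponentially in the width — so that the disjunction over $\decor{\phi}$ (and the enumeration of rules in $\decor{\asid}$) is absorbed by the outer exponential of \twoexptime — and to keep the width of every produced instance \emph{polynomial} in $\width{\aprob}$, so that the inner exponent of the running time in Theorem~\ref{thm:pce-complexity}, which is polynomial in the width and logarithmic in the size, does not escalate to a triple exponential. Granting these two facts, the theorem follows by chaining Lemma~\ref{cor:safe-equiv} with Theorem~\ref{thm:pce-complexity}.
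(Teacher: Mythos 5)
Your proposal is correct and follows essentially the same route as the paper: the lower bound by inclusion of the progressing, \connected and \restricted class (citing \cite[Theorem~32]{EIP21a}), and the upper bound by chaining Lemma~\ref{cor:safe-equiv} with Theorem~\ref{thm:pce-complexity}, together with the count $\card{\decor{\phi}} = 2^{\bigO(\width{\aprob})}$ and the $\size{\aprob}\cdot 2^{\bigO(\width{\aprob}\log\width{\aprob})}$ bound on constructing the reduced instances. Your explicit remark that the reduction must also keep the \emph{width} of the produced instances polynomial in $\width{\aprob}$ (so the inner exponent of Theorem~\ref{thm:pce-complexity} does not escalate) is a point the paper's proof leaves implicit, but it is the same bookkeeping in substance.
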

\begin{proof}
  The 2EXPTIME-hard lower bound follows from \cite[Theorem
    32]{EIP21a}, as the class of progressing, \connected and
  \restricted entailment problems is a subset of the safe entailment
  class. For the 2EXPTIME membership, Lemma \ref{cor:safe-equiv}
  describes a many-one reduction to the progressing, connected and
  established class, shown to be in 2EXPTIME, by Theorem
  \ref{thm:pce-complexity}. Considering an instance $\aprob = \phi
  \vdash_{\asid} \psi$ of the safe class, Lemma
  \ref{cor:safe-equiv} reduces this to checking the validity of
  $\card{\decor{\phi}}$ instances of the form $\phi' \vdash_{\rsid}
  \rightdecor{\psi}$, that are all progressing, connected and
  established, by Lemma \ref{cor:pce}. Since a formula $\phi' \in
  \decor{\phi}$ is obtained by replacing each predicate atom
  $p(x_1,\ldots,x_n)$ of $\phi$ by $p_X(x_1,\ldots,x_n,\mvec{\fvar})$
  and there are at most $2^n$ such predicate atoms, it follows that
  $\card{\decor{\phi}} = 2^{\bigO(\width{\aprob})}$. To obtain
  2EXPTIME-membership of the problem, it is sufficient to show that
  each of the progressing, connected and established instances $\phi'
  \vdash_{\rsid} \rightdecor{\psi}$ can be built in time
  $\size{\aprob} \cdot 2^{\bigO(\width{\aprob} \cdot
    \log\width{\aprob})}$. First, for each $\phi' \in \decor{\phi}$,
  by Definition \ref{def:deco}, we have $\size{\phi'} \leq \size{\phi}
  \cdot (1 + \nfv) \leq \size{\phi} \cdot (1 + \width{\aprob}) =
  \size{\phi} \cdot 2^{\bigO(\log \width{\aprob})}$. By Definition
  \ref{def:right_deco}, we have $\size{\rightdecor{\phi}} \leq
  \size{\phi} \cdot (1 + \nfv) = \size{\phi} \cdot 2^{\bigO(\log
    \width{\aprob})}$. By Definition \ref{def:deco_rule},
  $\decor{\asid}$ can be obtained by enumeration in time that depends
  linearly of
  \[\card{\decor{\asid}} \leq \card{\asid} \cdot 2^\maxv \cdot (n+\nfv+\maxv)^\nfv 
    \leq \card{\asid} \cdot 2^{\width{\aprob} + \width{\aprob} \cdot \log\width{\aprob}} 
    = \size{\aprob} \cdot 2^{\bigO(\width{\aprob})}\]
  This is because the number of intervals $I$ is bounded by $2^\maxv$
  and the number of substitutions $\sigma$ by $(n+\nfv+\maxv)^\nfv$,
  in Definition \ref{def:deco_rule}. By Definition
  \ref{def:well-defined}, checking whether a rule is \welldefined can
  be done in polynomial time in the size of the rule, hence in
  $2^{\bigO(\width{\aprob})}$, so the construction of $\leftsid$ takes
  time $\size{\aprob} \cdot 2^{\bigO(\width{\aprob} \log\width{\aprob})}$. Similarly, by
  Definition \ref{def:deco_rule}, the set $\rightdecor{\asid}$ is
  constructed in time 
  \[\card{\rightdecor{\asid}} \leq \card{\asid} \cdot 2^\maxv \cdot \width{\aprob}^\nfv 
  \leq \card{\asid} \cdot 2^\width{\aprob} \cdot 2^{\width{\aprob} \cdot \log \width{\aprob}} 
  = \size{\aprob} \cdot 2^{\bigO(\width{\aprob})}\]
  Moreover, checking that a rule in $\rightdecor{\asid}$ is connected
  can be done in time polynomial in the size of the rule, hence the
  construction of $\rightsid$ takes time
  $2^{\bigO(\width{\aprob}\log\width{\aprob})}$. Then the entire
  reduction takes time $2^{\bigO(\width{\aprob}\log\width{\aprob})}$,
  which proves the 2EXPTIME upper bound for the safe class of entailments. \qed
\end{proof} 

\section{Conclusion and Future Work}

\label{sect:conc}

Together with the results of
\cite{IosifRogalewiczSimacek13,PZ20,DBLP:conf/lpar/EchenimIP20,EIP21a},
Theorem \ref{thm:safe-complexity} draws a clear and complete picture
concerning the decidability and complexity of the entailment problem
in Separation Logic with inductive definitions.  The room for
improvement in this direction is probably very limited, since Theorem
\ref{thm:safe-complexity} pushes the frontier quite far. Moreover,
virtually any further relaxation of the conditions leads to
undecidability.  

A possible line of future research which could be relevant for
applications would be to consider inductive rules constructing
simultaneously several data structures, which could be useful for
instance to handle predicates comparing two structures, but it is
clear that very strong conditions would be required to ensure
decidability. We are also interested in defining effective,
goal-directed, proof procedures (i.e., sequent or tableaux calculi)
for testing the validity of entailment problems.  Thanks to the
reduction devised in the present paper, it is sufficient to focus on
systems that are progressing, connected and left-established.  We are
also trying to extend the results to entailments with formul{\ae}
involving data with infinite domains, either by considering a theory
of locations (e.g., arithmetic on addresses), or, more realistically,
by considering additional sorts for data.

\bibliographystyle{plainurl}
\bibliography{SL}

\begin{thebibliography}{10}

\bibitem{AntonopoulosGorogiannisHaaseKanovichOuaknine14}
Timos Antonopoulos, Nikos Gorogiannis, Christoph Haase, Max~I. Kanovich, and
  Jo{\"{e}}l Ouaknine.
\newblock Foundations for decision problems in separation logic with general
  inductive predicates.
\newblock In Anca Muscholl, editor, {\em {FOSSACS} 2014, {ETAPS} 2014,
  Proceedings}, volume 8412 of {\em Lecture Notes in Computer Science}, pages
  411--425, 2014.

\bibitem{BarHillel61}
Yehoshua Bar-Hillel, Micha Perles, and Eli Shamir.
\newblock On formal properties of simple phrase structure grammars.
\newblock {\em Sprachtypologie und Universalienforschung}, 14:143--172, 1961.

\bibitem{DBLP:conf/cav/BerdineCI11}
Josh Berdine, Byron Cook, and Samin Ishtiaq.
\newblock Slayer: Memory safety for systems-level code.
\newblock In Ganesh~Gopalakrishnan andShaz Qadeer, editor, {\em Computer Aided
  Verification - 23rd International Conference, {CAV} 2011, Snowbird, UT, USA,
  July 14-20, 2011. Proceedings}, volume 6806 of {\em LNCS}, pages 178--183.
  Springer, 2011.

\bibitem{DBLP:conf/nfm/CalcagnoDDGHLOP15}
Cristiano Calcagno, Dino Distefano, J{\'{e}}r{\'{e}}my Dubreil, Dominik Gabi,
  Pieter Hooimeijer, Martino Luca, Peter~W. O'Hearn, Irene Papakonstantinou,
  Jim Purbrick, and Dulma Rodriguez.
\newblock Moving fast with software verification.
\newblock In Klaus Havelund, Gerard~J. Holzmann, and Rajeev Joshi, editors,
  {\em {NASA} Formal Methods - 7th International Symposium, {NFM} 2015,
  Pasadena, CA, USA, April 27-29, 2015, Proceedings}, volume 9058 of {\em
  LNCS}, pages 3--11. Springer, 2015.

\bibitem{DBLP:conf/cav/DudkaPV11}
Kamil Dudka, Petr Peringer, and Tom{\'{a}}s Vojnar.
\newblock Predator: {A} practical tool for checking manipulation of dynamic
  data structures using separation logic.
\newblock In Ganesh Gopalakrishnan and Shaz Qadeer, editors, {\em Computer
  Aided Verification - 23rd International Conference, {CAV} 2011, Snowbird, UT,
  USA, July 14-20, 2011. Proceedings}, volume 6806 of {\em LNCS}, pages
  372--378. Springer, 2011.

\bibitem{DBLP:conf/lpar/EchenimIP20}
Mnacho Echenim, Radu Iosif, and Nicolas Peltier.
\newblock Entailment checking in separation logic with inductive definitions is
  2-exptime hard.
\newblock In {\em {LPAR} 2020: 23rd International Conference on Logic for
  Programming, Artificial Intelligence and Reasoning, Alicante, Spain, May
  22-27, 2020}, volume~73 of {\em EPiC Series in Computing}, pages 191--211.
  EasyChair, 2020.

\bibitem{echenim:hal-02951630}
Mnacho Echenim, Radu Iosif, and Nicolas Peltier.
\newblock {Entailment is Undecidable for Symbolic Heap Separation Logic
  Formulae with Non-Established Inductive Rules}.
\newblock working paper or preprint, September 2020.
\newblock URL: \url{https://hal.archives-ouvertes.fr/hal-02951630}.

\bibitem{EIP21a}
Mnacho Echenim, Radu Iosif, and Nicolas Peltier.
\newblock Decidable entailments in separation logic with inductive definitions:
  Beyond establishment.
\newblock In {\em {CSL} 2021: 29th International Conference on Computer Science
  Logic}, EPiC Series in Computing. EasyChair, 2021.

\bibitem{IosifRogalewiczSimacek13}
Radu Iosif, Adam Rogalewicz, and Jiri Simacek.
\newblock The tree width of separation logic with recursive definitions.
\newblock In {\em Proc. of CADE-24}, volume 7898 of {\em LNCS}, 2013.

\bibitem{DBLP:conf/atva/IosifRV14}
Radu Iosif, Adam Rogalewicz, and Tom{\'{a}}s Vojnar.
\newblock Deciding entailments in inductive separation logic with tree
  automata.
\newblock In Franck Cassez and Jean{-}Fran{\c{c}}ois Raskin, editors, {\em
  {ATVA} 2014, Proceedings}, volume 8837 of {\em Lecture Notes in Computer
  Science}, pages 201--218. Springer, 2014.

\bibitem{IshtiaqOHearn01}
Samin~S Ishtiaq and Peter~W O'Hearn.
\newblock Bi as an assertion language for mutable data structures.
\newblock In {\em ACM SIGPLAN Notices}, volume~36, pages 14--26, 2001.

\bibitem{KatelaanMathejaZuleger19}
Jens Katelaan, Christoph Matheja, and Florian Zuleger.
\newblock Effective entailment checking for separation logic with inductive
  definitions.
\newblock In Tom{\'{a}}s Vojnar and Lijun Zhang, editors, {\em {TACAS} 2019,
  Proceedings, Part {II}}, volume 11428 of {\em Lecture Notes in Computer
  Science}, pages 319--336. Springer, 2019.

\bibitem{PZ20}
Jens Pagel and Florian Zuleger.
\newblock Beyond symbolic heaps: Deciding separation logic with inductive
  definitions.
\newblock In {\em LPAR-23}, volume~73 of {\em EPiC Series in Computing}, pages
  390--408. EasyChair, 2020.

\bibitem{Reynolds02}
J.C. Reynolds.
\newblock {Separation Logic: A Logic for Shared Mutable Data Structures}.
\newblock In {\em Proc. of LICS'02}, 2002.

\end{thebibliography}

\newpage

\appendix

\inAppendix

\end{document}